\DeclareMathOperator{\Acc}{Acc}
\DeclareMathOperator{\CCa}{CC_A}
\DeclareMathOperator{\CCauo}{CC_A^{uo}}
\DeclareMathOperator{\CCao}{CC_A^{o}}
\DeclareMathOperator{\last}{last}
\DeclareMathOperator{\WEI}{WT}
\DeclareMathOperator{\Tr}{Tr}
\DeclareMathOperator{\init}{init}
\DeclareMathOperator{\ED}{ED}
\DeclareMathOperator{\EaD}{EaD}
\newcommand{\nsf}{\mathsf{n}}
\newcommand{\fsf}{\mathsf{f}}
\newcommand{\Z}{\mathbb{Z}}
\newcommand{\R}{\mathbb{R}}
\newcommand{\N}{\mathbb{N}}
\newcommand{\Q}{\mathbb{Q}}
\newcommand{\underN}{\underline{\mathbb{N}}}
\newcommand{\underQ}{\underline{\mathbb{Q}}}
\newcommand{\underR}{\underline{\mathbb{R}}}
\newcommand{\undernonnegQ}{\underline{\mathbb{Q}_{\ge0}}}
\newcommand{\Acal}{\mathcal{A}}
\newcommand{\dt}{\delta}
\newcommand{\ep}{\epsilon}
\newcommand{\Sig}{\Sigma}
\newcommand{\s}{\sigma}
\newcommand{\frakG}{\mathfrak{G}}
\newcommand{\frakM}{\mathfrak{M}}
\newcommand{\frakun}{\mathfrak{u_n}}
\newcommand{\frakt}{\mathfrak{t}}
\newcommand{\frakq}{\mathfrak{q}}
\newcommand{\frakD}{\mathfrak{D}}
\newcommand{\frakDp}{\mathfrak{D_p}}
\newcommand{\PSPACE}{\mathsf{PSPACE}}
\newcommand{\EXPTIME}{\mathsf{EXPTIME}}
\newcommand{\PTIME}{\mathsf{PTIME}}
\newcommand{\EXPSPACE}{\mathsf{EXPSPACE}}
\newcommand{\NP}{\mathsf{NP}}
\newcommand{\coNP}{\mathsf{coNP}}
\newcommand{\llb}{\llbracket}
\newcommand{\rrb}{\rrbracket}
\tikzset{
->, % makes the edges directed
%>=stealthâ, % makes the arrow heads bold
node distance=3cm, % specifies the minimum distance between two nodes. Change if necessary.
every state/.style={thick, fill=gray!10}, % sets the properties for each âstateâ node
initial text=$ $, % sets the text that appears on the start arrow
}
\newcommand\xrsquigarrow[1]{%
    \mathrel{%
        \begin{tikzpicture}[%
            baseline={(current bounding box.south)}
            ]
        \node[%
            ,inner sep=.44ex
            ,align=center
            ] (tmp) {$\scriptstyle #1$};
        \path[%
            ,draw,<-
            ,decorate,decoration={%
                ,zigzag
                ,amplitude=0.7pt
                ,segment length=1.2mm,pre length=3.5pt
                }
            ] 
        (tmp.south east) -- (tmp.south west);
        \end{tikzpicture}
        }
    }
\newtheorem{remark}{Remark}
\newtheorem{example}{Example}
\newcommand{\subalign}[1]{%
  \vcenter{%
    \Let@ \restore@math@cr \default@tag
    \baselineskip\fontdimen10 \scriptfont\tw@
    \advance\baselineskip\fontdimen12 \scriptfont\tw@
    \lineskip\thr@@\fontdimen8 \scriptfont\thr@@
    \lineskiplimit\lineskip
    \ialign{\hfil$\m@th\scriptstyle##$&$\m@th\scriptstyle{}##$\crcr
      #1\crcr
    }%
  }
}
\newcommand{\red}{\color{red}}
\newcommand{\blue}{\color{blue}}
\definecolor{green}{rgb}{0.1,0.7,0.1}
\newcommand{\green}{\color{green}}
\newcommand{\cyan}{\color{cyan}}
\newtheorem{theorem}{Theorem}[section]
\newtheorem{lemma}[theorem]{Lemma}
\newtheorem{corollary}[theorem]{Corollary}
\newtheorem{definition}[theorem]{Definition}
\newtheorem{proposition}[theorem]{Proposition}
\newtheorem{problem}{Problem}
\newcounter{enumi_saved}
\newenvironment{myenumerate} {
    \begin{enumerate}[(i)]\setcounter{enumi}{\value{enumi_saved}}}
    {\setcounter{enumi_saved}{\value{enumi}}\end{enumerate}}
\title{Diagnosability of labeled \texorpdfstring{$\frakDp$}{}-automata}
\author{Kuize Zhang\\
{\small Department of Computer Science}\\
{\small University of Surrey, Guildford GU2 7XH, United Kingdom}\\
{\small kuize.zhang@surrey.ac.uk}
\and 
J\"{o}rg Raisch\\
{\small Control Systems Group}\\
{\small Technische Universit\"{a}t Berlin, D-10587 Berlin, Germany}\\
{\small raisch@control.tu-berlin.de}
}
\begin{document}

\date{}

\maketitle

{\bf Abstract}
    In this paper, we formulate a notion of diagnosability for labeled weighted automata over 
	a class of dioids which admit both positive and negative numbers as well as vectors. The weights can 
	represent diverse physical meanings such as time elapsing and position deviations. We also develop
	an original tool called concurrent composition to verify diagnosability for such automata.
	These results are fundamentally new compared with the existing ones in the literature.

    In a little more detail, \emph{diagnosability} is characterized for a labeled weighted automaton $\Acal^{\frakDp}$
	over a  special dioid $\frakDp$ called \emph{progressive}, which can represent diverse physical meanings such
	as time elapsing and position  deviations. In a progressive dioid, the canonical order is total,
	there is at least one \emph{eventually dominant} element, there is no zero divisor, and
	the cancellative law is satisfied, where the functionality of an eventually dominant element $t$ is
	to make every nonzero element $a$ arbitrarily large by multiplying $a$ by $t$ for sufficiently many times.
	A notion of diagnosability is formulated for $\Acal^{\frakDp}$.
	By developing a notion of \emph{concurrent composition}, a necessary and
	sufficient condition is given for diagnosability of automaton $\Acal^{\frakDp}$. It is proven that 
	the problem of computing the concurrent composition for an automaton $\Acal^{\underQ}$ is $\NP$-complete,
	then the problem of verifying diagnosability of $\Acal^{\underQ}$ is proven to be $\coNP$-complete, where the 
	$\NP$-hardness and $\coNP$-hardness results even
	hold for deterministic, deadlock-free, and divergence-free automaton $\Acal^{\underN}$, where $\underQ$
	and $\underN$ are the max-plus dioids having elements in $\Q\cup\{-\infty\}$ and $\N\cup\{-\infty\}$, respectively.
	Several extensions of the main results have also been obtained.

{\bf Keywords}
diagnosability, labeled weighted automaton over a progressive dioid, concurrent composition

%\tableofcontents

\section{Introduction}

\subsection{Background}

The study of \emph{diagnosability} of discrete-event systems (DES's) modeled 
by finite-state machines originates from
\cite{Lin1994DiagnosabilityDES,Sampath1995DiagnosabilityDES},
where in the former state-based diagnosis was studied,
and in the latter event-based diagnosis was investigated and a \emph{formal definition} of diagnosability
was given. A DES usually consists of discrete states and transitions between states caused by
spontaneous occurrences of partially-observed (labeled) events. %which is used to model
%event-driven processes.
Intuitively, a DES is called diagnosable if the occurrence of a faulty event can be determined after a
sufficiently large number of subsequent events occur 
by observing the generated output sequence. 
%To see what events led to a formal notion of diagnosability and the early studies on diagnosability,
%we refer the reader to the review paper \cite{Lafortune2018HistoryDiagnosisOpacityDES}.
Widely studied models in DES's have finitely many events, but may have infinitely many states, e.g.,
Petri nets, timed automata, max-plus systems, etc. The event-based notion of diagnosability can be
naturally extended to these models, although the characterization of
diagnosability is much more difficult than in finite-state machines.

\subsection{Literature review}
\label{sec:LiterRev}

In the following, we briefly review diagnosability results in different models.

\emph{Labeled finite-state automata.}
In labeled finite-state automata (LFSAs), a notion of diagnosability was formulated in 
\cite{Sampath1995DiagnosabilityDES},
and a \emph{diagnoser}\footnote{a slight variant of the classical powerset construction used for
determinizing a nondeterministic finite automaton \cite{RabinScott1959PowersetConstruction}}
method was proposed to verify diagnosability. The diagnoser of an LFSA records state estimates
along observed output sequences and also records fault propagation along transitions of states of the 
LFSA. The diagnoser has exponential complexity, and diagnosability is verifiable by a relatively simple cycle 
condition on the diagnoser. Hence diagnosability can be verified in $\EXPTIME$. Later, a \emph{twin-plant}
structure and a \emph{verifier} structure both with polynomial complexity were proposed in
\cite{Jiang2001PolyAlgorithmDiagnosabilityDES} and \cite{Yoo2002DiagnosabiliyDESPTime}, respectively,
so that polynomial-time algorithms for verifying diagnosability were given.
The verification algorithms in the above three papers all depend on two fundamental assumptions, namely,
deadlock-freeness/liveness (an LFSA will always run) and divergence-freeness (the running of an LFSA will always
be eventually observed, or, more precisely, for every generated infinitely long transition sequence, its output sequence
is also of infinite length). Later on, in many papers, verification of all kinds of variants of diagnosability
depends on the two assumptions. The reason for why the tools of twin-plant 
  \cite{Jiang2001PolyAlgorithmDiagnosabilityDES} and verifier \cite{Yoo2002DiagnosabiliyDESPTime} do not work
correctly without the two assumptions of deadlock-freeness and divergence-freeness was shown in 
\cite[Section~2.3.4]{Zhang2023DESbook}.
The two assumptions were removed in \cite{Cassez2008FaultDiagnosisStDyObser}
by using a generalized version of the twin-plant structure to verify the \emph{negation} of diagnosability
in polynomial time. The idea of verifying the negation of diagnosability was also independently found in 
\cite{Berard2018DiagnosabilityPetriNet,Zhang2021UnifyingDetDiagPred}.
Results on diagnosability of probabilistic LFSAs can be found in 
\cite{Thorsley2005DiagnosabilityStochDES,Thorsley2017EquivConditionDiagnosabilityStochDES,Keroglou2019AA-DiagnosabilityProbAutomata}, etc.
Results on decentralized settings of diagnosability can be found in \cite{Debouk2000CodiagnosabilityAutomata,Cassez2012ComplexityCodiagnosability,Zhang2021UnifyingDetDiagPred}, etc.
%{\blue In \cite{Cassez2012ComplexityCodiagnosability}, it was proven that verifying a decentralized version of
%diagnosability of LFSAs is $\PSPACE$-complete.}
We refer the reader to \cite{Basilio2021ResilienceDES} for a recent survey.
The technique of removing the two assumptions by
adding an unobservable self-loop to each dead state has been widely used in diagnosability studies
in LFSAs, e.g., in \cite{Qiu2006DecentralizedFD,Viana2019CodiagnosabilityDES,Basilio2021ResilienceDES} 
and the references
therein, but this technique works only when the original definition of diagnosability proposed in 
\cite{Sampath1995DiagnosabilityDES} is changed, and the required change was clearly shown in 
\cite{Qiu2006DecentralizedFD} (see Definition~1 and Remark~1 of \cite{Qiu2006DecentralizedFD}). However,
%on \cite[Page~309]{Basilio2021ResilienceDES},
on \cite[Page~355]{Viana2019CodiagnosabilityDES},
it was claimed that the above technique
of adding an unobservable self-loop to each dead state
can remove the two assumptions
without indicating that, for this, the definition of diagnosability needs to be changed.
In Remark~3.20 of \cite{Zhang2021UnifyingDetDiagPred}, it was shown
why this technique does not work for the original definition of diagnosability in \cite{Sampath1995DiagnosabilityDES}.
%{\blue In \cite{Moreira2011Codiagnosability}, a decentralized version of diagnosability of LFSAs was verified
%under the above two assumptions in $\EXPTIME$ based on the tool -- parallel composition,
%but it was wrongly claimed that their algorithm ran in
%polynomial time. A polynomial-time verification algorithm unlikely exists, because the verification problem 
%of the decentralized version of diagnosability is $\PSPACE$-hard \cite{Cassez2012ComplexityCodiagnosability}.
%In a parallel composition, events are scalars, so necessary information is lost. The reason for why
%the parallel composition does not work without the two assumptions can also be found in \cite[Section~2.3.4]{Zhang2023DESbook}.} 
In \cite{Moreira2011Codiagnosability}, a decentralized version of diagnosability for LFSAs was verified
under the above two assumptions in $\EXPTIME$, but it was wrongly claimed that their algorithm runs in polynomial
time. A polynomial-time verification algorithm is indeed unlikely to exist, because the verification problem 
of the decentralized version of diagnosability is $\PSPACE$-hard \cite{Cassez2012ComplexityCodiagnosability}.

\emph{Labeled Petri nets.}
For results on diagnosability of labeled Petri nets, we refer the reader to 
\cite{Cabasino2012DiagnosabilityPetriNet,Berard2018DiagnosabilityPetriNet,Yin2017DiagnosabilityLabeledPetriNets}, etc.
In \cite{Cabasino2012DiagnosabilityPetriNet}, a new technique called \emph{verifier} (which can be regarded
as the twin-plant structure extended to labeled Petri nets) was developed to
verify diagnosability, and two notions of diagnosability were verified by using the technique under
several assumptions. It was also pointed out that the two notions are equivalent in LFSAs but not 
in labeled Petri nets. In \cite{Yin2017DiagnosabilityLabeledPetriNets}, the weaker notion of
diagnosability studied in \cite{Cabasino2012DiagnosabilityPetriNet} was proven
to be decidable with an $\EXPSPACE$ lower bound under the first of the two previously mentioned 
assumptions, by using the verifier and Yen's path formulae 
\cite{Yen1992YenPathLogicPetriNet,Atig2009YenPathLogicPetriNet};
in \cite{Berard2018DiagnosabilityPetriNet},
an even weaker notion of diagnosability
called trace diagnosability was proven to be decidable in $\EXPSPACE$ with an $\EXPSPACE$ lower bound
without any assumption, by using the verifier and linear temporal logic.
Results on diagnosability in special classes (e.g., bounded, unobservable-event-induced-subnets
being acyclic) of labeled Petri
nets can be found in \cite{Giua2005FaultDetectionPetriNets,Cabasino2005FaultDetectionPetriNets,Basile2009DiagnosisPetriNetsILP}, etc.

\emph{Labeled timed automata.} 
Diagnosability was first defined for labeled timed automata in \cite{Tripakis2002DiagnosisTimedAutomata},
and its decision 
problem was proven to be $\PSPACE$-complete, where $\PSPACE$-membership was proven by
computing a \emph{parallel composition} (which falls back on the generalized twin-plant structure of LFSAs used in
\cite{Cassez2008FaultDiagnosisStDyObser})
in time polynomial in the size of the labeled timed automaton and searching Zeno runs of the parallel composition
in $\PSPACE$ in the size of the parallel composition. The $\PSPACE$ lower bound was obtained by doing
reduction from the $\PSPACE$-complete reachability problem (cf. \cite{Alur1994TimedAutomaton}) in timed automata.
Moreover, since the reachability problem is already $\PSPACE$-hard in $2$-clock timed automata 
\cite{Fearnley2015Reachability2ClocksTimeAutomataPSPACE}, the diagnosability verification problem is 
$\PSPACE$-hard in labeled $2$-clock timed automata.
The diagnoser of a labeled timed automaton was defined
(which may be a Turing machine)
and fault diagnosis was done in $2$-$\EXPTIME$ in the size of the labeled timed automaton and in the size of 
the observation. In \cite{Bouyer2005FaultDiagnosisTimedAutomata}, it was studied whether a labeled timed automaton
has a diagnoser that can be realized by a deterministic timed automaton (or particularly an event
recording automaton). The former was proven to be $2$-$\EXPTIME$-complete and the latter 
$\PSPACE$-complete, provided that a bound on the resources available to the diagnoser is given.

  \emph{Labeled max-plus automata.} In \cite{Lai2022DiagnosabilityMPAutomata}, the authors verified diagnosability
  for labeled unambiguous\footnote{in which under every event sequence, from the
  initial state set to any state there is at most one path, more general than deterministic.}
  max-plus automata over dioid $\underQ$ \eqref{eqn25_diag_LMPautomata} that are 
  deadlock-free and divergence-free in a real-time manner\footnote{which means that the operator $\max$ is 
  not used when computing weighted words, which is the same as the manner considered in the current paper}
  based on the verifier proposed in \cite{Yoo2002DiagnosabiliyDESPTime},
  which is a rather special case of the results on diagnosability of
  general labeled max-plus automata over $\underQ$
  shown in Theorem~\ref{thm3_diag_LMPautomata}. In addition, there is a fundamental mistake in
  the complexity claim in \cite[Remark~2]{Lai2022DiagnosabilityMPAutomata}: This remark claimed a polynomial-time
  verification algorithm for diagnosability, but the algorithm runs in exponential time. In detail, the construction
  of $G^o$ in \cite[Algorithm~2]{Lai2022DiagnosabilityMPAutomata} needs exponential time (in the second for-loop),
  although the size of $G^o$ is polynomial.
  On the other hand, we proved that the diagnosability verification problem for labeled deterministic max-plus automata
  over $\underQ$ that are deadlock-free and divergence-free is $\coNP$-hard \cite{Zhang2021DiagWeightedAutoMonoidCDC}
  (also see Theorem~\ref{thm3_diag_LMPautomata} for a stronger result),
  so unless $\PTIME=\coNP$\footnote{It is widely conjectured that
  $\PTIME\subsetneq\coNP$.}, there is no polynomial-time verification algorithm for diagnosability of such automata,
  let alone for the automata studied in \cite{Lai2022DiagnosabilityMPAutomata}.

\subsection{Contribution of the paper}

In this paper, we consider a labeled weighted automaton $\Acal^{\frakD}$ over a dioid $\frakD$ %\st{as a real-time system}
and characterize a notion of diagnosability.

\begin{enumerate}
  \item %\st{In order to represent time elapsing in a generalized sense,} 
	We consider a special class 
	    of dioids which we call \emph{progressive}. Intuitively, in this class, the canonical order is total,
		there is at least one \emph{eventually dominant} element, there is no zero divisor, 
		and the cancellative law is satisfied. The functionality of an eventually dominant element $t$ is 
		to make every nonzero element $a$ arbitrarily large by multiplying $a$ by $t$
		for sufficiently many times. The role of an eventually dominant element makes a progressive dioid
		exhibit the feature of time elapsing. In addition, progressive dioids have other physical meanings such as
		position deviations.
		%\st{Progressive dioids allow multidimensional, asynchronous
		%times (in terms of vectors) and time rollback (in terms of negative numbers).}
		The classical max-plus dioids such as the tropical semiring $\underR$,
		$\underQ=(\Q\cup\{-\infty\},\max,+,-\infty,0)$, and $\underN=(\N\cup\{-\infty\},\max,+,-\infty,0)$,
		are all progressive, but their extensions to vectors may not, which depend on the $\oplus$ operator
		(which is specialized into the $\max$ operator in max-plus dioids). The differences caused by different 
		specifications of the $\oplus$ operator are quite subtle. See Section~\ref{sec:dioids} for details.
  \item In order to represent time elapsing, we avoid the standard algebraic semantics of weighted
		automata over semirings, i.e., we consider all generated runs but not only the maximal (corresponding to the 
		$\oplus$ operator in a semiring) runs among the runs with the same event sequence. However, operator $\oplus$
		will not be missing in our paper. The canonical order induced by $\oplus$ will be used throughout the paper by
		defining time elapsing in a generalized sense. The minor overlap between our semantics and the standard
		algebraic semantics of weighted
		automata lies in labeled unambiguous  max-plus automata. %\st{in which under every event sequence, from the
		%initial state set to any state there is at most one path, hence the operator $\max$ plays no role.}
		On the other hand, we emphasize that %\st{although we consider the real-time manner,
		%our setting can represent not only time elapsing but also position deviation, because it allows
   		%negative weights}
		the transitions of our automata allow negative weights
	    (see Fig.~\ref{fig6_diag_LMPautomata}), hence it does not fall within the scope of timed automata.
	  \item For a labeled weighted automaton $\Acal^{\frakDp}$ over a progressive dioid $\frakDp$,
		we give a formal definition of diagnosability (Definition~\ref{def_diag_LMPautomata}).
		In order to conveniently represent the definition, we choose runs but not traces (i.e., generated event
		sequences) as in almost all results in the literature (see Section~\ref{sec:LiterRev}).
	  \item By developing a new notion of \emph{concurrent composition} (Definition~\ref{def_CC_diag_LMPautomata},
		originally proposed and computed in \cite{Zhang2022DetWAMonoid}, \cite{Zhang2023DESbook})
		for automaton $\Acal^{\frakDp}$, %over a progressive dioid $\frakDp$,
		%under a mild assumption that no observable transition is instantaneous,
		we give a necessary and sufficient
		condition for (the negation of) diagnosability of $\Acal^{\frakDp}$
		(Theorem~\ref{thm1_diag_LMPautomata}).
	\item Particularly, we prove that the problem of computing concurrent composition for an automaton 
		$\Acal^{\underQ}$ over dioid $\underQ$ is $\NP$-complete (Theorem~\ref{thm4_diag_LMPautomata} and
		Corollary~\ref{cor2_diag_LMPautomata}). The problem of verifying diagnosability
		of $\Acal^{\underQ}$ is then proven to be $\coNP$-complete (Theorem~\ref{thm3_diag_LMPautomata}),
		where the $\NP$-hardness and $\coNP$-hardness results even hold for deterministic, deadlock-free,
		and divergence-free automaton $\Acal^{\underN}$. $\coNP$-membership has been extended 
		to automaton $\Acal^{\underline{\Q^k}}$ over vector progressive dioid 
	    $\underline{\Q^k}$~\eqref{eqn24_diag_LMPautomata} (see Section~\ref{sec:diagextension}).
		
		$\coNP$-membership is obtained by proving 
		that a concurrent composition can be computed in $\NP$ in the size of $\Acal^{\underQ}$
		by connecting $\Acal^{\underQ}$ and the $\NP$-complete \emph{exact path length problem}
		\cite{Nykanen2002ExactPathLength},
		and the negation of diagnosability can be verified in time polynomial in the size of the concurrent
		composition. $\coNP$-hardness is obtained by constructing a polynomial-time reduction from
		the $\NP$-complete \emph{subset sum problem} \cite{Garey1990ComputerIntractability} to the negation 
		of diagnosability.
\end{enumerate}

The remainder of the paper is structured as follows. In Section~\ref{sec:pre}, we introduce
notation, basic facts, the exact path length problem, and labeled 
weighted automata over dioids. We also propose the notion of progressive dioid and prove several useful
properties in such dioids.
%Section~\ref{sec:mainresult},
We show the main results in Section~\ref{sec:Diag_LWAD} (formulation of the definition of diagnosability
for labeled weighted automata over progressive dioids),
Section~\ref{sec:ConCom_LWAD} (formulation of the basic tool -- concurrent composition for such automata and 
a special case for which the concurrent composition is computable in $\NP$),
and Section~\ref{sec:Verification_LWAD} (a necessary and sufficient condition for diagnosability
based on the concurrent composition and a special algorithmically implementable case in $\coNP$).
%In Section~\ref{sec:Diag_LWAD}, 
%Section~\ref{sec:ConCom_LWAD},
%Section~\ref{sec:Verification_LWAD}.
%, a notion of diagnosability, a notion of concurrent composition, an equivalent
%condition for the negation of diagnosability of automaton $\Acal^{\frakD}$ based on its concurrent composition
Section~\ref{sec:conc} ends up with a short conclusion.

\section{Preliminaries}
\label{sec:pre}

\subsection{Notation}

Symbol $\emptyset$ denotes the empty set.
Symbols $\N$, $\Z$, $\Z_{+}$, $\Q$, $\Q_{+}$, and $\R$ denote the sets of nonnegative integers (natural numbers),
integers, positive integers, rational numbers, positive rational numbers, and real numbers,
respectively. For a finite \emph{alphabet} $\Sig$,
$\Sig^*$ denotes the set of finite \emph{strings} over $\Sig$ including the empty string $\epsilon$.
Elements of $\Sig$ are called \emph{letters}. For $s\in \Sig^*$ and $\sigma
\in\Sig$, we write $\sigma\in s$ if $\sigma$ appears in $s$. $\Sig^{+}:=\Sig^*\setminus\{\epsilon\}$.
For a string $s\in \Sig^*$, $|s|$ stands for its length.
For $s\in \Sig^+$ and $k\in\N$, $\last(s)$ denotes its last letter,
$s^k$ denotes the concatenation of $k$ copies of $s$.
For a string $s\in \Sig^*$, a string $s'\in \Sig^*$ is called a \emph{prefix} of $s$, denoted as $s'\sqsubset s$, 
if there exists another string $s''\in \Sig^*$ such that $s=s's''$.
For $s\in\Sig^*$ and $s'\in\Sig^*$, we use $s\sqsubsetneq s'$ to denote 
$s\sqsubset s'$ and $s\ne s'$. For a string $s\in\Sig^*$, a string $s'\in\Sig^*$ is called a \emph{suffix} of
$s$  if $s''s'=s$ for some $s''\in\Sig^*$.
For two real numbers $a$ and $b$ with $a\le b$, $[a,b]$ denotes the closed interval with lower and 
upper endpoints being $a$ and $b$, respectively;
for two integers $i,j$ with $i\le j$, $\llb i,j\rrb$ denotes the set of all integers no less than
$i$ and no greater
than $j$; for a set $S$, $|S|$ denotes its cardinality and $2^S$ its power set;
$\subset$ and $\subsetneq$ denote the subset and strict subset relations.
$A^{\Tr}$ denotes the transpose of matrix $A$.
%For a word $s\in \Sig^*$, where $\Sig=\{\sigma_1,\dots,\sigma_n\}$,
%$\sharp(s)(\sigma_i)$ denotes the number of $\sigma_i$'s occurrences in $s$, $i\in\llb 1,n\rrb$.

%We will use the $\NP$-complete \emph{exact path length} (EPL) problem and \emph{subset sum}
%(SS) problem in the literature to prove the main results.

\subsection{The exact path length problem}
\label{sec:EPLProblem}

Consider a $k$-dimensional weighted directed graph $G=(\Q^k,V,A)$, where $k\in\Z_{+}$, 
$\Q^k=\underbrace{\Q\times\cdots\times\Q}_{k}$,
$V$ is a finite set of vertices, $A\subset V\times\Q^k\times V$ a finite set of weighted edges with
weights in $\Q^k$. For a path $v_1\xrightarrow[]{z_1}\cdots\xrightarrow[]{z_{n-1}}v_n$, its weight 
is defined by $\sum_{i=1}^{n-1}z_i$. The EPL problem \cite{Nykanen2002ExactPathLength}
is stated as follows.
\begin{problem}[EPL]\label{prob1_det_MPautomata}
	Given a positive integer $k$, a $k$-dimensional weighted directed graph $G=(\Q^k,V,A)$,
	two vertices $v_1,v_2\in V$, and a vector $z\in \Q^k$, determine
	whether there is a path from $v_1$ to $v_2$ with weight $z$.
\end{problem}

\begin{lemma}[\cite{Nykanen2002ExactPathLength}]\label{lem1_det_MPautomata}
	The EPL problem belongs to $\NP$. The EPL problem is $\NP$-hard already for graph $(\Z,V,A)$. 
	%fixed dimension $k=1$.
	%For fixed dimension $k=1$, there is a pseudo-polynomial-time solution to the problem.
\end{lemma}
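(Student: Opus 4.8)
The plan is to prove the two assertions separately, with membership in $\NP$ being the substantive part and $\NP$-hardness following from a direct weight-encoding reduction. For \emph{membership}, the obstacle is that a path realising the target weight $z$ may have to run around cycles exponentially often, so the path itself is not a polynomial-size certificate and its \emph{existence} must be witnessed more cleverly. First I would clear denominators by multiplying every edge weight and $z$ by the product of all denominators occurring in the input (a number of polynomial bit-size), reducing to integer weights without changing the answer. Instead of guessing a walk, I would guess for each edge $e\in A$ a multiplicity $x_e\in\N$, written in binary, and denote by $w_e$ the weight of $e$. The combinatorial heart is the Eulerian-trail characterisation for directed multigraphs: a multiset $(x_e)_{e\in A}$ is realised by some walk from $v_1$ to $v_2$ if and only if (a) the induced in/out-degrees are balanced at every vertex other than $v_1,v_2$, with the appropriate $\pm1$ imbalance at $v_1$ and at $v_2$, and (b) the edges with $x_e>0$, together with $v_1,v_2$, form a weakly connected subgraph. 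Given $(x_e)$, both conditions and the weight equation $\sum_{e\in A}x_e\,w_e=z$ (a system of $k$ linear equations) are checkable in polynomial time.

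It remains to bound the size of the certificate $(x_e)$, which is the main difficulty. Condition (a) together with the weight equation form a linear Diophantine system $Mx=b$ with $x\in\N^{|A|}$, whose entries have size polynomial in the input. By the classical bound on such systems (a feasible $Mx=b$, $x\in\N^{|A|}$, admits a solution whose entries have bit-size polynomial in $\size(M)+\size(b)$), feasibility is witnessed by a polynomially encodable vector. To respect the nonlinear connectivity requirement (b), I would argue as follows: if \emph{any} walk exists, its edge-multiplicity vector $x^{*}$ has some weakly connected support $S^{*}$; substituting $x_e=0$ for $e\notin S^{*}$ and $x_e=1+y_e$ with $y_e\in\N$ for $e\in S^{*}$ turns the constraints into a feasible system in the $y_e$, and the size bound yields a small solution whose $x$ has the \emph{same} support $S^{*}$, hence is still connected and still realises a walk. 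Thus a walk exists if and only if there is a polynomial-size $(x_e)$ passing the verification above, and the $\NP$ algorithm simply guesses $(x_e)$ in binary and checks (a), (b), and the weight equation; the degenerate case $v_1=v_2$, $z=\mathbf{0}$ is accepted by the empty walk. This gives EPL $\in\NP$.

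For \emph{hardness} already over $(\Z,V,A)$ (one dimension, integer weights), I would reduce from subset sum: given integers $a_1,\dots,a_n$ and a target $t$, build vertices $v_0,\dots,v_n$ and, for each $i$, two parallel edges from $v_{i-1}$ to $v_i$ of weights $a_i$ and $0$. Every path from the source $v_0$ to the sink $v_n$ passes through each $v_i$ in turn and at stage $i$ selects either the weight-$a_i$ edge (``include $a_i$'') or the weight-$0$ edge, so the set of achievable total weights is exactly $\{\sum_{i\in I}a_i:I\subseteq\{1,\dots,n\}\}$. Hence a $v_0$-to-$v_n$ path of weight $t$ exists if and only if some subset of $\{a_1,\dots,a_n\}$ sums to $t$. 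The construction is polynomial-time, uses $k=1$ and integer weights only, and is acyclic so that the walk/simple-path distinction is irrelevant, establishing $\NP$-hardness already for $(\Z,V,A)$.
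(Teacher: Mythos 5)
Your proposal is correct, but it is worth noting that the paper does not prove this lemma at all: it is quoted from Nyk\"anen and Ukkonen \cite{Nykanen2002ExactPathLength}, and the paper's only original contribution here is the remark that the published proof for $(\Z^k,V,A)$ carries over to $(\Q^k,V,A)$. Your argument is therefore a self-contained reconstruction, and it is sound. Your denominator-clearing step is exactly the observation that justifies the paper's remark about rational weights, so you have in effect supplied the missing half-sentence of that remark. For membership, your certificate --- edge multiplicities in binary, checked against the directed-Euler-trail degree conditions, weak connectivity of the support, and the weight equation --- together with the Borosh--Treybig/von zur Gathen--Sieveking small-solution bound for linear Diophantine systems over $\N$, is the same general strategy as the cited reference (walks cannot serve as certificates, so one witnesses them by an integer program); your support-fixing substitution $x_e=1+y_e$ on the support $S^{*}$ of an actual walk is the right way to handle the one genuine subtlety, namely that connectivity is not a linear constraint and a generic small solution of $Mx=b$ could have disconnected support. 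Two minor points you should make explicit: the case $v_1=v_2$ with a \emph{nonempty} closed walk requires balance at every vertex including $v_1$ (your ``appropriate $\pm1$ imbalance'' phrasing covers this only implicitly), and in the hardness gadget the two parallel edges $(v_{i-1},a_i,v_i)$ and $(v_{i-1},0,v_i)$ are distinct elements of $A\subset V\times\Q\times V$ only when $a_i\ne 0$ --- harmless for the paper's SS formulation (Problem~\ref{prob2_det_MPautomata}), which uses positive integers, and your reduction is essentially the one used in the reference. What your route buys is independence from the citation; what the paper's route buys is brevity, delegating both the ILP bounding machinery and the reduction to the literature.
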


\subsection{Dioids}
\label{sec:dioids}

\begin{definition}\label{dioid_diag_LMPautomata}
An \emph{idempotent semiring} (or \emph{dioid}) is a set $T$ with two binary operators $\oplus$ and $\otimes$,
called \emph{addition} and \emph{multiplication}, such that $(T,\oplus)$ is an \emph{idempotent commutative 
monoid} with \emph{identity element} ${\bf0}\in T$ (also called \emph{zero}), $(T,\otimes)$ is a monoid with 
\emph{identity element} ${\bf1}\in T$ (also called \emph{one}), $\bf0$ is absorbing for $\otimes$,
and $\otimes$ distributes over $\oplus$ on both sides.
%\begin{itemize}
	%\item $(T,\oplus)$ is an \emph{idempotent commutative monoid} with \emph{identity element} ${\bf0}\in T$
		%(also called \emph{zero}): for all $a,b,c\in T$,
		%\begin{itemize}
		%	\item $a\oplus b\in T$,
		%	\item $(a\oplus b)\oplus c=a\oplus(b\oplus c)$ (associativity),
		%	\item ${\bf 0}\oplus a=a\oplus {\bf0}=a$,
		%	\item $a\oplus b=b\oplus a$ (commutativity),
		%	\item $a\oplus a=a$ (idempotency),
		%\end{itemize}
	%\item $(T,\otimes)$ is a monoid with \emph{identity element} ${\bf1}\in T$ (also called \emph{one}):
		%for all $a,b,c\in T$,
		%\begin{itemize}
		%	\item $a\otimes b\in T$,
		%	\item $(a\otimes b)\otimes c=a\otimes(b\otimes c)$ (associativity),
		%	\item ${\bf 1}\otimes a=a\otimes {\bf1}=a$,
		%\end{itemize}
	%\item  ${\bf 0}$ is absorbing: ${\bf0}\otimes a=a\otimes{\bf0}={\bf0}$ for all $a\in T$,
	%\item $\otimes$ distributes over $\oplus$: for all $a,b,c\in T$,
	%	\begin{itemize}
	%		\item $(a\oplus b)\otimes c=(a\otimes c)\oplus(b\otimes c)$, 
	%		\item $c\otimes(a\oplus b)=(c\otimes a)\oplus(c\otimes b)$.
	%	\end{itemize}
%\end{itemize}
A dioid is denoted by $\frakD=(T,\oplus,\otimes,{\bf0},{\bf1})$.
\end{definition}
For $a\in T$, we write $a^n=\underbrace{a\otimes\cdots\otimes a}_{n}$ for all $n\in\N$,
and denote $a^0={\bf1}$.
An \emph{order} over a set $T$ is a relation $\preceq\subset T\times T$ that is reflexive, anti-symmetric,
and transitive.
For all $a,b\in T$ such that $a\preceq b$, we also write $b\succeq a$. If additionally $a\ne b$,
then we write $a\prec b$ or $b\succ a$.

In dioid $\frakD$, the relation $\preceq\subset T\times T$ such that
$a\preceq b$ if and only if $a\oplus b=b$ for all $a,b\in T$ is an order (called
the \emph{canonical order}). In the sequel, $\preceq$ always means the canonical order.
Order $\preceq$ is \emph{total} if for all $a,b\in T$,
either $a\preceq b$ or $b\preceq a$. %(for all $a,b\in T$, $a\oplus b$ is equal to either $a$
%or $b$ \cite{Baccelli2001SynchronizationLinearity}).

We have some direct properties as follows.

\begin{lemma}\label{lem1_diag_LMPautomata}
	Let $\frakD=(T,\oplus,\otimes,{\bf0},{\bf1})$ be a dioid.
	\begin{enumerate}[(1)]
		\item\label{item4_diag_LMPautomata} 
			Let $b\in T$ be such that ${\bf 1}\preceq b$. Then for all $a\in T$,
			$a\preceq a\otimes b$, $a\preceq b\otimes a$.
		\item\label{item5_diag_LMPautomata}
			Let $a,b\in T$ be such that ${\bf 1}\prec a$ and ${\bf 1}\preceq b$.
			Then ${\bf 1}\prec a\otimes b$ and ${\bf 1}\prec b\otimes a$.
		\item\label{item15_diag_LMPautomata}
			Let $a,b,c,d\in T$ be such that $a\preceq b$ and $c\preceq d$. Then $a\otimes c\preceq
			b\otimes d$.
		\item\label{item14_diag_LMPautomata}
			Let $a,b\in T$ be such that ${\bf 1}\prec a\prec b$.
			Then $a\prec a\otimes b$ and $a\prec b\otimes a$.
		\item\label{item14'_diag_LMPautomata}
			Let $a,b\in T$ be such that ${\bf 1}\succ a\succ b$.
			Then $a\succ a\otimes b$ and $a\succ b\otimes a$.
	\end{enumerate}
\end{lemma}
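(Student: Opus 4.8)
The plan is to treat parts (1)–(3) as immediate consequences of the dioid axioms and then bootstrap (4)–(5) from (3) together with transitivity of $\preceq$, avoiding any appeal to structure (such as cancellativity) beyond what Definition~\ref{dioid_diag_LMPautomata} supplies. For (1), I would factor out $a$: distributivity gives $a\oplus(a\otimes b)=a\otimes(\mathbf{1}\oplus b)$, and since $\mathbf{1}\preceq b$ unfolds to $\mathbf{1}\oplus b=b$, this equals $a\otimes b$; hence $a\preceq a\otimes b$ by the definition of the canonical order. The claim $a\preceq b\otimes a$ is the mirror computation using right distributivity, $a\oplus(b\otimes a)=(\mathbf{1}\oplus b)\otimes a=b\otimes a$.

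For (2), I would feed the hypothesis $\mathbf{1}\preceq b$ into (1) to obtain $a\preceq a\otimes b$ and $a\preceq b\otimes a$, and then combine these with $\mathbf{1}\prec a$. The only care needed is strictness: from $\mathbf{1}\prec a\preceq a\otimes b$ transitivity gives $\mathbf{1}\preceq a\otimes b$, and $\mathbf{1}\ne a\otimes b$ because $\mathbf{1}=a\otimes b$ would force $a\preceq\mathbf{1}$, whence $a=\mathbf{1}$ by antisymmetry, contradicting $\mathbf{1}\prec a$. This ``a strict step followed by a nonstrict step stays strict'' fact is the one routine order-theoretic observation I will reuse throughout. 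The case $b\otimes a$ is identical.

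For (3), the idea is to expand the larger product. Writing $b=a\oplus b$ and $d=c\oplus d$ and distributing, $b\otimes d=(a\oplus b)\otimes(c\oplus d)=(a\otimes c)\oplus(a\otimes d)\oplus(b\otimes c)\oplus(b\otimes d)$. Adding $a\otimes c$ to both sides and invoking idempotency of $\oplus$ leaves the right-hand side unchanged, so $(a\otimes c)\oplus(b\otimes d)=b\otimes d$, i.e.\ $a\otimes c\preceq b\otimes d$. In particular, taking one factor to be an equality yields one-sided monotonicity of $\otimes$, which is what powers the last two parts.

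Finally, for (4) and (5) I would chain this monotonicity with transitivity rather than cancel factors. For (4): since $\mathbf{1}\preceq a$, part (3) gives $b=\mathbf{1}\otimes b\preceq a\otimes b$ and $b=b\otimes\mathbf{1}\preceq b\otimes a$; inserting the intermediate term $b$ and using $a\prec b$ then yields $a\prec a\otimes b$ and $a\prec b\otimes a$. Part (5) is the order-dual: since $a\preceq\mathbf{1}$, part (3) gives $a\otimes b\preceq\mathbf{1}\otimes b=b$ and $b\otimes a\preceq b\otimes\mathbf{1}=b$, and $b\prec a$ then gives $a\otimes b\prec a$ and $b\otimes a\prec a$. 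The main obstacle is purely one of temptation: the strict inequalities in (4)–(5) look as if they need a cancellation law, which is unavailable for a general dioid here, and the key move that sidesteps this is to route the comparison through $b$ so that strictness is inherited for free from the hypothesis $a\prec b$ (respectively $b\prec a$).
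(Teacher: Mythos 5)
Your proof is correct; the paper states Lemma~\ref{lem1_diag_LMPautomata} without any proof, presenting these as ``direct properties'' of a dioid, and your verification from distributivity, idempotency of $\oplus$, and the definition of the canonical order (together with the routine fact that a strict step chained with a nonstrict step stays strict, via antisymmetry) is exactly the intended elementary argument. In particular, routing the strict inequalities of parts (4)--(5) through the intermediate term $b$ is the right move: it avoids cancellativity, which the paper introduces only later (Lemma~\ref{lem6_diag_LMPautomata}) and does not assume here, and your argument in fact only uses the weaker hypotheses $\mathbf{1}\preceq a$, $a\prec b$ (resp.\ $a\preceq\mathbf{1}$, $b\prec a$).
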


%\begin{proof}
%	One sees \eqref{item15_diag_LMPautomata} $\implies$ \eqref{item4_diag_LMPautomata}.
%
%	\eqref{item15_diag_LMPautomata}: %Let $a,b,c,d$ be as in \eqref{item15_diag_LMPautomata}.
%	By distributivity, $a\otimes c\oplus b\otimes c=(a\oplus b)\otimes c=b\otimes c$, hence
%	$a\otimes c\preceq b\otimes c$; $b\otimes c\oplus b\otimes d=b\otimes (c\oplus d)=b\otimes d$,
%	hence $b\otimes c\preceq b\otimes d$. Then $a\otimes c\preceq b\otimes d$.
%	
%	\eqref{item5_diag_LMPautomata}: %Let $a,b$ be as in \eqref{item5_diag_LMPautomata}.
%	By \eqref{item4_diag_LMPautomata},
%	one has $a\preceq a\otimes b$, then by ${\bf1}\prec a$, one has ${\bf 1}\prec a\otimes b$.
%	Similarly ${\bf1}\prec b\otimes a$.
%
%	\eqref{item14_diag_LMPautomata}: %Let $a,b$ be as in \eqref{item14_diag_LMPautomata}.
%	Assume $a=a\otimes b$. Then $b=a\oplus b=a\otimes b \oplus b=
%	(a\oplus {\bf 1})\otimes b=a\otimes b=a$, which is a contradiction. Hence $a\prec a\otimes b$
%	by \eqref{item4_diag_LMPautomata}. Similarly $a\prec b\otimes a$.
%
%	{\red\eqref{item14'_diag_LMPautomata} is a symmetric form of \eqref{item14_diag_LMPautomata}.}
%\end{proof}

The results in Lemma~\ref{lem1_diag_LMPautomata} show that a dioid exhibits some features
of \emph{time elapsing}, if we consider an element greater than ${\bf1}$ as a positive time elapsing
(${\bf1}$ means no time elapses),
and also consider the product of two elements greater than ${\bf1}$ as the total time elapsing (see
\eqref{item14_diag_LMPautomata}).
In order to see whether a dioid has more features of time elapsing, one may ask
given $a\in T$ such that ${\bf 1}\prec a$, for a sufficiently large $n\in\N$, whether (\romannumeral1) $a^n$
can be arbitrarily large, and moreover, whether (\romannumeral2) $a^n$ can make any nonzero element $t$ 
{arbitrarily large via multiplying $t$ by $a^n$. Generally not. Consider the following example.
\begin{example}
Consider dioid 
\begin{equation}\label{eqn5_diag_LMPautomata} 
	(\underQ)^k:=
	\left(\left(\Q\cup\{-\infty\}\right)^k,\max,+,\begin{bmatrix}-\infty\\\vdots\\-\infty\end{bmatrix},
	\begin{bmatrix}0\\\vdots\\0\end{bmatrix}\right),
\end{equation}
where $(-\infty)+r=r+(-\infty)=-\infty$ for all $r\in\Q$, $+$ is defined componentwise on the vectors of
$\left(\Q\cup\{-\infty\}\right)^k$; for every two distinct vectors $[a_1,\dots,a_k]^{\Tr}$ and 
$[b_1,\dots,b_k]^{\Tr}$, $[a_1,\dots,a_k]^{\Tr}\prec[b_1,\dots,b_k]^{\Tr}$ (equivalently, $[b_1,\dots,b_k]^{\Tr}
\\\succ[a_1,\dots,a_k]^{\Tr}$) if and only if
%$\begin{bmatrix}a_1\\\vdots\\a_k\end{bmatrix}$ and $\begin{bmatrix}b_1\\\vdots\\b_k\end{bmatrix}$,
%$\begin{bmatrix}a_1\\\vdots\\a_k\end{bmatrix}\prec\begin{bmatrix}b_1\\\vdots\\b_k\end{bmatrix}$
%(equivalently, $\begin{bmatrix}b_1\\\vdots\\b_k\end{bmatrix}\succ\begin{bmatrix}a_1\\\vdots\\a_k\end{bmatrix}$)
$a_i<b_i$ for some $i$ and $a_j=b_j$ for all $j<i$; for all $u,v\in \left(\Q\cup\{-\infty\}\right)^k$,
$\max\{u,v\}$ is equal to the larger one of $u$ and $v$ if they are not equal, equal to $u$ if they are equal.
$\preceq$ is the canonical order of $(\underQ)^k$,
and also the well-known lexicographic order and hence total.

Consider vector $v:=[1,0,\dots,0]^{\Tr}$. $v^n
=[n,0,\dots,0]^{\Tr}$ can be arbitrarily large. Formally, for every vector $
u\in\left(\Q\cup\{-\infty\}\right)^k$, $v^m\succ u$ for sufficiently large $m$. That is, the above (\romannumeral1) holds.
However, the above (\romannumeral2) does not hold. Choose $[-\infty,0,\dots,0]^{\Tr}=:w$, $w+v^n\not\succ
[0,\dots,0]^{\Tr}$ for any $n\in\N$.

It is not difficult to see that no vector in $(\Q\cup\{-\infty\})^k$ satisfies both the above
(\romannumeral1) and (\romannumeral2).
\end{example}

By choosing a subdioid\footnote{i.e., a dioid $\left(Y,\max,+,\begin{bmatrix}-\infty\\\vdots\\-\infty
\end{bmatrix},
\begin{bmatrix}0\\\vdots\\0\end{bmatrix}\right)$, where 
$Y\subset \left(\Q\cup\{-\infty\}\right)^k$} of $(\underQ)^k$~\eqref{eqn5_diag_LMPautomata}, we can find a vector
that satisfies both (\romannumeral1) and (\romannumeral2).
\begin{example}
	Consider dioid 
	\begin{equation}\label{eqn24_diag_LMPautomata} 
		\underline{\Q^k}:=
		\left(\Q^k\cup\left\{\begin{bmatrix}-\infty\\\vdots\\-\infty\end{bmatrix}\right\},\max,+,
		\begin{bmatrix}-\infty\\\vdots\\-\infty\end{bmatrix},
		\begin{bmatrix}0\\\vdots\\0\end{bmatrix}\right)
	\end{equation}
	as a subdioid of $(\underQ)^k$~\eqref{eqn5_diag_LMPautomata}. Now we again choose $v=[1,0,\dots,0]^{\Tr}$.
	This $v$ satisfies the above (\romannumeral1) and (\romannumeral2). 
\end{example}
Informally, we call elements in a dioid that satisfy (\romannumeral1) and (\romannumeral2) eventually dominant. Such elements are
formulated as follows.

\begin{definition}\label{def3_diag_LMPautomata}
	Let $\frakD=(T,\oplus,\otimes,{\bf0},{\bf1})$ be a dioid. An element $t\in T$ is called \emph{eventually
	dominant} if for all ${\bf0}\ne a\in T$, there exist $m,n,p\in\Z_{+}$ such that $t^m\otimes a\succ{\bf1}$,
	$a\otimes t^n\succ{\bf1}$, and $t^p\succ a$. Analogously, an element $t\in T$ is called \emph{eventually
	anti-dominant} if for all ${\bf0}\ne a\in T$, there exist $m,n,p\in\Z_{+}$ such that $t^m\otimes a\prec{\bf1}$,
	$a\otimes t^n\prec{\bf1}$, and $t^p\prec a$. The set of eventually dominant elements and the set of eventually
	anti-dominant elements are denoted by $T_{\ED}$ and $T_{\EaD}$, respectively.
\end{definition}

By Lemma~\ref{lem1_diag_LMPautomata}, the following results hold.

\begin{lemma}\label{lem7_diag_LMPautomata}  
	Let $\frakD=(T,\oplus,\otimes,{\bf0},{\bf1})$ be a dioid with a total canonical order. Let
	$u,v\in T$ be eventually dominant and eventually anti-dominant, respectively. Let ${\bf1}\prec a_1,
	\dots,a_n\in T$ be not eventually dominant, where $n\ge 1$. Let ${\bf1}\succ b_1,\dots,b_m\in T$
	be not eventually anti-dominant, where $m\in\Z_{+}$. The following hold.
	\begin{enumerate}
		\item\label{item30_diag_LMPautomata}
			$u\succ{\bf1}$, $v\prec{\bf1}$.
		\item\label{item31_diag_LMPautomata}
			$u\prec u^2\prec u^3\prec\cdots$, $v\succ v^2\succ v^3\succ\cdots$.
		\item\label{item32_diag_LMPautomata}
			$a_1\prec u$, $a_1\otimes\cdots\otimes a_n\prec u$.
		\item\label{item33_diag_LMPautomata}
			$b_{1}\succ v$, $b_{1}\otimes\cdots\otimes b_{m}\succ v$.
	\end{enumerate}
\end{lemma}

  \begin{proof}
	We only need to prove item~\eqref{item32_diag_LMPautomata}, item~\eqref{item33_diag_LMPautomata} holds similarly.
	If $a_1\succeq u$ then by definition $a_1$ is eventually dominant. Then since the canonical order is total, 
	$a_1\prec u$. By item~\eqref{item15_diag_LMPautomata} of Lemma~\ref{lem1_diag_LMPautomata}, we have
	$a_1\otimes\cdots \otimes a_n\preceq (\max\{a_1,\dots,a_n\})^n$. Based on the same argument we have 
	$(\max\{a_1,\dots,a_n\})^n\prec u$, otherwise $\max\{a_1,\dots,a_n\}$ is eventually dominant. 
	Then $a_1\otimes \dots\otimes a_n\prec u$.
  \end{proof}

\begin{remark}\label{rem1_diag_LMPautomata}
	Lemma~\ref{lem7_diag_LMPautomata} implies that in a dioid with a total canonical order, the product
	of any finitely many elements that are not eventually dominant is less than any
	eventually dominant element; similarly, the product of any finitely many elements that are not eventually
	anti-dominant is greater than any eventually anti-dominant element.
\end{remark}

\begin{example}\label{exam6_diag_LMPautomata}
	In dioid $(\underQ)^k$~\eqref{eqn5_diag_LMPautomata}, there is no eventually dominant element. In 
	dioid $\underline{\Q^k}$~\eqref{eqn24_diag_LMPautomata}, all vectors $[a_1,\dots,a_k]^{\Tr}$ with
	$a_1>0$ are exactly the eventually dominant elements, none of the other vectors greater than $[0,\dots,0]^{\Tr}$
	(e.g., $[0,1,\dots,0]^{\Tr}$) is eventually dominant. Similarly, the eventually anti-dominant elements
	are exactly the vectors $[a_1,\dots,a_k]^{\Tr}$ with $a_1<0$. Now we illustrate
	item~\eqref{item32_diag_LMPautomata} of Lemma~\ref{lem7_diag_LMPautomata}. Consider vectors 
	$[0,a_i,\dots,0]^{\Tr}$, $i\in\llb 1,n \rrb$, with $a_i>0$, greater than $[0,\dots,0]^{\Tr}$ but not
	eventually dominant. Their sum is $[0,\sum_{i=1}^na_i,\dots,0]^{\Tr}$ which is less than any eventually
	dominant element.
\end{example}

}

Apart from the above (\romannumeral1) and (\romannumeral2), we additionally consider 
the \emph{cancellative law}: for all $a,b,c\in T$ with
$a\ne{\bf0}$, $a\otimes b=a\otimes c$ $\implies$ $b=c$ and $b\otimes a=c\otimes a$ $\implies$ $b=c$.
A dioid with the cancellative law is called \emph{cancellative}. For example, dioid
$(\underQ)^k$~\eqref{eqn5_diag_LMPautomata}
is not cancellative, because $[-\infty,0,\dots,0]^{\Tr}+[1,0,\dots,0]^{\Tr}=
[-\infty,0,\dots,0]^{\Tr}+[2,0,\dots,0]^{\Tr}$, but $[1,0,\dots,0]^{\Tr}\ne[2,0,\dots,0]^{\Tr}$.
While its subdioid $\underline{\Q^k}$~\eqref{eqn24_diag_LMPautomata} is cancellative.

\begin{lemma}\label{lem6_diag_LMPautomata} 
	Consider a cancellative dioid $\frakD=(T,\oplus,\otimes,{\bf0},\\{\bf1})$. For all $a,b,c,d\in T$, the following hold.
	\begin{enumerate}[(1)]
		\item\label{item19_diag_LMPautomata} 
			If $a\otimes b\succeq a\otimes c$ and $a\ne{\bf0}$, then $b\succeq c$.
		\item\label{item20_diag_LMPautomata}
			If $c\otimes d\succeq a\otimes b$, $c\preceq a$, and $a\ne{\bf0}$, then $d\succeq b$.
		\item\label{item29_diag_LMPautomata}  
			If $a\ne{\bf0}$ and $b\succ c$ then $a\otimes b\succ a\otimes c$ and $b\otimes a\succ c\otimes a$.
	\end{enumerate}
\end{lemma}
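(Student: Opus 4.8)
The unifying idea is that the canonical order is definable through $\oplus$, so every ordered inequality can be rewritten as an identity and then simplified by distributivity and the cancellative law. Concretely, $x \preceq y$ means $x \oplus y = y$, and for the statements involving a common left (resp. right) factor $a$ the left (resp. right) distributive law lets me pull $a$ outside an $\oplus$, after which left (resp. right) cancellation by $a \ne \mathbf 0$ removes it.

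For \eqref{item19_diag_LMPautomata} I would argue as follows. The hypothesis $a\otimes b \succeq a\otimes c$ means $(a\otimes c)\oplus(a\otimes b)=a\otimes b$; by left distributivity the left side equals $a\otimes(c\oplus b)$, so $a\otimes(c\oplus b)=a\otimes b$, and left cancellation by $a\ne\mathbf 0$ yields $c\oplus b=b$, i.e.\ $c\preceq b$, which is $b\succeq c$. Statement \eqref{item29_diag_LMPautomata} is the strict refinement of this monotonicity: from $b\succ c$ we have $c\preceq b$, so Lemma~\ref{lem1_diag_LMPautomata}\eqref{item15_diag_LMPautomata} (applied with $a\preceq a$) gives $a\otimes c\preceq a\otimes b$ and $c\otimes a\preceq b\otimes a$, i.e.\ the weak inequalities $a\otimes b\succeq a\otimes c$ and $b\otimes a\succeq c\otimes a$. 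It remains to exclude equality: if $a\otimes b=a\otimes c$ then left cancellation by $a\ne\mathbf0$ forces $b=c$, and if $b\otimes a=c\otimes a$ then right cancellation forces $b=c$; either contradicts $b\ne c$. Hence both inequalities are strict.

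For \eqref{item20_diag_LMPautomata} I would reduce to \eqref{item19_diag_LMPautomata} by inserting an intermediate term. From $c\preceq a$ and $d\preceq d$, Lemma~\ref{lem1_diag_LMPautomata}\eqref{item15_diag_LMPautomata} gives $c\otimes d\preceq a\otimes d$. Chaining with the hypothesis $a\otimes b\preceq c\otimes d$ and using transitivity of $\preceq$ yields $a\otimes b\preceq a\otimes d$, i.e.\ $a\otimes d\succeq a\otimes b$; since $a\ne\mathbf0$, part \eqref{item19_diag_LMPautomata} delivers $d\succeq b$.

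None of the three parts presents a genuine obstacle; the only points requiring care are bookkeeping rather than insight: using the left distributive law and left cancellation for the left-multiplication claims and their right-sided counterparts for the right-multiplication claims in \eqref{item29_diag_LMPautomata}, and checking in \eqref{item20_diag_LMPautomata} that the auxiliary inequality $c\otimes d\preceq a\otimes d$ is oriented so that transitivity with $a\otimes b\preceq c\otimes d$ produces an inequality between two products sharing the common left factor $a$, which is exactly the form consumed by \eqref{item19_diag_LMPautomata}.
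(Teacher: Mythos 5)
Your proposal is correct and follows essentially the same route as the paper: rewriting the canonical order via $\oplus$, applying distributivity, and invoking the cancellative law. The paper writes out only part \eqref{item20_diag_LMPautomata} (its computation $a\otimes d=(c\oplus a)\otimes d\succeq a\otimes(b\oplus d)$ followed by cancellation is just an inlined version of your reduction through \eqref{item19_diag_LMPautomata} and the monotonicity of $\otimes$), and your treatments of \eqref{item19_diag_LMPautomata} and \eqref{item29_diag_LMPautomata} are exactly the straightforward arguments the paper leaves implicit.
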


With these preliminaries, the definition of \emph{progressive dioid} is formulated as follows.

\begin{definition}\label{progressive_dioid_diag_LMPautomata}
	A dioid $\frakD=(T,\oplus,\otimes,{\bf0},{\bf1})$ is called \emph{progressive} if (\romannumeral1) 
	the canonical order of $\frakD$ is total,
	(\romannumeral2) $\frakD$ has at least one eventually dominant element, (\romannumeral3) 
	$\frakD$ is cancellative, (\romannumeral4) $\frakD$ has no zero divisor (i.e., for all nonzero $a,b$ in $T$,
	$a\otimes b\ne{\bf0}$). A progressive dioid is denoted by $\frakDp$.
\end{definition}

For example, dioid $(\underQ)^k$~\eqref{eqn5_diag_LMPautomata} has zero divisors. For example,
$[-\infty,1,-\infty,\dots,-\infty]^{\Tr}$ and $[1,-\infty,\\-\infty,\dots,-\infty]^{\Tr}$ are zero divisors
because 
$[-\infty,1,-\infty,\dots,-\infty]^{\Tr}+[1,-\infty,-\infty,\dots,-\infty]^{\Tr}=[-\infty,-\infty,-\infty,\dots,-\infty]^{\Tr}$.
While its subdioid $\underline{\Q^k}$~\eqref{eqn24_diag_LMPautomata} has no zero divisor.

\begin{remark}
	By \eqref{item31_diag_LMPautomata} of Lemma~\ref{lem7_diag_LMPautomata}, one sees in a progressive dioid $\frakDp$, a maximal element cannot  
	exist, that is, $\oplus_{t\in T}t\notin T$. That is, a progressive dioid cannot be complete.
	Item~\eqref{item31_diag_LMPautomata}
	of Lemma~\ref{lem7_diag_LMPautomata}
	also implies that progressive dioids must have infinitely many elements.
\end{remark}

One sees that the dioids
\begin{align}
  \underQ &:= (\Q\cup\{-\infty\},\max,+,-\infty,0),\label{eqn25_diag_LMPautomata}\\
	\undernonnegQ &:= (\Q_{\ge 0}\cup\{-\infty\},\max,+,-\infty,0),\label{eqn26_diag_LMPautomata}\\
	\underN &:= (\N\cup\{-\infty\},\max,+,-\infty,0),\label{eqn27_diag_LMPautomata}\\
	\underR &:= (\R\cup\{-\infty\},\max,+,-\infty,0)\label{eqn28_diag_LMPautomata}
\end{align}
are progressive and have a total canonical order $\le$. Dioid
$(\underQ)^k$~\eqref{eqn5_diag_LMPautomata} is not progressive,
because it only satisfies (\romannumeral1) of Definition~\ref{progressive_dioid_diag_LMPautomata}.
Dioid $\underline{\Q^k}$~\eqref{eqn24_diag_LMPautomata} is progressive.  In dioid~$\underQ$,
positive numbers in $\Q$ are exactly the eventually dominant elements in $\underQ$, then there is no 
element that is greater than the one element of $\underQ$ (i.e., the standard $0$) but not eventually dominant,
but in dioid $\underline{\Q^k}$~\eqref{eqn24_diag_LMPautomata}, such an element exists (see 
Example~\ref{exam6_diag_LMPautomata}). Later in Section~\ref{sec:diagextension} (see
Example~\ref{exam7_diag_LMPautomata}),
a practical example of an autonomous robot is shown to illustrate
the practical use of labeled weighted automata over vector progressive dioid~\eqref{eqn24_diag_LMPautomata}.

%\subsection{Labeled weighted automata over dioids}
\subsection{Labeled \texorpdfstring{$\mathfrak{D}$}{D}-automata}

A \emph{weighted automaton} is a tuple $\frakG=(Q,E,\Delta,Q_0,\alpha,\mu)$ over dioid
$\frakD=(T,\oplus,\otimes,{\bf0},{\bf1})$, denoted by 
%\begin{equation}\label{MPA_diag_LMPautomata}
	$(\frakD,\frakG)$
%\end{equation}
for short, where $Q$ is a nonempty finite set of \emph{states},
$E$ a nonempty finite \emph{alphabet} (elements of $E$ are called \emph{events}),
$\Delta\subset Q\times E\times Q$ a \emph{transition relation} (elements of $\Delta$ are 
called \emph{transitions}), $Q_0\subset Q$ is a nonempty set of initial states,
$\alpha$ is a map $Q_0\to T\setminus\{ {\bf0}\}$, %$q_0\mapsto \alpha(q_0)\succeq{\bf1}$,
%$q\in Q_0$ is called \emph{initial} if 
%$\alpha(q)\succeq{\bf1}$, and denote the set of initial states by $Q_0:=\{q\in Q|\alpha(q)
%\succeq{\bf1}\}$,
map $\mu:\Delta\to T\setminus\{ {\bf0}\}$ assigns to each transition $(q,e,q')\in\Delta$ a nonzero
\emph{weight} $\mu(e)_{qq'}$ in $T$, 
%$\mu(e):Q\times Q\to T$, where $(\mu(e))(q,q')\succ{\bf1}$ (also written as $\mu(e)_{qq'}
%\succ{\bf1}$)
%if and only if there is a \emph{transition} from $q$ to $q'$ caused by occurrence of $e$,
where this transition is also denoted by $q\xrightarrow[]{e/\mu(e)_{qq'}}q'$. 
For a self-loop $q\xrightarrow[]{e/\mu(e)_{qq}}q$,
we use $q\left(\xrightarrow[]{e/\mu(e)_{qq}}q\right)^t$ to denote the concatenation of 
$q$ and $t$ copies of $\xrightarrow[]{e/\mu(e)_{qq}}q$, where $t\in\N$.
%For any $t_0\in\alpha(Q_0)$ and any sequence $t_1\dots t_n$ of weights of transitions, it holds that
%$\bigotimes_{i=0}^{n}t_i\ne{\bf0}$ and $\bigotimes_{i=1}^{n}t_i\ne{\bf0}$ if $\frakD$ has no zero divisor
%(e.g., $\underQ$ and $\undernonnegQ$).
Automaton $(\frakD,\frakG)$
is called \emph{deterministic} if $|Q_0|=1$ and for all $q,q',q''\in Q$ and $e\in E$, $(q,e,q')\in\Delta$
and $(q,e,q'')\in\Delta$ $\implies$ $q'=q''$. For all $q\in Q$, we also regard
$q\xrightarrow[]{\ep/{\bf1}}q$ as a transition (which is called an \emph{$\ep$-transition}).
A transition $q\xrightarrow[]{e/\mu(e)_{qq'}}q'$ is called \emph{instantaneous} if $\mu(e)_{qq'}={\bf1}$
(recall that $\bf1$, the one element of $\frakD$, is specialized into the standard $0$ of dioid $\underQ$.),
and called \emph{noninstantaneous} if $\mu(e)_{qq'}\ne{\bf1}$.
%where $\ep$ is the empty word as usual,
%and set $\mu(\ep)\equiv{\bf1}$.
%A transition $q\xrightarrow[]{e/\mu(e)_{qq'}}q'$ is called \emph{instantaneous} if $\mu(e)_{qq'}
%={\bf1}$, and called \emph{noninstantaneous} otherwise.
Call a state $q\in Q$ \emph{dead}
if for all $q'\in Q$ and $e\in E$, $(q,e,q')\not\in\Delta$, i.e., there exists no transition
starting at $q$ (apart from the $\ep$-transition). Call an automaton $(\frakD,\frakG)$ \emph{deadlock-free}/\emph{live}
if it has no reachable dead state.

Particularly for 
%\begin{equation}\label{MPAN_diag_LMPautomata}
	$(\undernonnegQ,\frakG)$,
%\end{equation}
for an initial state $q\in Q_0$, $\alpha(q)$ denotes its initial
time delay, and in a transition $q\xrightarrow[]{e/\mu(e)_{qq'}}q'$, $\mu(e)_{qq'}$ denotes its time
delay, i.e., the time consumption of the execution of the transition.
Hence the execution of an instantaneous transition requires zero time, 
while the execution of a noninstantaneous transition requires a positive rational number $\mu(e)_{qq'}$ as time.

Consider automaton $(\frakD,\frakG)$.
For $q_0,q_1,\dots,q_n\in Q$ (where $q_0$ is not necessarily initial), and $e_1,\dots,e_n\in E$, %where $q_0\in Q_0$,
where $n\in \N$, we call 
\begin{align}\label{path_det_MPautomaton}
	\pi:=q_0\xrightarrow[]{e_1}q_1\xrightarrow[]{e_2}\cdots\xrightarrow[]{e_n}q_n
\end{align}
a \emph{path} if for all $i\in\llb 0,n-1\rrb$, $(q_i,e_{i+1},q_{i+1})\in\Delta$.
%{\blue Particularly if $n=0$, then $\pi=q_0$.}
We write $\init(\pi)=q_0$, $\last(\pi)=q_n$.
A path $\pi$ is %called \emph{simple} if $q_0,\dots,q_{n-1}$ are pairwise different,
called a \emph{cycle} if $q_0=q_n$. A cycle $\pi$ is called \emph{simple} if it has no repetitive states
except for $q_0$ and $q_n$.
For paths $\pi_1$ and $\pi_2$ with $\last(\pi_1)=\init(\pi_2)$,
we use $\pi_1\pi_2$ to denote the concatenation of $\pi_1$ and $\pi_2$ after removing
one of $\last(\pi_1)$ or $\init(\pi_2)$, and write $(\pi_1\pi_2)\setminus \pi_1=\pi_2$.
The set of paths starting at $q_0\in Q$ and ending at $q\in Q$ 
(under event sequence $s:=e_1\dots e_n\in E^*$)
is denoted by $q_0\rightsquigarrow q$ ($q_0\xrsquigarrow{s}q$), where 
$q_0\xrsquigarrow{s}q=\{q_0\xrightarrow[]{e_1}q_1\xrightarrow[]{e_2}\cdots\xrightarrow[]{e_{n-1}} q_{n-1}
\xrightarrow[]{e_n}q|q_1,\dots,q_{n-1}\in Q\}$. Then for $q\in Q$ and $s\in E^*$,
we denote $Q_0\rightsquigarrow q:=\bigcup_{q_0\in Q_0}q_0\rightsquigarrow{}q$ and $Q_0\xrsquigarrow{s}q:=
\bigcup_{q_0\in Q_0}q_0\xrsquigarrow{s}q$. %{\blue Automaton $(\frakD,\frakG)$ is called \emph{unambiguous} if
%for all $s\in E^+$ and $q\in Q$, $|Q_0\xrsquigarrow{s}q|\le1$. Deterministic 
%$(\frakD,\frakG)$ are unambiguous.}
%{\blue We say \emph{an event sequence $s\in E^*$ leads automaton
%$(\frakD,\frakG)$ to be dead} if $(\frakD,\frakG)$ has just generated a path in $q_0\xrsquigarrow{s}q$ starting at some
%initial state $q_0$ and ending at some dead state $q$. Note that $s$ may lead $(\frakD,\frakG)$
%to be dead, and may not lead $(\frakD,\frakG)$ to be dead, when $(\frakD,\frakG)$ is nondeterministic.}

The \emph{weighted word} of path $\pi$ is defined by 
\begin{align}\label{timedword_det_MPautomaton}
	\tau(\pi):=(e_1,t_1)(e_2,t_2)\dots(e_n,t_n),
\end{align}
where for all $i\in\llb 1,n\rrb$, %$t_i=\alpha(q_0)\otimes\bigotimes_{j=1}^{i}\mu(e_j)_{q_{j-1}q_j}$.
$t_i=\bigotimes_{j=1}^{i}\mu(e_j)_{q_{j-1}q_j}$. The \emph{weight} $\WEI(\pi)$ (also denoted by
$\WEI_{\pi}$) of path $\pi$ and the \emph{weight} $\WEI(\tau(\pi))$ (also denoted by $\WEI_{\tau(\pi)}$)
of weighted word $\tau(\pi)$ are both defined by $t_n$. %Particularly, we set $\WEI_{\ep}={\bf1}$ for 
%the empty path $\ep$.
A path always has nonzero weight if $\frakD$ has no zero divisor.
A path $\pi$ is called \emph{instantaneous} if $t_1=\cdots=t_n={\bf1}$, and called \emph{noninstantaneous}
otherwise. %Every path has weight $\succeq{\bf1}$ and every noninstantaneous path has weight 
%$\succ{\bf1}$ by Lemma~\ref{lem1_diag_LMPautomata}.
Call a state $q\in Q$ \emph{stuck} if either $q$ is dead or starting at $q$
there exist only instantaneous paths. Automaton $(\frakD,\frakG)$ is called \emph{stuck-free} if it has no reachable
stuck state. Intuitively, when automaton $(\undernonnegQ,\frakG)$ is in a stuck state, the automaton may still
be running but no time elapses. %{\blue We say \emph{an event sequence $s\in E^*$ leads automaton
%$(\frakD,\frakG)$ to be stuck} if $(\frakD,\frakG)$ has just generated a path in $q_0\xrsquigarrow{s}q$ starting at some
%initial state $q_0$ and ending at some stuck state $q$.}

Particularly for 
$(\undernonnegQ,\frakG)$, one has $t_i=\sum_{j=1}^{i}{\mu(e_j)_{q_{j-1}q_j}}$ (here the operator $\otimes$ 
in a general dioid $\frakD$ is specified as $+$), hence
$t_i$ denotes the time needed for the first $i$ transitions in path
$\pi$, $i\in\llb 1,n\rrb$.

We define a labeling function $\ell:E\to\Sig\cup\{\ep\}$, where $\Sig$ is a finite alphabet,
to distinguish \emph{observable} and \emph{unobservable} events. The set of observable events and the set of
unobservable events are denoted by $E_o=\{e\in E|\ell(e)\in\Sig\}$ and $E_{uo}=\{e\in E|\ell(e)=\ep\}$,
respectively. When an observable event $e$ occurs, one observes $\ell(e)$; while an unobservable
event occurs, one observes nothing. A transition $q\xrightarrow[]{e/\mu(e)_{qq'}}q'$ is called
\emph{observable} (resp., \emph{unobservable}) if $e$ is observable (resp., unobservable).
Labeling function $\ell$ is recursively extended to
$E^*\to \Sig^*$ as $\ell(e_1e_2\dots e_n)=\ell(e_1)\ell(e_2)\dots\ell(e_n)$.
A path $\pi$ \eqref{path_det_MPautomaton} is called \emph{unobservable} if $\ell(e_1\dots e_n)=\ep$,
and called \emph{observable} otherwise.  

A \emph{labeled weighted automaton over dioid $\frakD$} is formulated as 
\begin{align}\label{LMPA_diag_opa_MPautomata}
	\Acal^{\frakD}:=(\frakD,\frakG,\Sig,\ell),
\end{align}
and is also denoted \emph{labeled $\frakD$-automaton}.

Labeling function $\ell$ is extended as follows:
for all $(e,t)\in E\times T$, $\ell((e,t))=(\ell(e),t)$ if $\ell(e)\ne\ep$, and $\ell((e,t))=\ep$
otherwise. Hence $\ell$ is also recursively extended to $(E\times T)^*\to
(\Sig\times T)^*$. For a path $\pi$, $\ell(\tau(\pi))$
is called a \emph{weighted label/output sequence}.
We also extend the previously defined function
$\tau$ as follows: for all $\gamma=(\s_1,t_1')\dots(\s_n,t_n')\in(\Sig\times T)^{*}$, 
\begin{equation}\label{eqn11_det_MPautomata} 
	\tau(\gamma)=(\s_1,t_1)\dots(\s_n,t_n),
\end{equation}
where $t_j=\bigotimes_{i=1}^{j}t_i'$ for all $j\in\llb 1,n\rrb$. Particularly, $\tau(\ep)=\ep$.

%{\blue
%The \emph{timed language} $\LM(\Acal^{\frakD})$ generated by $\Acal^{\frakD}$ is defined by
%\begin{align}
%	\LM(\Acal^{\frakD}) := \{\gamma\in(\Sig\times T)^*|(\exists w\in L(\frakD,\frakG))[\ell(w)=\gamma]\},
%\end{align}
%i.e., the set of timed output sequences generated starting from initial states.
%}

From now on, without loss of generality, we assume for each initial state $q_0\in Q_0$, 
$\alpha(q_0)={\bf1}$, because otherwise we can add a new initial state $q_0'$ not in $Q_0$
and set $\alpha(q_0')={\bf1}$,
and for each initial state $q_0\in Q_0$ such that $\alpha(q_0)\ne{\bf1}$ we add a new transition
$q_0'\xrightarrow[]{\varepsilon/\alpha(q_0)}q_0$ and set $q_0$ to be not initial any more,
where $\varepsilon$ is a new event not in $E$. For consistency we also
assume that $\ell(\varepsilon)=\ep$.

Particularly for $\Acal^{\undernonnegQ}$, if it generates a path
$\pi$ as in \eqref{path_det_MPautomaton} such that $q_0\in Q_0$, consider its timed word $\tau(\pi)$ as in 
\eqref{timedword_det_MPautomaton}, then at time $t_i$, one will observe $\ell(e_i)$ if $\ell(e_i)\ne\ep$;
and observe nothing otherwise, where $i\in\llb 1,n\rrb$.

\begin{example}\label{exam2_diag_LMPautomata}
	A stuck-free labeled $\underN$-automaton $\Acal_1^{\underN}$ is shown in Fig.~\ref{fig3_diag_LMPautomata}. 
	\begin{figure}[!htbp]
	\centering
	\begin{tikzpicture}
	[>=stealth',shorten >=1pt,thick,auto, node distance=2.3 cm, scale = 0.8, transform shape,
	->,>=stealth,inner sep=2pt, initial text = 0]

	\tikzstyle{emptynode}=[inner sep=0,outer sep=0]

	\node[initial, state, initial where = above] (q0) {$q_0$};
	\node[state] (q2) [right of = q0] {$q_2$};
	\node[state] (q1) [left of = q0] {$q_1$};
	\node[state] (q3) [left of = q1] {$q_3$};
	\node[state] (q4) [right of = q2] {$q_{4}$};

	\path [->]
	(q0) edge node [above, sloped] {${\cyan f}/3$} (q1)
	(q0) edge node [above, sloped] {$u/1$} (q2)
	(q1) edge [loop above] node [above, sloped] {$u/1$} (q1)
	(q2) edge [loop above] node [above, sloped] {$u/1$} (q2)
	(q1) edge node [above, sloped] {$a/1$} (q3)
	(q2) edge node [above, sloped] {$a/1$} (q4)
	(q3) edge [loop above] node {$a/1$} (q3)
	(q4) edge [loop above] node {$a/1$} (q4)
	;

     \end{tikzpicture}
	 \caption{A labeled $\underN$-automaton $\Acal_1^{\underN}$, where dioid $\underN$ is shown in 
	 \eqref{eqn27_diag_LMPautomata}, $q_0$ is the
	 unique initial state, $a$ is the observable event, $u,{\cyan f}$ are unobservable events, i.e., $\ell(a)=a$, 
	 $\ell(u)=\ell({\cyan f})=\ep$. $\cyan f$ is the unique faulty event. The natural numbers after ``$/$'' 
	 denote the weights of the corresponding transitions.}  
	 \label{fig3_diag_LMPautomata}
	 \end{figure}
\end{example}

%\section{Main results}
%\label{sec:mainresult}

\section{The definition of diagnosability}
\label{sec:Diag_LWAD}

In this section, we formulate the definition of diagnosability. % for the labeled $\frakDp$-automaton
%$\Acal^{\frakDp}$ \eqref{LMPA_diag_opa_MPautomata}.
The set of \emph{faulty} events is denoted by $E_f\subset E$. Usually, one assumes
that $E_f\subset E_{uo}$ \cite{Sampath1995DiagnosabilityDES,Tripakis2002DiagnosisTimedAutomata,Jiang2001PolyAlgorithmDiagnosabilityDES},
because the occurrence of observable events can be directly seen.
However, technically, this assumption is usually not needed. We do not make this assumption in this paper,
because different observable events may have the same label (i.e., labeling function $\ell$ is not necessarily
injective on $E_{o}$), so the occurrences of observable faulty events may also need to be distinguished.
A labeling function was used as early as 1975 \cite{Hack1975PetriNetLanguage} and was also
called a mask in \cite{Cieslak1988ObservabilityDES}. As usual, with slight abuse of notation,
for $s\in E^*$, we use $E_f\in s$ to denote that some faulty event $e_f\in E_f$ appears in $s$.
Similarly for $w\in (E\times T)^*$, we use $E_f\in w$ to denote that some $(e_f,t)$ appears in $w$, 
where $e_f\in E_f$,
$t\in T$. In $\Acal^{\frakD}$, all transitions of the form $q_1\xrightarrow[]{e_f/\mu(e_f)_{q_1q_2}}
q_2$ for some $e_f\in E_f$ are called \emph{faulty transitions}, the other transitions are called \emph{normal
transitions}. %The sets of faulty transitions and normal transitions are denoted by $\Dt_{\fsf}\subset\Dt$
%and $\Dt_{\nsf}=\Dt\setminus\Dt_{\fsf}$, respectively.
We denote by $\Acal^{\frakD}_{\nsf}$ the \emph{normal subautomaton} of $\Acal^{\frakD}$ obtained by
removing all faulty transitions of $\Acal^{\frakD}$ and subsequently removing all unreachable states and all
their ingoing and outgoing transitions. We also denote by $\Acal^{\frakD}_{\fsf}$ the \emph{faulty
subautomaton} of $\Acal^{\frakD}$ obtained by only keeping all reachable
faulty transitions and their predecessors and successors,
where predecessors mean the transitions from which some faulty transition is reachable, and successors 
mean the transitions that are reachable from some faulty transition.

\begin{example}
	Reconsider the automaton $\Acal_1^{\underN}$ in Example~\ref{exam2_diag_LMPautomata} (shown in 
	Fig.~\ref{fig3_diag_LMPautomata}). Its faulty subautomaton $\Acal_{1\fsf}^{\underN}$ and normal subautomaton
	$\Acal_{1\nsf}^{\underN}$ are shown in Fig.~\ref{fig4_diag_LMPautomata}.
	\begin{figure}[!htbp]
	\centering
	 \begin{tikzpicture}
	[>=stealth',shorten >=1pt,thick,auto, node distance=2.5 cm, scale = 0.8, transform shape,
	->,>=stealth,inner sep=2pt, initial text = 0]%,

	\tikzstyle{emptynode}=[inner sep=0,outer sep=0]

	\node[initial, state, initial where = above] (q0') {$q_0$};
	\node[state] (q2') [right of = q0'] {$q_2$};
	\node[state] (q4') [right of = q2'] {$q_{4}$};

	\path [->]
	(q0') edge node [above, sloped] {$u/1$} (q2')
	(q2') edge [loop above] node [above, sloped] {$u/1$} (q2')
	(q2') edge node [above, sloped] {$a/1$} (q4')
	(q4') edge [loop above] node {$a/1$} (q4')
	;

	\node[initial, state, initial where = above, above of = q4'] (q0'') {$q_0$};
	\node[state] (q1'') [left of = q0''] {$q_1$};
	\node[state] (q3'') [left of = q1''] {$q_3$};

	\path [->]
	(q0'') edge node [above, sloped] {${\cyan f}/3$} (q1'')
	(q1'') edge [loop above] node [above, sloped] {$u/1$} (q1'')
	(q1'') edge node [above, sloped] {$a/1$} (q3'')
	(q3'') edge [loop above] node {$a/1$} (q3'')
	;

     \end{tikzpicture}
	 \caption{$\Acal_{1\fsf}^{\underN}$ (above) and $\Acal_{1\nsf}^{\underN}$ (below) corresponding to
	 automaton $\Acal_1^{\underN}$ in Fig.~\ref{fig3_diag_LMPautomata}.}
	 \label{fig4_diag_LMPautomata}
	 \end{figure}
\end{example}

\begin{definition}\label{def_diag_LMPautomata}
  Let $\frakDp$ be a progressive dioid, $\Acal^{\frakDp}$ a labeled $\frakDp$-automaton, 
  % as in \eqref{LMPA_diag_opa_MPautomata},
  and 
	$E_f\subset E$ a set of faulty events. $\Acal^{\frakDp}$ is called \emph{$E_f$-diagnosable}
	(see Fig.~\ref{fig2_diag_LMPautomata} for illustration) if
	\begin{equation}\label{eqn17_diag_LMPautomata}
	\begin{split} 
		&(\exists t\in T_{\ED})(\forall \pi\in Q_0 \xrsquigarrow{se_f} q\text{ with }e_f\in E_f)\\
		&(\forall \pi' \in q\xrsquigarrow{s'} q')[ ( (\WEI_{\pi'} \succ t) \vee (q'\text{ is stuck}) ) \implies
		{\bf D}],
	\end{split}
	\end{equation}
	where ${\bf D}=(\forall \pi''\in Q_0\xrsquigarrow{s''}q''\text{ with }\ell(\tau(\pi''))=\ell(\tau(\pi\pi')))
	[((\WEI_{\pi''}\succeq \WEI_{\pi}\otimes t)\vee(q''\text{ is stuck}))\implies (E_f\in s'')].$
\end{definition}

\begin{figure}[!htbp]
	\begin{center}
				\centering
		\begin{tikzpicture}[>=stealth',shorten >=1pt,auto,node distance=2.4 cm, scale = 1.5, transform shape,
	>=stealth,inner sep=2pt,
		empty/.style={}]

		\draw [draw=black, rounded corners=3] (0.0,0.0) rectangle (5.0,2.0);

		\draw [red] (0.0,1) -- (2.0,1);
		\draw [green] (2.0,1) -- (5.0,1);

		\draw [dashed, blue] (0.0,1.7) .. controls (3.0,1.8) .. (5.0,1.9);
		\draw [dashed, blue] (0.0,0.8) .. controls (3.0,0.7) .. (5.0,0.6);
		\draw [dashed, blue] (0.0,0.6) .. controls (3.0,0.5) .. (4.0,0.4);

		\filldraw[cyan] (2,1) circle (2pt) node[anchor=south] {\scriptsize$e_f\in E_f$};
		\filldraw[cyan] (1,1.74) circle (2pt);
		\filldraw[cyan] (3,0.7) circle (2pt);
		\filldraw[cyan] (3.5,0.45) circle (2pt);
		\draw[brown] (4.0,0.4) circle (2pt);

		\draw [dotted] (2.0,0) -- (2.0,-0.5);
		\draw [dotted] (5.0,0) -- (5.0,-0.5);
		\node at (3.5,-0.25) {\scriptsize$\WEI_{\green\pi'}\succ t$};
		\node at (3.5,1.15) {\scriptsize${\green\pi'}$};
		\node at (1.0,1.15) {\scriptsize${\red\pi}$};
		\draw [->] (4.3,-0.25) -- (5.0,-0.25);
		\draw [->] (2.70,-0.25) -- (2.0,-0.25);

		\end{tikzpicture}
	\end{center}
	\caption{Illustration of diagnosability. Each line represents a path starting from an initial state,
	where the weighted label sequences of all dashed lines are the same as that of the solid line at the instant
	when the last event of the solid line occurs. Bullets denote faulty events. The circle denotes a stuck state.}
	\label{fig2_diag_LMPautomata}
\end{figure}

Definition~\ref{def_diag_LMPautomata} can be interpreted as follows. For a progressive dioid $\frakDp$,
automaton $\Acal^{\frakDp}$ is $E_f$-diagnosable if and only if there exists an eventually dominant element 
$t$ in $T$, for every path $\pi$ whose last event is faulty,
for every path $\pi'$ as a continuation of $\pi$, if
either $\WEI_{\pi'}\succ t$ or $\pi'$ ends at a stuck state, then one can 
make sure that a faulty event (although not necessarily $e_f$) must have occurred when the weighted label sequence 
$\ell(\tau(\pi\pi'))$ has been generated. %after at least time $t$ since $e_f$ occurs.

Particularly, $E_f$-diagnosability of $\Acal^{\undernonnegQ}$ has the following practical meaning:
if $\Acal^{\undernonnegQ}$ is $E_f$-diagnosable, then once a faulty event (e.g., $e_f$) occurs
at some instant (e.g., $\WEI_\pi$), then after a sufficiently
long time (e.g., $t\in \Q_+$) that only depends on $\Acal^{\undernonnegQ}$,
one can infer that some faulty event must have occurred by observing the generated 
timed label sequence (e.g., $\ell(\tau(\pi\pi'))$) from time $0$ to $\WEI_{\pi}+t$.

We can simplify the investigation of $E_f$-diagnosability of automaton
$\Acal^{\frakDp}$ by modifying $\Acal^{\frakDp}$ as follows: 
at each stuck state, a normal,
noninstantaneous, and unobservable transition to a \emph{sink} state is added, and on the sink state,
a normal, noninstantaneous, and unobservable self-loop is added.
Hence the labeled stuck-free $\frakDp$-automaton
\begin{equation}\label{SF_LMPA_diag_opa_LMPautomata}
	\bar\Acal^{\frakDp}=(\frakDp,\bar\frakG,\Sig,\bar\ell)
\end{equation}
is obtained from $\Acal^{\frakDp}$ by modifying $\Acal^{\frakDp}$ as follows:
at each stuck state $q\in Q$, add a transition $q\xrightarrow[]{\frakun/\frakt}\frakq$,
where $\frakun$ is a normal, unobservable event not in $E$, $\frakq$ is an additional state not in $Q$,
$\frakt\in T_{\ED}$, and add a self-loop $\frakq\xrightarrow[]{\frakun/\frakt}\frakq$.
Denote $\bar Q:=Q\cup\{\frakq\}$. Compared with the automaton $\frakG$ in \eqref{LMPA_diag_opa_MPautomata}, 
automaton $\bar\frakG$ has an additional state $\frakq$ and these additionally added transitions from the stuck states 
of $\frakG$ to $\frakq$ and the self-loop on $\frakq$. Compared with the domain of the labeling function $\ell$
in \eqref{LMPA_diag_opa_MPautomata}, the domain of the labeling function $\bar\ell$ has an additional event
$\frakun$.

\begin{lemma}\label{lem5_diag_LMPautomata}
  Let $\frakDp$ be a progressive dioid, $\Acal^{\frakDp}$ a labeled $\frakDp$-automaton, % as in 
	%\eqref{LMPA_diag_opa_MPautomata} 
  and $E_f\subset E$ a set of faulty events. Let
	$\bar\Acal^{\frakDp}$ be its stuck-free automaton as in \eqref{SF_LMPA_diag_opa_LMPautomata}.
	Then $\Acal^{\frakDp}$ is $E_f$-diagnosable if and only if $\bar\Acal^{\frakDp}$ satisfies
	\begin{equation}\label{eqn20_diag_LMPautomata}
	\begin{split} 
		&(\exists t\in T_{\ED})(\forall \pi\in Q_0 \xrsquigarrow{se_f} q\text{ with }e_f\in E_f)\\
		&(\forall \pi' \in q\xrsquigarrow{s'} \bar Q)[ (\WEI_{\pi'} \succ t)  \implies
		{\bf D}],
	\end{split}
	\end{equation}
	where ${\bf D}=(\forall \pi''\in Q_0\xrsquigarrow{s''}q''\text{ with }\bar\ell(\tau(\pi''))=\bar\ell
	(\tau(\pi\pi')))[(\WEI_{\pi''}\succeq \WEI_{\pi}\otimes t)\implies (E_f\in s'')]$.
\end{lemma}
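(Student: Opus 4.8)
The plan is to fix the witness $t\succ{\bf1}$ and a faulty path $\pi\in Q_0\xrsquigarrow{se_f}q$ once and for all, and to prove that the inner statement ``$\forall$ continuations $\pi'$: premise $\implies{\bf D}$'' is equivalent for $\Acal^{\D}$ (with premise $(\WEI_{\pi'}\succ t)\vee(q'\text{ stuck})$) and for $\bar\Acal^{\D}$ (with premise $\WEI_{\pi'}\succ t$); since the two statements \eqref{eqn17_diag_LMPautomata}, \eqref{eqn20_diag_LMPautomata} share the outer prefix $\exists t\,\forall\pi$, the equivalence of these inner statements gives the lemma. Before that I would record three structural facts about $\bar\Acal^{\D}$. (i) Every faulty transition lies in the original part (the sink $\frakq$ is entered only by the normal event $\frakun\notin E$ and is absorbing), so any path ending in $e_f\in E_f$ avoids the sink entirely; hence the sets over which $\pi$ ranges coincide and $\WEI_\pi$ agrees. (ii) Every path of $\bar\Acal^{\D}$ is either a path of $\Acal^{\D}$, or a path of $\Acal^{\D}$ reaching a stuck state, followed by one $\frakun$-edge and a block of self-loops at $\frakq$. (iii) Because $\frakun$ is unobservable and any such sink block occurs strictly after the last observable event, appending a sink block to a path leaves its timed label sequence $\bar\ell(\tau(\cdot))$ unchanged; because $\frakun$ is normal, it also leaves the predicate ``$E_f\in(\cdot)$'' unchanged.

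The key quantitative ingredient is that a sink block can realize arbitrarily large weight. Using $\frakt\succ{\bf1}$, I would show that for any path weight $w\ne{\bf0}$ and any target $g\in\D$ there is $n$ with $w\otimes\frakt^{\,n}\succ g$: if $w\succeq{\bf1}$ this follows from $\frakt^{\,n}\preceq w\otimes\frakt^{\,n}$ (Lemma~\ref{lem1_diag_LMPautomata}\eqref{item4_diag_LMPautomata}) together with the unboundedness of the powers $\frakt^{\,n}$ guaranteed by Corollary~\ref{cor1_diag_LMPautomata} and Definition~\ref{progressive_dioid_diag_LMPautomata}; if $w\prec{\bf1}$ one first invokes the progressiveness condition that $w\otimes\frakt^{\,m}\succ{\bf1}$ for some $m$, and then repeats the previous argument from $w\otimes\frakt^{\,m}$. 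This is exactly what lets me trade the disjunct ``$q'$ is stuck'' in the premise, and the analogous ``$q''$ is stuck'' buried inside ${\bf D}$, for the single weight premise used in \eqref{eqn20_diag_LMPautomata}.

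With these in hand I would prove both implications for the fixed $t,\pi$. For $\eqref{eqn20_diag_LMPautomata}\Rightarrow\eqref{eqn17_diag_LMPautomata}$, take a continuation $\pi'$ in $\Acal^{\D}$ satisfying the original premise: if $\WEI_{\pi'}\succ t$ then $\pi'$ is already a witnessing continuation in $\bar\Acal^{\D}$; if instead $\last(\pi')$ is stuck, I extend $\pi'$ through the sink to $\bar\pi'$ with $\WEI_{\bar\pi'}\succ t$ and, by (iii), the same timed label sequence. Either way \eqref{eqn20_diag_LMPautomata} supplies its ${\bf D}$, whose target label $\bar\ell(\tau(\pi\bar\pi'))$ equals $\ell(\tau(\pi\pi'))$. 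To recover the original ${\bf D}$ I take any matching $\pi''$ satisfying its premise: if $\WEI_{\pi''}\succeq\WEI_\pi\otimes t$ it is already a legal instance; if only ``$q''$ stuck'' holds I extend it through the sink to push its weight past $\WEI_\pi\otimes t$ without changing its label or its fault status, so the conclusion $E_f\in s''$ transfers back. The converse $\eqref{eqn17_diag_LMPautomata}\Rightarrow\eqref{eqn20_diag_LMPautomata}$ is symmetric: a continuation $\bar\pi'$ with $\WEI_{\bar\pi'}\succ t$ either lies in $\Acal^{\D}$ (used directly) or decomposes by (ii) into a prefix reaching a stuck state — whose ``$q'$ stuck'' disjunct fires the original premise — plus a sink block preserving the label, and the matching $\bar\pi''$'s are resolved by the same stays-in/through-the-sink dichotomy.

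I expect the main obstacle to be the bookkeeping around the conclusion ${\bf D}$ rather than around the premise: one must verify that under the sink extension all three data on which ${\bf D}$ depends — the timed label sequence $\bar\ell(\tau(\cdot))$, the weight comparison against $\WEI_\pi\otimes t$, and the predicate $E_f\in s''$ — transfer in the correct direction, and in particular that matching of timed label sequences is genuinely unaffected precisely because every $\frakun$-edge falls strictly after the last observable event. The only step requiring more than elementary monotonicity is the weight-growth argument for weights below ${\bf1}$, where the progressiveness conditions of Definition~\ref{progressive_dioid_diag_LMPautomata} are essential.
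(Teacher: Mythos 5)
Your proposal is correct and follows essentially the same route as the paper's own proof: for one direction you truncate to the longest $\frakun$-free prefix (whose endpoint is a stuck state, firing the stuck disjunct of \eqref{eqn17_diag_LMPautomata}), and for the other you extend through the sink using the progressiveness-driven weight growth $w\otimes\frakt^{n}\succ g$ (with condition (6) handling $w\prec{\bf1}$), relying throughout on $\frakun$ being normal and unobservable so that timed label sequences and the predicate $E_f\in(\cdot)$ are preserved --- exactly the paper's case analysis, merely repackaged as an inner-statement equivalence for fixed $t$ and $\pi$. One minor remark: your assumption $\frakt\succ{\bf1}$ is what the paper's proof actually uses, even though the construction \eqref{SF_LMPA_diag_opa_LMPautomata} states ${\bf1}\succ\frakt$ (evidently a typo, since the sink extension must be able to push weights past any threshold).
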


\begin{proof}
	Observe that in a path $\pi$ of $\bar\Acal^{\frakDp}$, if $\frakun$ appears, then after the $\frakun$ occurs,
	only $\frakq$ can be visited, hence all successors are $\frakq\xrightarrow[]{\frakun/\frakt}\frakq$.
	Without loss of generality,
	we assume that $\Acal^{\frakDp}$ is not stuck-free, otherwise $\Acal^{\frakDp}$ is the same as its stuck-free
	automaton $\bar\Acal^{\frakDp}$.
	%events are unobservable and the path is instantaneous unless
	%$\frakun$ appears again.

	``only if'':

	Suppose that $\Acal^{\frakDp}$ is $E_f$-diagnosable, i.e., \eqref{eqn17_diag_LMPautomata} holds.
	Choose a $t\succ {\bf1}$ as in \eqref{eqn17_diag_LMPautomata}. In $\bar\Acal^{\frakDp}$, 
	arbitrarily choose paths $\pi\in Q_0\xrsquigarrow{se_f}q$, $\pi'\in q\xrsquigarrow{s'}q'$, and
	$\pi''\in Q_0\xrsquigarrow{s''}q''$ such that $e_f\in E_f$, $\WEI_{\pi'}\succ t$,
	$\bar\ell(\tau(\pi''))=\bar\ell(\tau(\pi\pi'))$, and $\WEI_{\pi''}\succeq\WEI_{\pi}\otimes t$.
	Let $\bar \pi''\in Q_0\xrsquigarrow{\bar s''}\bar q''$
	be the longest prefix of $\pi''$ that contains no $\frakun$. Then $\bar\pi''$ is a path of $\Acal^{\frakDp}$
	and $\ell(\tau(\bar\pi''))= \bar\ell(\tau(\pi''))$. In addition,
	$\last(\bar\pi'')$ is a stuck state of $\Acal^{\frakDp}$ if $\bar\pi''$ is not the same as $\pi''$.

	(\romannumeral1) Assume $\frakun\notin se_fs'$. Then $\pi\pi'$ is also a path of $\Acal^{\frakDp}$, and hence
	$E_f\in\bar s''$ by \eqref{eqn17_diag_LMPautomata}. It follows that $E_f\in s''$.

	(\romannumeral2) Assume $\frakun\notin se_f$ and $\frakun\in s'$. Then $\pi$ is also a path of $\Acal^{\frakDp}$.
	Let $\bar \pi'$ be the longest prefix of $\pi'$ that contains no $\frakun$, then $\pi\bar\pi'$ is a path of
	$\Acal^{\frakDp}$, $\ell(\tau(\pi\bar\pi')) = \bar\ell(\tau(\pi\pi'))$, and $\last(\bar \pi')$ is a stuck state 
	of $\Acal^{\frakDp}$ if $\bar\pi'$ is not the same as $\pi'$.
	We then have $\ell(\tau(\bar\pi''))=\ell(\tau(\pi\bar\pi'))$. %We also have $\WEI_{\bar \pi''}\succeq\WEI
	%_{\pi\bar\pi'}$ (in case $\bar\pi''=\pi''$), otherwise by Lemma~\ref{lem1_diag_LMPautomata},
	%we have $\WEI_{\pi\pi'}\succeq\WEI_{\pi\bar\pi'}\succ\WEI_{\bar\pi''}$, which is a contradiction.
	Then by \eqref{eqn17_diag_LMPautomata}, we have $E_f\in \bar s''$. It also follows that $E_f\in s''$.

	(\romannumeral3) By definition of $\bar \Acal^{\frakDp}$, $\frakun\notin se_f$.
%	Assume $\frakun\in se_f$. Then $\frakun\in s$ because $\frakun\ne e_f$.
%	By Assumption~\ref{assum1_diag_LMPautomata}, we also have $\pi'$ is unobservable, and all 
%	states of $\pi'$ are stuck in $\Acal^{\frakDp}$.
%	If $s$ contains a faulty event before the first appearance of $\frakun$, then this case falls within the scope
%	of (\romannumeral2) by Lemma~\ref{lem1_diag_LMPautomata}. Otherwise,
%	the suffix of $\pi$ after the first appearance of $\frakun$ is unobservable, $\pi'$ is also unobservable.
%	After replacing every transition $\tilde q\xrightarrow[]{\frakun}\tilde q$ appearing in $\pi\pi'$ by $\tilde q$, 
%	the modified $\pi\pi'$ is a path of $\Acal^{\frakDp}$, then this case falls within the scope of (\romannumeral1).

	Based on the above (\romannumeral1), (\romannumeral2), and (\romannumeral3), no matter whether 
	$\frakun\in se_fs'$ or not, one has $E_f\in s''$, then \eqref{eqn20_diag_LMPautomata} holds.

	``if'': 

	Suppose $\bar\Acal^{\frakDp}$ satisfies \eqref{eqn20_diag_LMPautomata}. Choose a $t\succ{\bf1}$ as in
	\eqref{eqn20_diag_LMPautomata}. In $\Acal^{\frakDp}$, arbitrarily choose paths
	$\pi\in Q_0\xrsquigarrow{se_f}q$, $\pi'\in q\xrsquigarrow{s'}q'$, and
	$\pi''\in Q_0\xrsquigarrow{s''}q''$ such that $e_f\in E_f$, $\WEI_{\pi'}\succ t$ or $q'$ is stuck, 
	$\ell(\tau(\pi''))=\ell(\tau(\pi\pi'))$, $\WEI_{\pi''}\succeq\WEI_{\pi}\otimes t$ or $q''$ is stuck.

	One directly sees that $\pi\pi'$ and $\pi''$ are paths of $\bar\Acal^{\frakDp}$. If $\WEI_{\pi'}\nsucc t$,
	then $q'$ is stuck, we denote $\bar\pi':=\pi'\left( \xrightarrow[]{\frakun}\frakq \right)^{2n}$, where
	$n\in\Z_+$ is such that $\WEI_{\pi'}\otimes \frakt^n\succ {\bf1}$ and
	$\frakt^n\succ t$ ($n$ exists because $\frakt$ is eventually dominant,
	$\frakt=\mu(\frakun)_{\frakq \frakq}\succ{\bf1}$
	(by Lemma~\ref{lem7_diag_LMPautomata}),
	then $\WEI_{\bar\pi'}=\WEI_{\pi'}\otimes \frakt^{2n}\succeq
	\frakt^n \succ t$ by Lemma~\ref{lem1_diag_LMPautomata}. Similarly,
	if $\WEI_{\pi''}\nsucceq\WEI_{\pi}\otimes t$ then $q''$ is stuck, we can add sufficiently many 
	$\frakun$-transitions after $q''$ so that the updated $\pi''$ satisfies $\WEI_{\pi''}\succeq \WEI_{\pi}\otimes t$. 
    Then by \eqref{eqn20_diag_LMPautomata}, $E_f\in s''$.
\end{proof}

By Lemma~\ref{lem5_diag_LMPautomata},
in order to investigate whether a labeled $\frakDp$-automaton $\Acal^{\frakDp}$ %over a progressive dioid $\frakDp$
is $E_f$-diagnosable (Definition~\ref{def_diag_LMPautomata}), we only need to investigate whether its
stuck-free automaton $\bar\Acal^{\frakDp}$ satisfies \eqref{eqn20_diag_LMPautomata}. We refer to 
\eqref{eqn20_diag_LMPautomata} as the definition of $E_f$-diagnosability for labeled stuck-free 
$\frakDp$-automaton $\bar\Acal^{\frakDp}$.

%the following Definition~\ref{def'_diag_LMPautomata}.
%In the sequel, $E_f$-diagnosability always means Definition~\ref{def'_diag_LMPautomata}. Apparently,
%Definition~\ref{def'_diag_LMPautomata} is simpler than Definition~\ref{def_diag_LMPautomata}.

\begin{definition}\label{def'_diag_LMPautomata}
	Let $\frakDp$ be a progressive dioid, $\Acal^{\frakDp}$ a labeled stuck-free $\frakDp$-automaton as in 
	\eqref{SF_LMPA_diag_opa_LMPautomata}, and 
	$E_f\subset E$ a set of faulty events. $\Acal^{\frakDp}$ is called \emph{$E_f$-diagnosable} if
	\begin{equation}\label{eqn20'_diag_LMPautomata}
	\begin{split} 
		&(\exists t\in T_{\ED})(\forall \pi\in Q_0 \xrsquigarrow{se_f} q\text{ with }e_f\in E_f)\\
		&(\forall \pi' \in q\xrsquigarrow{s'} Q)[ (\WEI_{\pi'} \succ t)  \implies
		{\bf D}],
	\end{split}
	\end{equation}
	where ${\bf D}=(\forall \pi''\in Q_0\xrsquigarrow{s''}q''\text{ with }\ell(\tau(\pi''))=\ell(\tau(\pi\pi')))
	[(\WEI_{\pi''}\succeq \WEI_{\pi}\otimes t)\implies (E_f\in s'')]$.
\end{definition}

By Definition~\ref{def'_diag_LMPautomata}, one directly has the following result.

\begin{proposition}\label{prop2_diag_LMPautomata}
	Let $\frakDp$ be a progressive dioid. A labeled stuck-free $\frakDp$-automaton $\Acal^{\frakDp}$ 
	\eqref{SF_LMPA_diag_opa_LMPautomata} is not $E_f$-diagnosable if and only if 
	\begin{equation}\label{eqn10_diag_LMPautomata}
		\begin{split}
			&(\forall t\in T_{\ED})(\exists \pi_t\in Q_0 \xrsquigarrow{se_f} q\text{ with }e_f\in E_f)\\
			&(\exists \pi_t' \in q\xrsquigarrow{s'} Q)
			(\exists \pi_t''\in Q_0\xrsquigarrow{s''}q'')[(\WEI_{\pi_t'}\succ t) \wedge \\
			&(\ell(\tau(\pi_t''))=\ell(\tau(\pi_t\pi_t'))) \wedge (\WEI_{\pi_t''}\succeq\WEI_{\pi}\otimes t)\\
			&\wedge (E_f\notin s'')].
		\end{split}
	\end{equation}
\end{proposition}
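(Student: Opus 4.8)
The statement to prove is Proposition~\ref{prop2_diag_LMPautomata}, which asserts that \eqref{eqn10_diag_LMPautomata} is precisely the logical negation of the $E_f$-diagnosability condition \eqref{eqn20'_diag_LMPautomata} from Definition~\ref{def'_diag_LMPautomata}. The plan is simply to negate the quantified formula \eqref{eqn20'_diag_LMPautomata} mechanically, pushing the negation inward through each quantifier and the implication, and to verify that the result coincides syntactically with \eqref{eqn10_diag_LMPautomata}.

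Let me carry out the negation step by step. The diagnosability condition has the shape
$(\exists t\succ{\bf1})(\forall\pi)(\forall\pi')[P\implies{\bf D}]$, where $P$ abbreviates $(\WEI_{\pi'}\succ t)$ and ${\bf D}$ is the inner universally-quantified implication $(\forall\pi'')[R\implies(E_f\in s'')]$ with $R$ abbreviating $(\WEI_{\pi''}\succeq\WEI_{\pi}\otimes t)$ together with the matching constraint $\ell(\tau(\pi''))=\ell(\tau(\pi\pi'))$ built into the range of $\pi''$. First I would negate the outermost existential over $t$, turning $\exists t$ into $\forall t$; this yields the leading block $(\forall t\succ{\bf1})$ of \eqref{eqn10_diag_LMPautomata}. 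Negating the two universal quantifiers $\forall\pi,\forall\pi'$ produces the existentials $\exists\pi_t,\exists\pi_t'$ (the subscript $t$ merely records that the witnesses may depend on the now-universally-quantified $t$). The negation of the implication $P\implies{\bf D}$ is $P\wedge\neg{\bf D}$, so we retain the conjunct $(\WEI_{\pi_t'}\succ t)$ and must negate ${\bf D}$.

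Negating the inner formula ${\bf D}=(\forall\pi'')[R\implies(E_f\in s'')]$ gives $(\exists\pi'')[R\wedge\neg(E_f\in s'')]$, i.e. $(\exists\pi_t'')$ ranging over $Q_0\xrsquigarrow{s''}q''$ with the label-matching constraint $\ell(\tau(\pi_t''))=\ell(\tau(\pi_t\pi_t'))$, satisfying both $\WEI_{\pi_t''}\succeq\WEI_{\pi}\otimes t$ and $E_f\notin s''$. Collecting all the conjuncts under the single existential block $(\exists\pi_t)(\exists\pi_t')(\exists\pi_t'')$ reproduces exactly the four-line conjunction displayed in \eqref{eqn10_diag_LMPautomata}. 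The only subtlety worth flagging is that the matching constraint $\ell(\tau(\pi''))=\ell(\tau(\pi\pi'))$ sits in the \emph{range} of the quantifier $\forall\pi''$ rather than as an antecedent of the implication; since a restricted universal $(\forall\pi''\text{ with }C)[R\implies S]$ is shorthand for $(\forall\pi'')[(C\wedge R)\implies S]$, its negation is $(\exists\pi'')[C\wedge R\wedge\neg S]$, which is why $C$ surfaces as a conjunct in \eqref{eqn10_diag_LMPautomata} rather than disappearing. There is no genuine obstacle here: the proof is a routine application of De Morgan's laws and the equivalence $\neg(P\implies Q)\equiv P\wedge\neg Q$, and the claim follows immediately once the negation is computed, which is presumably why the authors state it as an immediate consequence of the definition.
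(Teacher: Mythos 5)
Your proposal is correct and matches the paper's treatment exactly: the paper offers no written proof, stating the proposition as following \emph{directly} from Definition~\ref{def'_diag_LMPautomata}, which is precisely the mechanical quantifier negation you carry out. Your explicit handling of the restricted quantifier $(\forall\pi''\text{ with }\ell(\tau(\pi''))=\ell(\tau(\pi\pi')))$ --- whose bounding condition surfaces as a conjunct under the existential --- spells out the one point the paper leaves implicit, and correctly notes that the conditions $\WEI_{\pi'}\succ t$ and $\WEI_{\pi''}\succeq\WEI_{\pi}\otimes t$ are retained positively rather than negated, so no order-theoretic facts about $\D$ are needed.
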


In words, a labeled stuck-free $\frakDp$-automaton is not $E_f$-diagnosable if and only if
there exist two runs from initial states at the same time such that they produce the same weighted label 
sequence, the first run
contains a faulty event and after the event the weight is arbitrarily large, the second run
contains no faulty event and its weight is also arbitrarily large.

\begin{example}\label{exam3_diag_LMPautomata}
	Recall the automaton $\Acal_1^{\underN}$ in Example~\ref{exam2_diag_LMPautomata} (shown in 
	Fig.~\ref{fig3_diag_LMPautomata}). We choose paths
	\begin{align*}
		 {\red\pi_t} &:= q_0 \xrightarrow[]{{\cyan f}/3} q_1,
		 \quad {\green\pi_t'} := q_1 \xrightarrow[]{a/1}q_3\left( \xrightarrow[]{a/1}q_3 \right)^t,\\
		 {\blue\pi_t''} &:= q_0 \left( \xrightarrow[]{u/1}q_2  \right) ^3 
		 \left( \xrightarrow[]{a/1}q_4 \right)^{t+1},
	 \end{align*}
	 where $t\in\Z_+$. It holds that 
	 \begin{align*}
		 &\ell(\tau({\blue\pi_t''})) = \ell(\tau({\red\pi_t}{\green\pi_t'})) = (a,4)(a,5)\dots(a,t+4).
	 \end{align*}

	 Then $\Acal_1^{\underN}$ is not $\cyan\{f\}$-diagnosable by Proposition.~\ref{prop2_diag_LMPautomata}.
\end{example}

\section{The notion of concurrent composition}
\label{sec:ConCom_LWAD}

In this section, for a labeled $\frakD$-automaton, we define a notion of \emph{concurrent composition} 
of its faulty subautomaton $\Acal^{\frakD}_{\fsf}$ and its normal subautomaton $\Acal^{\frakD}_{\nsf}$,
and will use the concurrent composition to derive a necessary and sufficient condition for 
the negation of diagnosability of labeled $\frakDp$-automata.
%Similar but technically different structures have been used to study diagnosability of LFSAs 
%\cite{Cassez2008FaultDiagnosisStDyObser} and labeled timed automata \cite{Tripakis2002DiagnosisTimedAutomata}.
In \cite{Zhang2022DetWAMonoid}, the notion of concurrent composition was originally defined for a
labeled weighted automaton $\Acal^{\frakM}$ over a monoid $\frakM$ and computed for a labeled weighted automaton 
$\Acal^{\Q^k}$ over the monoid $(\Q^k,+)$ in $\NP$ by developing original methods, resulting in that the 
classical strong detectability results in labeled finite-state automata 
\cite{Shu2007Detectability_DES} were for the first time extended to 
$\Acal^{\Q^k}$. The concurrent composition used in the current paper is a variant of the original version
proposed in \cite{Zhang2022DetWAMonoid} adapted to $\Acal^{\frakD}$. Compared with the technique of verifying 
strong detectability proposed in \cite{Zhang2022DetWAMonoid}, the technique of verifying diagnosability
used in the current paper is more involved because of the additional consideration of orders in dioids.
From now on, without loss of generality $\Acal^{\frakD}$ is always assumed stuck-free.

\begin{definition}\label{def_CC_diag_LMPautomata}
	Consider a labeled $\frakD$-automaton $\Acal^\frakD$ \eqref{LMPA_diag_opa_MPautomata}
	and a faulty event set $E_f\subset E$.
	We define the \emph{concurrent composition} of $\Acal^{\frakD}_{\fsf}$ and $\Acal^{\frakD}_{\nsf}$
	as the LFSA
\begin{align}\label{CC_diag_LMPautomata}
  \CCa(\Acal^{\frakD}_{\fsf},\Acal^{\frakD}_{\nsf})=(Q',E',\dt',Q_0',\Sig,\ell'),
\end{align}
where $Q'=Q\times Q$; $E'=E'_{o}\cup E'_{uo}$, where $E'_{o}=\{(e_1,e_2)\in E_o\times (E_o\setminus
E_f)|\ell(e_1)=\ell(e_2)\}$;
$E'_{uo}=\{(e_{uo},\epsilon)|e_{uo}\in E_{uo}\}\cup\{(\epsilon,e_{uo})|e_{uo}\in 
E_{uo}\setminus E_f\}$; $Q_0'=Q_0\times Q_0$;
$\dt'\subset Q'\times E'\times Q'$ is the transition relation, for all states $(q_1,q_2),(q_3,q_4)\in Q'$,
and events $(e_1,e_2)\in E_o'$, $(e_{uo}^1,\ep),(\ep,e_{uo}^2)\in E_{uo}'$,
\begin{enumerate}[(1)]
	\item\label{item1_diag_LMPautomata}
	  (observable transition) $((q_1,q_2),(e_1,e_2),(q_3,q_4))\in\dt'$ if and only if in $\Acal^{\frakD}$, there exist states $q_5,q_6\in
Q$, event sequences $s_1\in (E_{uo})^*$, $s_2\in (E_{uo}\setminus E_f)^*$, and paths
	\begin{equation}\label{eqn6_diag_LMPautomata}
		\begin{split}
			\pi_1 &:= q_1\xrightarrow[]{s_1}q_5\xrightarrow[]{e_1}q_3,\\
			\pi_2 &:= q_2\xrightarrow[]{s_2}q_6\xrightarrow[]{e_2}q_4,
		\end{split}
	\end{equation}
	in $\Acal^{\frakD}_{\fsf}$, $\Acal^{\frakD}_{\nsf}$, respectively, such that $\WEI_{\pi_1}=\WEI_{\pi_2}$
	(particularly when $\frakD=\underQ$, we additionally define the \emph{weight} of this transition by $0$, because 
	$\WEI_{\pi_1}=\WEI_{\pi_2}$),
	\item\label{item2_diag_LMPautomata}
		(unobservable transition) $((q_1,q_2),(e_{uo}^1,\ep),(q_3,q_4))\in\dt'$ if and only if 
		$(q_1,e_{uo}^1,q_3)\in\Delta$, $q_2=q_4$ (particularly when $\frakD=\underQ$,
		we additionally define the \emph{weight} of the transition by $\mu(e_{uo}^1)_{q_1q_3}$),
	\item\label{item3_diag_LMPautomata}
		(unobservable transition) $((q_1,q_2),(\ep,e_{uo}^2),(q_3,q_4))\in\dt'$ if and only if $q_1=q_3$,
		$(q_2,e_{uo}^2,q_4)\in\Delta$ (particularly when $\frakD=\underQ$, we additionally define 
		the \emph{weight} of the transition by $-\mu(e_{uo}^2)_{q_2q_4}$);
\end{enumerate}
for all $(e_1,e_2)\in E_o'$, $\ell'((e_1,e_2))=\ell(e_1)$; for all $e'\in E_{uo}'$, $\ell'(e')=\ep$.
$\ell'$ is recursively extended to $(E')^*\to \Sig^*$. %;
%for an unobservable transition $(q_1,q_2)\xrightarrow[]{(e_{uo},\ep)}(q_3,q_2)$ (resp., 
%$(q_1,q_2)\xrightarrow[]{(\ep,e_{uo})}(q_1,q_4)$), we add weight $\mu(e_{uo})_{q_{q_1q_3}}$
%(resp., $-\mu(e_{uo})_{q_{q_2q_4}}$).
For a state $q'$ of $\CCa(\Acal^{\frakD}_{\fsf},\Acal^{\frakD}_{\nsf})$, we write $q'=(q'(L),q'(R))$\footnote{Note
that $\CCa(\Acal^{\underQ}_{\fsf},\Acal^{\underQ}_{\nsf})$ is a labeled $\underQ$-automaton.
Although here we only define weights for transitions of $\CCa(\Acal^{\underQ}_{\fsf},
\Acal^{\underQ}_{\nsf})$, we want to emphasize that weights can be defined for transitions of each 
$\CCa(\Acal^{\frakD}_{\fsf}, \Acal^{\frakD}_{\nsf})$, provided
every nonzero element of $\frakD$ has a multiplicative inverse, e.g., see the vector progressive
dioid~\eqref{eqn24_diag_LMPautomata} and Example~\ref{exam7_diag_LMPautomata}.}
%We do not define weights for transitions of $\CCa(\Acal^{\frakD}_{\fsf},\Acal^{\frakD}_{\nsf})$ because 
%generally in $\frakD$, an element does not necessarily have a multiplicative inverse.},
where $q'(L)$
and $q'(R)$ denote the left component and right component of $q'$, respectively. Such notation also applies
to event sequences and transition sequences consisting of unobservable transitions in $\CCa(\Acal^{\frakD}_{\fsf},
\Acal^{\frakD}_{\nsf})$.
\end{definition}

Intuitively, the original idea of constructing the concurrent composition $\CCa(\Acal^{\frakD}_{\fsf},
\Acal^{\frakD}_{\nsf})$ comes from characterizing the negation of $E_f$-diagnosability as in
Proposition~\ref{prop2_diag_LMPautomata}. $\CCa(\Acal^{\frakD}_{\fsf},\Acal^{\frakD}_{\nsf})$ collects all pairs of 
runs $\pi_t\pi_t'$ and $\pi_t''$ of $\Acal^{\frakD}$ as in \eqref{eqn10_diag_LMPautomata}, i.e., the weighted label
sequence of $\pi_t\pi_t'$ is the same as that of $\pi_t''$, the last event of $\pi_t$ is faulty, the weights
of $\pi_t'$ and $\pi_t''$ are both arbitrarily large. On the other hand, $\CCa(\Acal^{\frakD}_{\fsf},\Acal^{\frakD}_{\nsf})$
is an LFSA. Particularly, since in the dioid $\underQ$, every nonzero element $t$ (i.e., $t\ne-\infty$) has a
multiplicative inverse $-t$ (the multiplication in $\underQ$ is the standard $+$), a transition of 
$\CCa(\Acal^{\underQ}_{\fsf},\Acal^{\underQ}_{\nsf})$ is endowed with a weight as the difference between the weight
of its left component and that of its right component.

One sees that in $\CCa(\Acal^{\frakD}_{\fsf},\Acal^{\frakD}_{\nsf})$, an observable transition %$((q_1,q_2),(e_1,e_2),
%(q_1',q_2'))$ 
is obtained by merging two paths of $\Acal^{\frakD}$, where the two paths contain exactly
one observable event each, have the same weight and both end at the occurrences of the 
observable events; in addition, the second path contains only normal transitions.
Hence the two paths are consistent with observations. Particularly for $\Acal^{\undernonnegQ}$, in both
paths, the observable events occur at the same instant of time if $\Acal^{\undernonnegQ}$ starts at the starting 
states of the two paths at the same instant.
However, an unobservable transition of $\CCa(\Acal^{\frakD}_{\fsf},\Acal^{\frakD}_{\nsf})$ is obtained by merging an
unobservable transition and an $\ep$-transition of $\Acal^{\frakD}$. The two transitions are 
not necessarily consistent with observations, e.g., an unobservable transition may have weight $\ne{\bf1}$, 
but an $\ep$-transition must have weight equal to $\bf1$.
A sequence $q_0'\xrightarrow[]{s_1'}\cdots\xrightarrow[]{s_n'}q_n'$
of transitions of $\CCa(\Acal^{\frakD}_{\fsf},\Acal^{\frakD}_{\nsf})$ ($n\in\N$) is called a \emph{run}.
An \emph{unobservable run} is a run consisting of unobservable transitions.
In an unobservable run $\pi'$, its left component
$\pi'(L)$ is an unobservable path of $\Acal^{\frakD}_{\fsf}$, its right component $\pi'(R)$
is an unobservable path of $\Acal^{\frakD}_{\nsf}$. We denote by $\CCao(\Acal^{\frakD}_{\fsf},\Acal^{\frakD}_{\nsf})$ and
$\CCauo(\Acal^{\frakD}_{\fsf},\Acal^{\frakD}_{\nsf})$ the subautomata of $\CCa(\Acal^{\frakD}_{\fsf},\Acal^{\frakD}_{\nsf})$
consisting of all reachable observable transitions and all reachable unobservable transitions,
respectively.

%Particularly, for $\Acal^{\underN}$, there is an observable transition $(q_1,q_2)\xrightarrow[]{(e_1,e_2)}
%(q_3,q_4)$ in $\CCa(\Acal^{\underN},\Acal^{\underN}_{\nsf})$ 
%if and only if in $\Acal^{\underN}$, starting from $q_1$ and $q_2$ at the same time,
%after some common time delay, $e_1$ and $e_2$ 
%occur as the unique observable events, state $q_1$ and $q_2$ can transition to $q_3$ and $q_4$, 
%respectively. In short words, in an observable transition of $\CCa(\Acal^{\underN},
%\Acal^{\underN}_{\nsf})$,
%we require the observable events in both components
%to occur simultaneously, and after that we only allow instantaneous unobservable transitions.
%However, in an unobservable transition of $\CCa(\Acal^{\underN},\Acal^{\underN}_{\nsf})$,
%we only require in exactly one component, there exists a state
%transition, but do not synchronize unobservable transitions in different components.

Computing all unobservable transitions of
$\CCa(\Acal^{\underQ}_{\fsf},\Acal^{\underQ}_{\nsf})$ takes time polynomial in the size of
$\Acal^{\underQ}$. 
%By \cite[Theorem~6]{Zhang2020DetMaxPlusAutomata},
Now for states $(q_1,q_2),(q_3,q_4)\in Q'$ and observable event $(e_1,e_2)\in E_o'$,
we check whether there exists an observable transition $((q_1,q_2),(e_1,e_2),(q_3,q_4))$
in $\CCa(\Acal^{\underQ}_{\fsf},\Acal^{\underQ}_{\nsf})$:
\begin{myenumerate}%[(i)]
	\item\label{item16_diag_LMPautomata}
		Nondeterministically choose $q_5,q_6\in Q$ such that
		$q_5\xrightarrow[]{e_1} q_3$ and $q_6\xrightarrow[]{e_2} q_4$ are two observable transitions of
		$\Acal^{\underQ}_{\fsf}$ and $\Acal^{\underQ}_{\nsf}$, respectively.
	\item\label{item17_diag_LMPautomata}
		Regard $\CCauo(\Acal^{\underQ}_{\fsf},\Acal^{\underQ}_{\nsf})$ as a $1$-dimensional weighted 
		directed graph $(\Q,V,A)$ as in Subsection~\ref{sec:EPLProblem}, where $V=Q'$, $A=\{((p_1,p_2),
		\mu(e_{uo})_{p_1p_3},(p_3,p_2))|p_1,p_2,\\p_3\in Q,e_{uo}\in E_{uo},(p_1,e_{uo},p_3)\in\Delta \}
		\cup \{((p_1,p_2),\\-\mu(e_{uo})_{p_2p_3},(p_1,p_3))|p_1,p_2,p_3\in Q,e_{uo}\in E_{uo},\\(p_2,e_{uo},
		p_3)\in\Delta\}$.
	\item\label{item18_diag_LMPautomata}
		If there exists a path from $(q_1,q_2)$ to $(q_5,q_6)$ with weight $\mu(e_2)_{q_6q_4}-\mu(e_1)
		_{q_5q_3}$ in $\CCauo(\Acal^{\underQ}_{\fsf},\Acal^{\underQ}_{\nsf})$,
		then $((q_1,q_2),(e_1,e_2),(q_3,q_4))$ is an observable transition of
		$\CCa(\Acal^{\underQ}_{\fsf},\Acal^{\underQ}_{\nsf})$.
\end{myenumerate}

By Lemma~\ref{lem1_det_MPautomata}, the existence of 
such a path can be checked in $\NP$. Hence the following result holds.

\begin{theorem}\label{thm4_diag_LMPautomata}
	Consider a labeled $\underQ$-automaton $\Acal^{\underQ}$, a faulty event set $E_f\subset E$,
	and the corresponding faulty subautomaton $\Acal^{\underQ}_{\fsf}$ and normal subautomaton
	$\Acal^{\underQ}_{\nsf}$. The concurrent composition $\CCa(\Acal^{\underQ}_{\fsf},\Acal^{\underQ}_{\nsf})$
	can be computed in time nondeterministically polynomial in the size of $\Acal^{\underQ}$.
\end{theorem}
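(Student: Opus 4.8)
The plan is to establish the theorem by showing that the whole automaton $\CCa(\Acal^{\underQ}_{\fsf},\Acal^{\underQ}_{\nsf})$ can be assembled from two pieces whose computation cost we can control: the unobservable transitions, which are cheap, and the observable transitions, whose existence reduces to EPL instances. First I would observe that the state set $Q'=Q\times Q$ and the event set $E'$ both have size polynomial in the size of $\Acal^{\underQ}$ (indeed $|Q'|=|Q|^2$ and $|E'_o|,|E'_{uo}|$ are bounded by $|E|^2$ and $2|E|$), so there are only polynomially many candidate transitions to decide. Hence it suffices to argue that each individual transition of $\dt'$ can be decided within the claimed resource bound, and that the total work stays nondeterministically polynomial.

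Next I would dispatch the two unobservable cases \eqref{item2_diag_LMPautomata} and \eqref{item3_diag_LMPautomata}. These are purely local: $((q_1,q_2),(e_{uo}^1,\ep),(q_3,q_4))\in\dt'$ iff $(q_1,e_{uo}^1,q_3)\in\Delta$ and $q_2=q_4$, and symmetrically for the other case. Each such check is a direct membership test in $\Delta$, so enumerating all unobservable transitions takes time polynomial in the size of $\Acal^{\underQ}$, exactly as the excerpt already asserts just above the statement. I would record this as the deterministic, cheap part of the construction.

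The substantive part is the observable transitions, governed by condition \eqref{item1_diag_LMPautomata}. Here I would invoke verbatim the three-step procedure \ref{item16_diag_LMPautomata}--\ref{item18_diag_LMPautomata} that precedes the theorem. The key reduction is that the existence of paths $\pi_1,\pi_2$ with a single observable event each and with $\WEI_{\pi_1}=\WEI_{\pi_2}$ is equivalent to the existence, in the graph $\CCauo(\Acal^{\underQ}_{\fsf},\Acal^{\underQ}_{\nsf})$ viewed as a $1$-dimensional $\Q$-weighted directed graph, of a path from $(q_1,q_2)$ to the chosen $(q_5,q_6)$ of weight exactly $\mu(e_2)_{q_6q_4}-\mu(e_1)_{q_5q_3}$; here the unobservable prefixes $s_1,s_2$ contribute weights on the left and right components, and the weight bookkeeping (positive on the left, negated on the right) exactly encodes the equality $\WEI_{\pi_1}=\WEI_{\pi_2}$. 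After nondeterministically guessing $q_5,q_6$, this is precisely an instance of the EPL problem, which by Lemma~\ref{lem1_det_MPautomata} lies in $\NP$. Thus each observable transition is decidable in $\NP$.

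Finally I would combine the pieces. Since there are polynomially many candidate observable transitions and each is decided by an $\NP$ procedure (guess $q_5,q_6$, then solve one EPL instance), and since $\NP$ is closed under polynomially many parallel nondeterministic checks, the full transition relation $\dt'$ can be produced by a single nondeterministic polynomial-time computation. I expect the main obstacle to be a rigorous justification of the weight-encoding in step~\ref{item17_diag_LMPautomata}: one must verify that the $1$-dimensional weights assigned to the graph $\CCauo$ genuinely track $\WEI_{\pi_1}-\WEI_{\pi_2}$ along any unobservable run, including the subtlety that in $\underQ$ multiplication is ordinary addition and the required multiplicative inverses (the $-\mu(e_2)_{q_6q_4}$ terms) exist only because $\underQ$ is a group under $\otimes$; this is exactly the point flagged in the footnote of Definition~\ref{def_CC_diag_LMPautomata}. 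Once that correspondence is confirmed, the reduction to EPL and the $\NP$ bound follow immediately.
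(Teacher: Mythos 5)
Your proposal is correct and follows essentially the same route as the paper: the paper likewise computes the unobservable transitions deterministically in polynomial time and decides each candidate observable transition by the three-step procedure \eqref{item16_diag_LMPautomata}--\eqref{item18_diag_LMPautomata}, i.e., guessing $q_5,q_6$ and solving one EPL instance on $\CCauo(\Acal^{\underQ}_{\fsf},\Acal^{\underQ}_{\nsf})$ with target weight $\mu(e_2)_{q_6q_4}-\mu(e_1)_{q_5q_3}$, invoking Lemma~\ref{lem1_det_MPautomata}. Your added remarks---that there are only polynomially many candidate transitions and that the signed $1$-dimensional weights track $\WEI_{\pi_1}-\WEI_{\pi_2}$ because $\otimes$ in $\underQ$ is ordinary addition---are exactly the (implicit) justifications the paper relies on.
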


In order to characterize $E_f$-diagnosability, we define several special kinds of transitions 
and runs in $\CCa(\Acal^{\frakD}_{\fsf},\Acal^{\frakD}_{\nsf})$.

\begin{definition}\label{def1_diag_LMPautomata} 
	Consider an observable transition $(q_1,q_2)\\\xrightarrow[]{(e_1,e_2)}(q_3,q_4)=:\pi_o$ of $\CCa(\Acal^
	{\frakD}_{\fsf},\Acal^{\frakD}_{\nsf})$ as in 
	Definition~\ref{def_CC_diag_LMPautomata} \eqref{item1_diag_LMPautomata}, a path as $\pi_1$ or $\pi_2$
	in \eqref{eqn6_diag_LMPautomata} is called an \emph{admissible path} of $\pi_o$.
	Transition $\pi_o$ is called \emph{faulty} if it has an admissible path $\pi_1$ containing a
	faulty event, called \emph{dominant} if it has an admissible path $\pi_1$ such that $\WEI_{\pi_1}$
	is eventually dominant.
	A run $q_0'\xrightarrow[]{e_1'}\cdots\xrightarrow[]{e_n'}q_n'$ of $\CCao(\Acal^{\frakD}_{\fsf},\Acal^{\frakD}_{\nsf})$ 
	is called
	\emph{dominant} if each transition $q_i'\xrightarrow[]{e_{i+1}'}q_{i+1}'$ has an admissible path
	$\pi_i$ such that $\bigotimes_{i=0}^{n-1}\WEI_{\pi_i}$ is eventually dominant.
	Hence a dominant observable transition is a dominant run, a run 
	consisting of dominant observable transitions is a dominant run by 
	Lemma~\ref{lem1_diag_LMPautomata}.
	An unobservable transition of $\CCa(\Acal^{\frakD}_{\fsf},\Acal^{\frakD}_{\nsf})$ is \emph{faulty} if
	its event belongs to $E_f\times\{\ep\}$.
\end{definition}

In order to check in $\CCao(\Acal^{\underQ}_{\fsf},\Acal^{\underQ}_{\nsf})$, whether an observable
transition $\pi_o$
is faulty, we only need to add a condition $E_f\in s_1e_1$ into \eqref{item18_diag_LMPautomata}
so that we need to solve a subproblem of the EPL problem. Hence whether 
$\pi_o$ is faulty can be checked in $\NP$ in the size of 
$\Acal^{\underQ}$. By definition, $\pi_o$ is dominant if and only if it admits an admissible path whose 
weight is positive, because positive numbers in $\Q$ are exactly the eventually dominant elements in $\underQ$.
Then similarly, whether $\pi_o$ is dominant can also be checked in $\NP$.
Consider a simple cycle $q_0'\xrightarrow[]{e_1'}\cdots\xrightarrow[]{e_n'}q_n'$
of $\CCao(\Acal^{\underQ}_{\fsf},\Acal^{\underQ}_{\nsf})$, where $q_0'=q_n'$. In order to check whether
the cycle is dominant, we can make $n$ copies of $\CCauo(\Acal^{\underQ}_{\fsf},\Acal^{\underQ}_{\nsf})$,
and use these copies and $\Acal^{\underQ}$ to check whether for each $i\in\llb 0,n-1 \rrb$, $q_i'\xrightarrow[]
{e_{i+1}'}q_{i+1}'$ has an admissible run $\pi_1^i$ such that $\sum_{i=0}^{n-1} \WEI_{\pi_1^i}>0$.
This can be done also in $\NP$, because $n$ must be less than or equal to $|Q|^2$.

\begin{proposition}\label{prop3_diag_LMPautomata}
	In $\CCao(\Acal^{\underQ}_{\fsf},\Acal^{\underQ}_{\nsf})$, whether a transition is faulty, whether 
	a transition is dominant, whether a simple cycle is dominant, can be verified in $\NP$ in the size
	of $\Acal^{\underQ}$.
\end{proposition}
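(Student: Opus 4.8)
The plan is to reduce each of the three verification tasks to instances (or slight variants) of the EPL problem, which belongs to $\NP$ by Lemma~\ref{lem1_det_MPautomata}, and to observe that the construction sizes remain polynomial in the size of $\Acal^{\underQ}$. The key structural fact I would rely on is the characterization of observable transitions already given in the steps \eqref{item16_diag_LMPautomata}--\eqref{item18_diag_LMPautomata}: an observable transition $((q_1,q_2),(e_1,e_2),(q_3,q_4))$ exists iff, after nondeterministically guessing intermediate states $q_5,q_6$ with the right final observable transitions, there is a path in the weighted graph $\CCauo(\Acal^{\underQ}_{\fsf},\Acal^{\underQ}_{\nsf})$ from $(q_1,q_2)$ to $(q_5,q_6)$ of a prescribed rational weight. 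Since the admissible paths of an observable transition are exactly these unobservable paths extended by a final observable event, any property of the weight or the event content of an admissible path can be encoded as a side condition on such an EPL instance.

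First I would handle \textbf{faultiness} of a transition $\pi_o$. As the excerpt already sketches, I augment the EPL check by requiring that the guessed admissible path $\pi_1$ on the faulty side contains some faulty event, i.e. $E_f\in s_1e_1$. Concretely I would split the vertex set of the graph into a "pre-fault" and a "post-fault" copy, forcing the guessed path to traverse at least one edge corresponding to a transition in $E_f\times\{\ep\}$ (or take $e_1\in E_f$), while preserving the exact weight requirement $\mu(e_2)_{q_6q_4}-\mu(e_1)_{q_5q_3}$. This is still an exact-weight reachability question on a graph of size polynomial in $|Q|$ and $\size(\mu)$, so by Lemma~\ref{lem1_det_MPautomata} it lies in $\NP$. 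For \textbf{positivity of a single transition} the modification is even simpler: I drop the exact-equality constraint and instead require the left admissible path to satisfy $\WEI_{\pi_1}\succ{\bf1}$, which over $\underQ$ means its (additive) weight is strictly positive; this is a reachability-with-weight-inequality question, again solvable in $\NP$ by guessing the path and verifying the sign of its rational weight.

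Next I would treat \textbf{positivity of a simple cycle} $q_0'\xrightarrow[]{e_1'}\cdots\xrightarrow[]{e_n'}q_n'$ with $q_0'=q_n'$. Following the excerpt, the definition requires admissible paths $\pi_1^i$ for each of the $n$ constituent observable transitions whose \emph{total} left weight $\sum_{i=0}^{n-1}\WEI_{\pi_1^i}$ is strictly positive. Here the bound $n\le|Q|^2$ is essential: a simple cycle in $\CCao$ visits each of the at most $|Q|^2$ states at most once, so I can build a single composite graph consisting of $n$ (hence polynomially many) disjoint copies of $\CCauo(\Acal^{\underQ}_{\fsf},\Acal^{\underQ}_{\nsf})$, wire the copies together along the prescribed observable events, and ask whether there is a traversal realizing the cycle whose accumulated weight is positive. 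Because each observable transition fixes the \emph{difference} between the faulty-side and normal-side weights, but leaves the individual magnitudes free, I must track the left-side weights explicitly; this is again an exact/inequality path-length problem on a polynomially sized graph and therefore in $\NP$.

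The main obstacle I anticipate is bookkeeping rather than genuine difficulty: I must be careful that the nondeterministic guesses of the intermediate states $q_5,q_6$ (and, for the cycle, all $n$ pairs) together with the guessed unobservable paths can be certified in polynomial time, and that the weight side conditions are expressed as a genuine EPL instance (exact rational target, or strict inequality reducible to exactness by introducing a slack-weight edge). A subtle point is the strict inequality $\succ{\bf1}$ versus the non-strict $\succeq$: over $\underQ$ I would realize strictness either by testing weight $>0$ directly on the guessed certificate, or by the standard trick of checking weight $=c$ for a guessed positive rational $c$ bounded by the total weight of the automaton. Once these encodings are in place, $\NP$-membership of all three tasks follows immediately from Lemma~\ref{lem1_det_MPautomata}, and polynomiality of the reductions is clear from the size bounds on $\CCauo(\Acal^{\underQ}_{\fsf},\Acal^{\underQ}_{\nsf})$ and the fact that $n\le|Q|^2$.
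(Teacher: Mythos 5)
Your overall plan coincides with the paper's: the paper checks faultiness by adding the condition $E_f\in s_1e_1$ to the EPL-based transition test of steps \eqref{item16_diag_LMPautomata}--\eqref{item18_diag_LMPautomata} (your pre-fault/post-fault layering is a concrete implementation of exactly that), and it checks cycle positivity by taking $n\le|Q|^2$ copies of $\CCauo(\Acal^{\underQ}_{\fsf},\Acal^{\underQ}_{\nsf})$ and verifying that the accumulated left weights sum to a positive value, which is your composite-graph construction. There is, however, one step that is wrong as literally stated: for positivity of a single transition you say you ``drop the exact-equality constraint'' and only require $\WEI_{\pi_1}\succ{\bf1}$. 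You cannot drop it. By Definition~\ref{def1_diag_LMPautomata}, an admissible path of $\pi_o$ is one member of a pair $(\pi_1,\pi_2)$ as in \eqref{eqn6_diag_LMPautomata}, and such a pair must satisfy $\WEI_{\pi_1}=\WEI_{\pi_2}$ by Definition~\ref{def_CC_diag_LMPautomata}\eqref{item1_diag_LMPautomata}. Dropping the equality tests for an arbitrary positive-weight left path together with an unrelated right path; this can answer ``positive'' even when every admissible pair of $\pi_o$ has nonpositive weight, i.e., it yields false positives. The correct check keeps the exact-difference constraint (path weight in $\CCauo(\Acal^{\underQ}_{\fsf},\Acal^{\underQ}_{\nsf})$ equal to $\mu(e_2)_{q_6q_4}-\mu(e_1)_{q_5q_3}$) \emph{and} additionally constrains the left-side weight to be positive --- precisely the two-coordinate bookkeeping you yourself set up for the cycle case, where you correctly observe that the observable transition fixes only the difference and the individual magnitudes must be tracked separately. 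With that repair the single-transition test becomes the same kind of two-dimensional exact/inequality path problem as your cycle test and lies in $\NP$.

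A secondary, smaller issue concerns your two justifications for strictness. ``Guessing the path and verifying the sign of its weight'' silently assumes a polynomial-length witness, which is not automatic: paths realizing a prescribed exact weight may a priori be long, and the $\NP$-membership of EPL (Lemma~\ref{lem1_det_MPautomata}) is exactly what licenses short certificates here, so the inequality should be routed through an EPL instance with a guessed exact positive rational target $c$. But your proposed bound on $c$ --- ``bounded by the total weight of the automaton'' --- is false, since a witness path may traverse edges repeatedly and its left weight can exceed the sum of all transition weights; the polynomial size bound on $c$ again comes from the small-witness property underlying Lemma~\ref{lem1_det_MPautomata}, not from the automaton's total weight. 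To be fair, the paper itself is terse on these points, merely declaring each variant ``a subproblem of the EPL problem'' checkable in $\NP$; your write-up attempts more detail, which is welcome, but where details are supplied they must be consistent with the admissibility definition and with the actual source of the witness bounds.
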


\begin{definition}\label{def2_diag_LMPautomata} 
	In $\Acal^{\frakD}$, a state $q$ is called \emph{dominant} (resp., \emph{anti-dominant})
	if there exists an unobservable cycle $q\xrightarrow[]{s}q=:\pi$ such that $\WEI_{\pi}$ is eventually
	dominant (resp., eventually anti-dominant). Such a cycle is called a
	\emph{dominant cycle} (resp., \emph{anti-dominant cycle}).
	A state $q$ is called \emph{eventually dominant} if either $q$ is dominant
	or some dominant state $q'$ is reachable from $q$ through some unobservable path.
\end{definition}

For $\Acal^{\underQ}$, we next show that all dominant states 
can be computed in time polynomial in the size of $\Acal^{\underQ}$: (1) Compute all reachable unobservable
transitions of $\Acal^{\underQ}$, denote the obtained subautomaton by $\Acc(\Acal^{\underQ}_{uo})$,
which is considered as a weighted directed graph $(\Q,V,A)$.
(2) Compute all strongly connected components of $(\Q,V,A)$ by using Tarjan algorithm,
which takes time linear in the size of $(\Q,V,A)$.
(3) Choose a vertex $q$ of $(\Q,V,A)$ and the strongly connected component 
$SCC_{q}$ containing $q$. In $SCC_{q}$, replace the weight $w$ of any transition
by $-w$, denote the currently updated $SCC_{q}$ by $\overline{SCC_{q}}$.
Then $q$ is dominant if and only if in $\overline{SCC_{q}}$ there is a cycle with
negative weight. One can use Bellman-Ford algorithm \cite[Chap. 24]{Cormen2009Algorithms} to
check whether there is a cycle with negative weight reachable from $q$ in time $O(|V||A|)$.
Similarly, all anti-dominant states can also be computed in polynomial time.

\begin{proposition}\label{prop4_diag_LMPautomata}
	In $\Acal^{\underQ}$, whether a state is dominant (resp., anti-dominant)
	can be verified in time polynomial in the size of $\Acal^{\underQ}$.
\end{proposition}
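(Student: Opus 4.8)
The plan is to reduce the question to positive/negative-weight cycle detection on a directed graph and then invoke standard shortest-path machinery. First I would record what $0$-cruciality means concretely over $\underQ$: since ${\bf1}=0$, $\succ$ is the usual $>$, and the weight of an unobservable path is the ordinary sum of its rational edge weights, Definition~\ref{def2_diag_LMPautomata} says that $q$ is $0$-crucial (resp.\ anti-$0$-crucial) exactly when the \emph{unobservable subgraph} $G_{uo}$ of $\Acal^{\underQ}$---the weighted digraph with vertex set $Q$ and one edge $(p,\mu(e)_{pp'},p')$ per unobservable transition $(p,e,p')\in\Delta$---contains a closed walk through $q$ of strictly positive (resp.\ strictly negative) total weight. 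Building $G_{uo}$ takes time $O(|Q|+|\Delta|)$ and its size is polynomial in $\size(\Acal^{\underQ})$.

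The key step I would isolate is the combinatorial equivalence: $q$ is $0$-crucial if and only if the strongly connected component $S_q$ of $q$ in $G_{uo}$ contains a simple cycle of positive weight (and symmetrically for the negative case). For the forward direction, note that every vertex visited by a closed walk at $q$ is both reachable from $q$ and able to reach $q$ within $G_{uo}$, so the walk lies entirely in $S_q$; by repeatedly excising a cycle at the first repeated vertex, the walk decomposes into simple cycles whose weights sum to that of the walk, so a positive walk forces a positive simple cycle inside $S_q$. For the converse, strong connectivity of $S_q$ supplies, for any vertex $v$ on a positive simple cycle $C$, an unobservable path from $q$ to $v$ and one from $v$ to $q$; traversing $C$ sufficiently many times between them yields a closed walk at $q$ of arbitrarily large weight, hence a positive one (this is exactly the repetition phenomenon noted in the remark following Definition~\ref{def2_diag_LMPautomata}). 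The anti-$0$-crucial case is identical with ``negative'' replacing ``positive''. I expect the excision/decomposition argument to be the one genuinely nontrivial point; everything after it is bookkeeping.

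With the equivalence in hand, the algorithm is immediate. I would compute the strongly connected components of $G_{uo}$ in linear time (Tarjan's algorithm), then, within $S_q$, negate every edge weight and run Bellman--Ford from an arbitrary vertex to test for a negative-weight cycle; a negative cycle after negation is precisely a positive cycle before it, so this decides $0$-cruciality, while running Bellman--Ford on the unnegated weights decides anti-$0$-cruciality. Because positivity of a cycle is a property of its component, this in fact classifies all states in one pass, component by component, in total time $O(|V|\,|A|)$ \cite[Chap.~24]{Cormen2009Algorithms}. The only complexity subtlety is bit-size: each Bellman--Ford distance label is a sum of at most $|V|-1$ edge weights, hence a rational of size polynomial in $\size(\Acal^{\underQ})$, so every addition and comparison is polynomial and the whole procedure runs in time polynomial in $\size(\Acal^{\underQ})$, as claimed.
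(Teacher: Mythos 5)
Your proposal is correct and takes essentially the same route as the paper: both reduce the question to the unobservable weighted digraph, compute strongly connected components with Tarjan's algorithm, negate the edge weights inside the component of $q$, and detect a negative-weight cycle with Bellman--Ford in $O(|V||A|)$ time. The only difference is that you explicitly justify the key equivalence (``$q$ is $0$-crucial iff its strongly connected component contains a positive simple cycle,'' via the closed-walk decomposition and cycle-repetition arguments) and the polynomial bit-size of the Bellman--Ford labels, both of which the paper asserts without proof---welcome additions rather than deviations.
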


Furthermore, we need one property of $\CCauo(\Acal^{\frakD}_{\fsf},
\Acal^{\frakD}_{\nsf})$ in order to characterize $E_f$-diagnosability.

\begin{proposition}\label{prop1_diag_LMPautomata}
	Consider a labeled $\frakD$-automaton $\Acal^{\frakD}$. In the corresponding $\CCauo(\Acal^{\frakD}_{\fsf},
	\Acal^{\frakD}_{\nsf})$, for every run
	\begin{equation}\label{eqn1_diag_LMPautomata}
		q_0'\xrightarrow[]{s_1'}\cdots\xrightarrow[]{s_n'}q_n',
	\end{equation}
	where $n\in\Z_{+}$, $s_1',\dots,s_n'\in (E'_{uo})^+$, there exists a run
	\begin{equation}\label{eqn2_diag_LMPautomata}
		q_0'\xrightarrow[]{\bar s_1'}\bar q_1'\xrightarrow[]{\bar s_2'}q_n'
	\end{equation}
	such that $\bar s_1'\in (E_{uo}\times\{\ep\})^*$, $\bar s_2'\in (\{\ep\}\times(E_{uo}\setminus E_f))^*$,
	$q_0'(R)=\bar q_1'(R)$, $\bar q_1'(L)=q_n'(L)$,
	and the left (resp., right) component of \eqref{eqn1_diag_LMPautomata} is the same as 
	the left (resp., right) component of \eqref{eqn2_diag_LMPautomata}.
\end{proposition}

We refer the reader to \cite{Zhang2021DiagWeightedAutoMonoidCDC} for its proof.

\begin{example}
	Reconsider the automaton $\Acal_1^{\underN}$ in Example~\ref{exam2_diag_LMPautomata} (shown in 
	Fig.~\ref{fig3_diag_LMPautomata}). 
	The concurrent composition $\CCa(\Acal_{1\fsf}^{\underN},\Acal_{1\nsf}^{\underN})$
	is shown in Fig.~\ref{fig5_diag_LMPautomata}.
	\begin{figure}[!htbp]
	\centering
	 \begin{tikzpicture}
	[>=stealth',shorten >=1pt,thick,auto, node distance=3.0 cm, scale = 0.8, transform shape,
	->,>=stealth,inner sep=2pt]%,

	\tikzstyle{emptynode}=[inner sep=0,outer sep=0]

	\node[initial, state, initial where = above] (00) {$q_0q_0$};
	\node[state] (34) [left of = 00] {$q_3q_4$};
	\node[state] (12) [below of = 34] {$q_1q_2$};
	\node[state] (10) [right of = 12] {$q_1q_0$};
	\node[state] (02) [left of = 34] {$q_0q_2$};
	
	\path [->]
	(00) edge node {$({\cyan f},\ep)/3$} (10)
	
	(10) edge node {$(\ep,u)/-1$} (12)
	;

	\path [->]
	
	(12) edge [loop left] node {$\begin{array}{l}(u,\ep)/1 \\ (\ep,u)/-1\end{array}$} (12)
	
	(10) edge [loop below] node {$(u,\ep)/1$} (10)

	(02) edge [loop above] node {$(\ep,u)/-1$} (12)
	(02) edge node [above, sloped] {$({\cyan f},\ep)/3$} (12)
	;

	\path [->]
	(00) edge node [above] {$(a,a)/0$} (34)
	;

	\path [->]
	(12) edge node [above, sloped] {$(a,a)/0$} (34)
	(10) edge node [above, sloped] {$(a,a)/0$} (34)
	(02) edge node [above, sloped] {$(a,a)/0$} (34)
	(34) edge [loop above] node {$(a,a)/0$} (34)
	;

	\draw [->] (00) .. controls  (-3,2.5) .. (02);
	\node at (-3,2.3) {$(\ep,u)/-1$};
    \end{tikzpicture}
	\caption{$\CCa(\Acal_{1\fsf}^{\underN},\Acal_{1\nsf}^{\underN})$ corresponding to
	 automaton $\Acal_1^{\underN}$ in Fig.~\ref{fig3_diag_LMPautomata}.}
	\label{fig5_diag_LMPautomata}
	\end{figure}
\end{example}

\section{{A necessary and sufficient condition for diagnosability and algorithmically implementable results}}
\label{sec:Verification_LWAD}

In this section, we give a necessary and sufficient condition for $E_f$-diagnosability of a labeled
$\frakDp$-automaton, and prove that the $E_f$-diagnosability 
verification problem for $\Acal^{\underline{\Q^k}}$ over vector progressive dioid 
$\underline{\Q^k}$~\eqref{eqn24_diag_LMPautomata} is $\coNP$-complete.

\subsection{A necessary and sufficient condition for \texorpdfstring{$E_f$}{Ef}-diagnosability of \texorpdfstring{$\mathcal{A}^{\mathfrak{D_p}}$}{AD}}

In this subsection, for a labeled $\frakDp$-automaton $\Acal^{\frakDp}$,
we use the corresponding concurrent composition $\CCa(\Acal^{\frakDp}_{\fsf},\Acal^{\frakDp}_{\nsf})$ %\eqref{CC_diag_LMPautomata} 
to give a necessary and sufficient condition for its $E_f$-diagnosability (as in 
Definition~\ref{def'_diag_LMPautomata}). By Lemma~\ref{lem5_diag_LMPautomata}, we assume without loss of 
generality that $\Acal^{\frakDp}$ is stuck-free.

\begin{theorem}\label{thm1_diag_LMPautomata}
  A labeled $\frakDp$-automaton $\Acal^{\frakDp}$ %\eqref{LMPA_diag_opa_MPautomata}.
	is not $E_f$-diagnosable 
	if and only if in the corresponding concurrent composition $\CCa(\Acal^{\frakDp}_{\fsf},\Acal^{\frakDp}_{\nsf})$,
	%\eqref{CC_diag_LMPautomata},
	at least one of the following four conditions holds.
	\begin{myenumerate}%[(i)]
		\item\label{item6_diag_LMPautomata} 
			There exists a path
			\begin{align}\label{eqn7_diag_LMPautomata}
				q_0'\xrightarrow[]{s_1'}q_1'\xrightarrow[]{e_{2}'}q_2'\xrightarrow[]{s_{3}'}q_{3}',
			\end{align}
			where $q_0'\in Q_0'$, $s_1',s_3'\in(E_o')^*$, $e_2'\in E_o'$, $q_1'\xrightarrow[]
			{e_{2}'}q_{2}'$ is faulty, and $q_{3}'$ belongs to a dominant transition cycle
			consisting of observable transitions. 
		\item\label{item7_diag_LMPautomata}
			There exists a path
			\begin{align}\label{eqn8_diag_LMPautomata}
				q_0'\xrightarrow[]{s_1'}q_1'\xrightarrow[]{e_{2}'}q_2'\xrightarrow[]{s_{3}'}q_{3}',
			\end{align}
			where $q_0'\in Q_0'$, $s_1',s_3'\in(E_o')^*$, $e_2'\in E_o'$, $q_1'\xrightarrow[]
			{e_{2}'}q_{2}'$ is faulty, $q_{3}'(L)$ is eventually dominant
			in $\Acal^{\frakDp}_{\fsf}$, and $q_3'(R)$ is eventually dominant in $\Acal^{\frakDp}_{\nsf}$.
		\item\label{item8_diag_LMPautomata}
			There exists a path
			\begin{align}\label{eqn9_diag_LMPautomata}
				q_0'\xrightarrow[]{s_1'}q_1'\xrightarrow[]{s_{2}'}q_2'\xrightarrow[]{e_{3}'}q_{3}',
			\end{align}
			where $q_0'\in Q_0'$, $s_1'\in(E_o')^*$, $s_2'\in(E_{uo}')^*$, $e_3'\in E_f\times\{\ep\}$, 
			$q_{3}'(L)$ is eventually dominant
			in $\Acal^{\frakDp}_{\fsf}$, and $q_3'(R)$ is eventually dominant in $\Acal^{\frakDp}_{\nsf}$.
		\item\label{item27_diag_LMPautomata}  
			There exists a path
			\begin{align}\label{eqn22_diag_LMPautomata} 
				q_0'\xrightarrow[]{s_1'}q_1'\xrightarrow[]{s_{2}'}q_2'\xrightarrow[]{e_{3}'}q_{3}',
			\end{align}
			where $q_0'\in Q_0'$, $s_1'\in(E_o')^*$, $s_2'\in(E_{uo}')^*$, $e_3'\in E_f\times\{\ep\}$, 
			$q_{3}'(L)$ is eventually dominant
			in $\Acal^{\frakDp}_{\fsf}$, and some state in $q_1'(L)\xrightarrow[]{s_{2}'(L)}q_2'(L)$ is anti-dominant
			in $\Acal_{\fsf}^{\frakDp}$.
	\end{myenumerate}
\end{theorem}

\begin{proof}
	%The proof of this result is put into Appendix in order not to delay the statement at this point.
	The proof of this result is put in Appendix~\ref{sec:proofThm3.12}.
\end{proof}

\begin{corollary}\label{cor1_diag_LMPautomata}
	A labeled ${\undernonnegQ}$-automaton $\Acal^{\undernonnegQ}$ is not $E_f$-diagnosable if and only if in the
	corresponding concurrent composition $\CCa(\Acal^{\undernonnegQ}_{\fsf},\Acal^{\undernonnegQ}_{\nsf})$,
	at least one of \eqref{item6_diag_LMPautomata}, \eqref{item7_diag_LMPautomata}, and 
	\eqref{item8_diag_LMPautomata} adapted to $\Acal^{\undernonnegQ}$ holds (\eqref{item27_diag_LMPautomata} 
	must not hold because $\Acal^{\undernonnegQ}$ contains no transition with negative weight).
\end{corollary}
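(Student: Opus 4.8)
The plan is to apply Theorem~\ref{thm1_diag_LMPautomata} directly to the progressive dioid $\undernonnegQ$ and then argue that its fourth condition degenerates. Since $\undernonnegQ$ is progressive (recorded just after Corollary~\ref{cor1_diag_LMPautomata}) and its multiplicative identity is ${\bf1}=0$ with canonical order the usual $\le$, Theorem~\ref{thm1_diag_LMPautomata} already tells us that $\Acal^{\undernonnegQ}$ fails to be $E_f$-diagnosable exactly when one of the four conditions \eqref{item6_diag_LMPautomata}--\eqref{item27_diag_LMPautomata} holds in $\CCa(\Acal^{\undernonnegQ}_{\fsf},\Acal^{\undernonnegQ}_{\nsf})$. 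Because ${\bf1}=0$ here, each occurrence of ``${\bf1}$-crucial'' (resp.\ ``anti-${\bf1}$-crucial'') in those conditions is literally ``$0$-crucial'' (resp.\ ``anti-$0$-crucial''), so the first three coincide with \eqref{item11_diag_LMPautomata}, \eqref{item12_diag_LMPautomata}, \eqref{item13_diag_LMPautomata} as adapted in the statement. It therefore remains only to show that the fourth condition \eqref{item27_diag_LMPautomata} can never be satisfied for $\Acal^{\undernonnegQ}$.

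For the key step I would unfold the definition of anti-$k$-crucial (Definition~\ref{def2_diag_LMPautomata}): a state $q$ of $\Acal^{\undernonnegQ}_{\fsf}$ is anti-$0$-crucial only if there is an unobservable cycle $q\xrightarrow[]{s}q=:\pi$ with $\WEI_{\pi}\prec 0$, i.e.\ with strictly negative weight. In $\undernonnegQ$ multiplication is ordinary addition, so $\WEI_{\pi}$ equals the sum of the weights of the transitions composing $\pi$; since every transition of $\Acal^{\undernonnegQ}$, and hence of the subautomaton $\Acal^{\undernonnegQ}_{\fsf}$, carries a weight in $\Q_{\ge0}$, every such sum is $\ge 0$. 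Consequently no unobservable cycle can have negative weight, no state is anti-$0$-crucial, and condition \eqref{item27_diag_LMPautomata}---which requires an anti-$0$-crucial state on the left component of the path---is vacuously false. Combining this with the previous paragraph yields the stated equivalence: $\Acal^{\undernonnegQ}$ is not $E_f$-diagnosable iff one of \eqref{item11_diag_LMPautomata}--\eqref{item13_diag_LMPautomata} holds.

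I do not expect a genuine obstacle, since the result is a pure specialization of the already-proven Theorem~\ref{thm1_diag_LMPautomata}; the only points deserving care are (i) recording that ${\bf1}=0$ so that the general ${\bf1}$-crucial conditions become the $0$-crucial conditions of the corollary, and (ii) being explicit that the elimination of \eqref{item27_diag_LMPautomata} is a statement about the original faulty subautomaton $\Acal^{\undernonnegQ}_{\fsf}$, whose weights are all nonnegative, and \emph{not} about the concurrent composition itself, which can still carry negative weights arising from the $(\ep,e_{uo})$-transitions of Definition~\ref{def_CC_diag_LMPautomata}~\eqref{item3_diag_LMPautomata}.
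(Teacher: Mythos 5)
Your proposal is correct and takes essentially the same route as the paper, which proves the corollary only through the parenthetical remark in its statement: specialize Theorem~\ref{thm1_diag_LMPautomata} (equivalently Theorem~\ref{thm2_diag_LMPautomata}) to the progressive dioid $\undernonnegQ$ with ${\bf1}=0$, and observe that since every transition weight of $\Acal^{\undernonnegQ}_{\fsf}$ lies in $\Q_{\ge0}$, no unobservable cycle can have weight $\prec 0$, so no anti-$0$-crucial state exists and condition \eqref{item28_diag_LMPautomata} is vacuously false. Your explicit caveat that this elimination concerns the faulty subautomaton $\Acal^{\undernonnegQ}_{\fsf}$ rather than the concurrent composition (whose unobservable transitions built from Definition~\ref{def_CC_diag_LMPautomata}~\eqref{item3_diag_LMPautomata} may carry negative weights) is accurate and consistent with the paper's intent.
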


\begin{example}\label{exam4_diag_LMPautomata}
	Reconsider the automaton $\Acal_1^{\underN}$ in Example~\ref{exam2_diag_LMPautomata} (shown in 
	Fig.~\ref{fig3_diag_LMPautomata}). In its concurrent composition $\CCa(\Acal_{1\fsf}^{\underN},
	\Acal_{1\nsf}^{\underN})$ (in Fig.~\ref{fig5_diag_LMPautomata}), the observable transition $(q_0,q_0)
	\xrightarrow[]{(a,a)}(q_3,q_4)$ is faulty, because it has two admissible paths $q_0\xrightarrow[]{\cyan f}
	q_1\xrightarrow[]{a}q_3$ and $q_0\left( \xrightarrow[]{u}q_2 \right)^3\xrightarrow[]{a}q_4$ both 
	with weight $4$; moreover, the observable self-loop $(q_3,q_4)\xrightarrow[]{(a,a)}(q_3,q_4)$
	is dominant, because it has two admissible paths $q_3\xrightarrow[]{a}q_3$ and $q_4\xrightarrow[]{a}q_4$
    both with positive weight $1$. Hence item~\eqref{item6_diag_LMPautomata} of Theorem~\ref{thm1_diag_LMPautomata}
	is satisfied, $\Acal_1^{\underN}$ is not $\cyan\{f\}$-diagnosable, which is consistent with the result 
	obtained in Example~\ref{exam3_diag_LMPautomata}. Item~\eqref{item7_diag_LMPautomata} is not satisfied.
	Item~\eqref{item8_diag_LMPautomata} is satisfied, because in path $(q_0,q_0)\xrightarrow[]{({\cyan f},\ep)}
	(q_1,q_0)$, $q_1$ is dominant in $\Acal_{1\fsf}^{\underN}$ (because of the unobservable self-loop on
	$q_1$ with positive weight) and $q_0$ is eventually dominant in $\Acal_{1\nsf}^{\underN}$ (because of 
	the unobservable path $q_0\xrightarrow[]{u}q_2\xrightarrow[]{u}q_2$). \eqref{item27_diag_LMPautomata}
	is not satisfied. %because $\Acal_1^{\underN}$ contains no transition with negative weight.
\end{example}

Next we give an example for which only \eqref{item27_diag_LMPautomata} of Theorem~\ref{thm1_diag_LMPautomata}
is satisfied.

\begin{example}\label{exam5_diag_LMPautomata}
	Consider a second stuck-free labeled $\underQ$-automaton $\Acal_2^{\underQ}$ shown in 
	Fig.~\ref{fig6_diag_LMPautomata}. One easily sees that in $\CCa(\Acal_{2\fsf}^{\underQ},\Acal_{2\nsf}^{\underQ})$,
	there is no observable transition. Then neither \eqref{item6_diag_LMPautomata} nor \eqref{item7_diag_LMPautomata}
	is satisfied. \eqref{item8_diag_LMPautomata} is not satisfied either, because in $\Acal_{2\nsf}^{\underQ}$
	there is no reachable eventually dominant state. \eqref{item27_diag_LMPautomata} is satisfied, because in
	path $q_0\xrightarrow[]{u}q_1 \xrightarrow[]{u}q_1 \xrightarrow[]{{\cyan f}}q_2$, $q_1 \xrightarrow[]{u}q_1$
	is an anti-dominant cycle, $q_2$ is dominant. Hence $\Acal_2^{\underQ}$ is not $\cyan \{f\}$-diagnosable.
	Moreover, choose paths 
	\begin{align*}
		q_0\xrightarrow[]{u}q_1\left( \xrightarrow[]{u}q_1 \right)^{t+2} \xrightarrow[]{{\cyan f}}q_2 =: {\red\pi_t},\\
		q_2\left( \xrightarrow[]{u}q_2 \right)^{t+1} =: {\green\pi_t'},\\
		q_0\xrightarrow[]{u}q_1 =: {\blue\pi_t''},
	\end{align*}
	where $t\in \Z_+$. Then one has
	\begin{align*}
		&\ell(\tau({\red\pi_t}{\green\pi_t'})) = \ell(\tau({\blue\pi_t''})) = \ep,\\
		&\WEI_{{\green\pi_t'}} = t+1 > t,\\
		&\WEI_{{\blue\pi_t''}} = 0 > \WEI_{{\red\pi_t}} + t = -2.
	\end{align*}
	By Proposition~\ref{prop2_diag_LMPautomata}, one also has $\Acal_2^{\underQ}$ is not $\cyan \{f\}$-diagnosable.
	
	\begin{figure}[!htbp]
	\centering
	\begin{tikzpicture}
	[>=stealth',shorten >=1pt,thick,auto, node distance=2.5 cm, scale = 0.8, transform shape,
	->,>=stealth,inner sep=2pt, initial text = 0]

	\tikzstyle{emptynode}=[inner sep=0,outer sep=0]

	\node[initial, state, initial where = above] (q0) {$q_0$};
	\node[state] (q1) [right of = q0] {$q_1$};
	\node[state] (q2) [right of = q1] {$q_2$};
	\node[state] (q3) [right of = q2] {$q_3$};

	\path [->]
	(q0) edge node [above, sloped] {$u/0$} (q1)
	(q1) edge [loop above] node [above, sloped] {$u/-1$} (q1)
	(q1) edge node [above, sloped] {${\cyan f}/0$} (q2)
	(q2) edge [loop above] node [above, sloped] {$u/1$} (q2)
	(q2) edge node [above, sloped] {$u/0$} (q3)
	(q3) edge [loop above] node [above, sloped] {$a/1$} (q3)
	;

     \end{tikzpicture}
	 \caption{A labeled $\underQ$-automaton $\Acal_2^{\underQ}$, where $\ell(a)=a$,
	 $\ell({\cyan f})=\ell(u)=\ep$, only $\cyan f$ is faulty.}
	 \label{fig6_diag_LMPautomata} 
	 \end{figure}
\end{example}

\subsection{The complexity of verifying diagnosability of labeled}
%\texorpdfstring{$\underline{\mathbb{Q}}$}{Q}-automata
%\texorpdfstring{$\mathcal{A}^{\underline{\mathbb{Q}}}$}{AunderQ}}

We give the following complexity result on verifying $E_f$-diagnosability of 
$\Acal^{\underline{\mathbb{Q}}}$.

\begin{theorem}\label{thm3_diag_LMPautomata}
	The problem of verifying $E_f$-diagnosability of a labeled $\underQ$-automaton $\Acal^{\underQ}$ 
	is $\coNP$-complete, where $\coNP$-hardness even holds for deterministic, deadlock-free, and
	divergence-free $\Acal^{\underN}$.
\end{theorem}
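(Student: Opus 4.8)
The plan is to prove $\coNP$-completeness by treating the two directions separately, and in both directions to work with \emph{negation} of $E_f$-diagnosability via the characterization of Theorem~\ref{thm2_diag_LMPautomata}. For membership I would show that non-diagnosability of $\Acal^{\underQ}$ lies in $\NP$, so that diagnosability lies in $\coNP$. The states of $\CCa(\Acal^{\underQ}_{\fsf},\Acal^{\underQ}_{\nsf})$ form the polynomial-size set $Q\times Q$, and its unobservable part $\CCauo(\Acal^{\underQ}_{\fsf},\Acal^{\underQ}_{\nsf})$ can be built explicitly in polynomial time. A witness for any of the four conditions of Theorem~\ref{thm2_diag_LMPautomata} may be taken to consist of simple paths and a single simple cycle, hence of length at most polynomial in $|Q|$; the only genuinely nondeterministic ingredient is the set of \emph{observable} transitions appearing on such a witness.

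For the detailed membership argument I would have the nondeterministic machine guess the witnessing configuration together with, for each observable transition along it, an EPL certificate as in Theorem~\ref{thm4_diag_LMPautomata}; by Lemma~\ref{lem1_det_MPautomata} these certificates are verifiable in polynomial time. Faultiness and positivity of the guessed observable transitions, and positivity of the guessed simple cycle, are confirmable in $\NP$ by Proposition~\ref{prop3_diag_LMPautomata}, while the eventually-$0$-crucial and anti-$0$-crucial predicates, together with the required unobservable reachabilities inside $\CCauo(\Acal^{\underQ}_{\fsf},\Acal^{\underQ}_{\nsf})$, are decidable in deterministic polynomial time by Proposition~\ref{prop4_diag_LMPautomata}. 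Bundling all of these into one polynomial-size certificate places non-diagnosability in $\NP$, hence diagnosability in $\coNP$.

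For hardness I would give a polynomial-time reduction from the $\NP$-complete SS problem (Lemma~\ref{lem2_det_MPautomata}) to non-diagnosability, which yields $\coNP$-hardness of diagnosability. Given an instance $n_1,\dots,n_m,N$, I build a deterministic, deadlock-free, divergence-free $\Acal^{\underN}$ with a single faulty event $f$ and two disjoint branches out of the initial state. The faulty branch executes $f$ and then, through noninstantaneous unobservable transitions, accumulates total weight exactly $N$ before entering a state carrying an observable self-loop of positive weight. The normal branch is a length-$m$ chain of subset-selection gadgets in which, for each $i$, two \emph{distinct} unobservable events model ``include $n_i$'' (weight $n_i$) and ``exclude $n_i$'' (weight $0$), and it ends in a state with the same observable self-loop. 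Using distinct events per choice preserves determinism, while the finiteness of both chains and the observability of every cycle guarantee deadlock- and divergence-freeness. By the weight-matching requirement in Definition~\ref{def_CC_diag_LMPautomata}~\eqref{item1_diag_LMPautomata}, the concurrent composition admits a faulty observable transition merging the two branches' observable transitions exactly when the normal branch can accumulate weight $N$, i.e.\ when $\sum_{i\in I}n_i=N$ for some $I$; this transition is then followed by the positive observable self-loop, so condition~\eqref{item11_diag_LMPautomata} of Theorem~\ref{thm2_diag_LMPautomata} holds iff the SS instance is a \emph{yes}-instance.

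The construction has size polynomial in $\sum_i\size(n_i)+\size(N)+m$, so the reduction is polynomial and completes the proof. I expect the hardness direction to be the main obstacle: the delicate point is to calibrate the two branches so that the subset-sum predicate corresponds \emph{exactly} to the weight-matching that creates a faulty observable transition, while simultaneously enforcing determinism, deadlock-freeness, and divergence-freeness. In particular one must verify that when the SS instance is a \emph{no}-instance no \emph{spurious} source of non-diagnosability survives — neither a partial weight match producing an unintended observable transition, nor any of the other three conditions \eqref{item12_diag_LMPautomata}, \eqref{item13_diag_LMPautomata}, \eqref{item28_diag_LMPautomata} of Theorem~\ref{thm2_diag_LMPautomata} being triggered by the gadget structure.
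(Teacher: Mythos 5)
Your proposal is correct, and its membership half is essentially the paper's own argument: guess the relevant observable transitions of $\CCa(\Acal^{\underQ}_{\fsf},\Acal^{\underQ}_{\nsf})$ together with EPL certificates (Theorem~\ref{thm4_diag_LMPautomata}, Lemma~\ref{lem1_det_MPautomata}), verify faultiness/positivity in $\NP$ via Proposition~\ref{prop3_diag_LMPautomata}, decide the (anti-)$0$-crucial predicates in polynomial time via Proposition~\ref{prop4_diag_LMPautomata}, with witnesses normalized to simple paths plus one simple cycle exactly as the paper does. Where you genuinely diverge is the hardness direction, in two ways. First, your gadget mirrors the paper's: you place the subset-selection chain on the \emph{normal} branch and the fixed weight $N$ on the faulty branch, whereas the paper (Fig.~\ref{fig1_diag_LMPautomata}) puts the chain on the faulty branch and matches $q_m\xrightarrow[]{a/1}q_{m+1}^1$ against the single normal transition $q_0\xrightarrow[]{a/N+1}q_{m+1}^2$; both calibrations work since the weight-matching in Definition~\ref{def_CC_diag_LMPautomata}\eqref{item1_diag_LMPautomata} is symmetric. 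Second, you verify the reduction through the four conditions of Theorem~\ref{thm2_diag_LMPautomata}, while the paper's hardness proof never touches the concurrent composition: for yes-instances it exhibits explicit paths $\pi_t,\pi_t',\pi_t''$ and invokes Proposition~\ref{prop2_diag_LMPautomata} directly, and for no-instances it takes $t=1$ and argues the observed timed label sequence pins the state to $q_{m+2}^1$. Your route buys uniformity at the price of the residual obligation you flag at the end, but that obligation discharges immediately: your construction is divergence-free, hence has no unobservable cycles, hence no $0$-crucial or anti-$0$-crucial states, so conditions \eqref{item12_diag_LMPautomata}, \eqref{item13_diag_LMPautomata}, \eqref{item28_diag_LMPautomata} are vacuously false, and any observable transition of the composition reachable from the initial pair requires the \emph{full} unobservable-prefix weights to match, so a partial match cannot create a spurious transition. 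One detail you should state explicitly: $f$ must be unobservable (as $e_f$ is in the paper); if $f$ carried a fresh observable label, no observable pairing involving the faulty branch would exist even for yes-instances, and the reduction would fail.
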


\begin{proof}
	``$\coNP$-membership'':

	Recall that the concurrent composition $\CCa(\Acal^{\underQ}_{\fsf},\Acal^{\underQ}_{\nsf})$
	can be computed in $\NP$ (Theorem~\ref{thm4_diag_LMPautomata}).

	In \eqref{item6_diag_LMPautomata}, by definition, we can assume $q_3'$ belongs to a dominant simple 
	transition cycle without loss of generality. By Proposition~\ref{prop3_diag_LMPautomata}, in 
	$\CCa(\Acal^{\underQ}_{\fsf},\Acal^{\underQ}_{\nsf})$, whether an observable transition is faulty can be 
	verified in $\NP$, whether a simple observable transition cycle is dominant can also be verified in 
	$\NP$, so \eqref{item6_diag_LMPautomata} can be verified in $\NP$.

	By Proposition~\ref{prop4_diag_LMPautomata}, whether a state of $\Acal^{\underQ}$ is dominant (resp.,
	anti-dominant) can be verified in polynomial time. Hence in \eqref{item7_diag_LMPautomata},
	\eqref{item8_diag_LMPautomata}, and \eqref{item27_diag_LMPautomata},
	whether $q_{3}'(L)$ is eventually dominant in $\Acal^{\Q}_{\fsf}$, whether $q_3'(R)$ is eventually
	dominant in $\Acal^{\Q}_{\nsf}$, and whether some state in $q_1'(L)
	\xrightarrow[]{s_{2}'(L)}q_2'(L)$ is anti-dominant in $\Acal_{\fsf}^{\underQ}$  can be verified 
	in polynomial time. Hence \eqref{item7_diag_LMPautomata},
	\eqref{item8_diag_LMPautomata}, and \eqref{item27_diag_LMPautomata} can be verified in $\NP$.

	``$\coNP$-hardness'': 

	This proof is based on a polynomial-time reduction from the $\NP$-complete subset sum problem 
	\cite{Garey1990ComputerIntractability} to the negation of diagnosability, 
	and the reduction is the same as the ``coNP-hardness'' proof in 
    \cite[Theorem~3.7]{Zhang2021DiagWeightedAutoMonoidCDC}. We omit the specific procedure.
\end{proof}

  As a byproduct, by using the above reduction from the subset sum problem, we can prove 
  that the problem of computing concurrent composition $\CCa(\Acal^{\underN}_{\fsf},\Acal^{\underN}_{\nsf})$
  is $\NP$-hard in the size of $\Acal^{\underN}$. This result holds because the problem of computing 
  $\CCa(\Acal^{\underN}_{\fsf},\Acal^{\underN}_{\nsf})$ is a yes/no problem (i.e., whether there is an observable
  transition between two given states in $Q\times Q$ under a given observable event in $E_o'$ as in 
  Definition~\ref{def_CC_diag_LMPautomata}), and the subset sum problem instance $n_1,\dots,n_m,N$
  has a solution if and only if there is an observable
  transition $(q_0,q_0)\xrightarrow[]{(a,a)}(q_{m+2}^1,q_{m+2}^2)$ in the automaton constructed from 
  $n_1,\dots,n_m,N$ (see \cite[Fig.~1]{Zhang2021DiagWeightedAutoMonoidCDC}).

\begin{corollary}\label{cor2_diag_LMPautomata}
  It is $\NP$-hard to compute concurrent composition $\CCa(\Acal^{\underN}_{\fsf},\Acal^{\underN}_{\nsf})$ for a 
	deterministic, deadlock-free, and divergence-free automaton $\Acal^{\underN}$.
\end{corollary}

\subsection{A short discussion on diagnosability of labeled
\texorpdfstring{$\underline{\mathbb{Q}^k}$}{Qk}-automata
\texorpdfstring{$\Acal^{\underline{\mathbb{Q}^k}}$}{AQk}
over the vector progressive dioid \texorpdfstring{$\underline{\Q^k}$}{Qk}}
\label{sec:diagextension}

%\begin{remark}\label{rem2_diag_LMPautomata}
	Consider a labeled weighted automaton $\Acal^{\underline{\Q^k}}$ over vector progressive dioid
	$\underline{\Q^k}$~\eqref{eqn24_diag_LMPautomata}. We have shown in Example~\ref{exam6_diag_LMPautomata} that
	the eventually dominant (anti-dominant) elements are exactly the vectors whose first components are positive
	(resp., negative). Then by definition, we have a state $q$ is dominant
	(resp., anti-dominant) if and only if there is an unobservable transition cycle $q\xrightarrow[]{s}q=:\pi$
	such that the first component of $\WEI_{\pi}$ is positive (resp., negative). Then similar to
	Proposition~\ref{prop4_diag_LMPautomata}, whether a state $q$ is dominant (anti-dominant) can be checked in 
	time polynomial in the size of $\Acal^{\underline{\Q^k}}$. On the other hand, similar to
	Theorem~\ref{thm4_diag_LMPautomata}, we have concurrent composition $\CCao(\Acal^{\underline{\Q^k}}_{\fsf},
	\Acal^{\underline{\Q^k}}_{\nsf})$ can also be computed in $\NP$ in the size of $\Acal^{\underline{\Q^k}}$.
	Furthermore,
	we have in $\CCao(\Acal^{\underline{\Q^k}}_{\fsf},\Acal^{\underline{\Q^k}}_{\nsf})$, 
	a run $q_0'\xrightarrow[]{e_1'}\cdots\xrightarrow[]{e_n'}q_n'$ is dominant if and only if
	each transition $q_i'\xrightarrow[]{e_{i+1}'}q_{i+1}'$ has an admissible path
	$\pi_i$ such that the first component of $\sum_{i=0}^{n-1}\WEI_{\pi_i}$ is positive. Then similar
	to Proposition~\ref{prop3_diag_LMPautomata}, we also have in $\CCao(\Acal^{\underline{\Q^k}}_{\fsf},
	\Acal^{\underline{\Q^k}}_{\nsf})$, whether a simple cycle is dominant can be checked in $\NP$ in the size of
	$\Acal^{\underline{\Q^k}}$.
	Then for automaton $\Acal^{\underline{\Q^k}}$, the conditions in Theorem~\ref{thm1_diag_LMPautomata} can be
	checked in $\NP$. That is, diagnosability of $\Acal^{\underline{\Q^k}}$ can be verified in $\coNP$. 
	These results are summarized as follows.
%\end{remark}

\begin{theorem}\label{thm5_diag_LMPautomata}
  The problem of verifying $E_f$-diagnosability of a labeled $\underline{\Q^k}$-automaton $\Acal^{\underline{\Q^k}}$ 
  %over vector progressive dioid $\underline{\Q^k}$~\eqref{eqn24_diag_LMPautomata}
  is $\coNP$-complete. The concurrent composition $\CCa(\Acal^{\underline{\Q^k}}_{\fsf},
  \Acal^{\underline{\Q^k}}_{\nsf})$ can be computed in $\NP$. %time nondeterministically polynomial in the size of 
  %automaton $\Acal^{\underline{\Q^k}}$.
\end{theorem}

\begin{example}\label{exam7_diag_LMPautomata}
  We revisit Example~\ref{exam5_diag_LMPautomata} (see Fig.~\ref{fig6_diag_LMPautomata}). The labeled 
  $\underQ$-automaton $\Acal_2^{\underQ}$ shown in Fig.~\ref{fig6_diag_LMPautomata} can be regarded as an autonomous 
  robot working on the straight line $\R^1$, where the ``autonomous'' comes from the fact that the occurrences
  of events (hence the dynamics of the automaton) are spontaneous. Every transition is endowed with a possible
  position deviation represented by the weight, e.g., transition
  $q_0\xrightarrow[]{u/0}q_1$ means along with the transition, the robot stays in the same position (with
  position deviation $0$), while transition $q_1\xrightarrow[]{u/-1}q_1$ means the robot moves $1$ unit of
  distance to the left, weight $1$ means moving $1$ unit of distance to the right.

  We add a time dimension into the automaton to measure the time consumptions for the transitions as in 
  Fig.~\ref{fig9_diag_LMPautomata}. 
  \begin{figure}[!htbp]
	\centering
	\begin{tikzpicture}
	[>=stealth',shorten >=1pt,thick,auto, node distance=2.5 cm, scale = 0.8, transform shape,
	->,>=stealth,inner sep=2pt, initial text = {$(0,0)$}]

	\tikzstyle{emptynode}=[inner sep=0,outer sep=0]

	\node[initial, state, initial where = above] (q0) {$q_0$};
	\node[state] (q1) [right of = q0] {$q_1$};
	\node[state] (q2) [right of = q1] {$q_2$};
	\node[state] (q3) [right of = q2] {$q_3$};

	\path [->]
	(q0) edge node [above, sloped] {$u/(1,0)$} (q1)
	(q1) edge [loop above] node [above, sloped] {$u/(1,-1)$} (q1)
	(q1) edge node [above, sloped] {${\cyan f}/(1,0)$} (q2)
	(q2) edge [loop above] node [above, sloped] {$u/(1,1)$} (q2)
	(q2) edge node [above, sloped] {$u/(1,0)$} (q3)
	(q3) edge [loop above] node [above, sloped] {$a/(1,1)$} (q3)
	;

     \end{tikzpicture}
	 \caption{A labeled $\underline{\Q^2}$-automaton $\Acal_3^{\underline{\Q^2}}$, where dioid 
	 $\underline{\Q^2}$ is shown in \eqref{eqn24_diag_LMPautomata}, 
	 $\ell(a)=a$, $\ell({\cyan f})=\ell(u)=\ep$, only $\cyan f$ is faulty.}
	 \label{fig9_diag_LMPautomata}  
	 \end{figure}

	 In the weights of the transitions of automaton $\Acal_3^{\underline{\Q^2}}$, the second components denote 
	 the position deviations along with the corresponding transitions (which is the same as those in automaton
	 $\Acal_2^{\underQ}$), while the first components denote the time consumptions for the transitions' 
	 executions. We assume for simplicity that the time consumptions for all transitions are $1$ unit of time.
	 When the robot is working, we first consider time consumptions and then position deviations, i.e.,
	 the time dimension takes precedence over the space dimension, so in 
	 each vector of $\underline{\Q^2}$, the time component is before the space component, which is consistent
	 with the lexicographic order of the progressive dioid $\underline{\Q^2}$ as in \eqref{eqn24_diag_LMPautomata}.

     Next, we use Theorem~\ref{thm1_diag_LMPautomata} to verify the $\{{\cyan f}\}$-diagnosability of 
	 $\Acal_3^{\underline{\Q^2}}$.
	 In $\CCa(\Acal^{\underline{\Q^2}}_{3\fsf},\Acal^{\underline{\Q^2}}_{3\nsf})$, there is no observable transition,
	 because in $\Acal^{\underline{\Q^2}}_{3\nsf}$ there is no observable transition. Hence neither 
	 \eqref{item6_diag_LMPautomata} nor \eqref{item7_diag_LMPautomata} in Theorem~\ref{thm1_diag_LMPautomata} 
	 is satisfied. In $\Acal_3^{\underline{\Q^2}}$, there is no weight that is eventually anti-dominant (i.e.,
	 those whose first components are negative as shown in Example~\ref{exam6_diag_LMPautomata}),  
	 so \eqref{item27_diag_LMPautomata} in Theorem~\ref{thm1_diag_LMPautomata} is not satisfied either.
	 Part of $\CCa(\Acal^{\underline{\Q^2}}_{3\fsf},\Acal^{\underline{\Q^2}}_{3\nsf})$ is depicted in 
	 Fig.~\ref{fig8_diag_LMPautomata}.
	 \begin{figure}[!htbp]
	\centering
	\begin{tikzpicture}
	[>=stealth',shorten >=1pt,thick,auto, node distance=3.5 cm, scale = 0.8, transform shape,
	->,>=stealth,inner sep=2pt]

	\tikzstyle{emptynode}=[inner sep=0,outer sep=0]

	\node[initial, state, initial where = above] (q0q0) {$q_0q_0$};
	\node[state] (q1q0) [right of = q0q0] {$q_1q_0$};
	\node[state] (q2q0) [right of = q1q0] {$q_2q_0$};
	\node[state] (q2q1) [right of = q2q0] {$q_2q_1$};

	\path [->]
	(q0q0) edge node [above, sloped] {$(u,\ep)/(1,0)$} (q1q0)
	(q1q0) edge [loop above] node [above, sloped] {$(u,\ep)/(1,-1)$} (q1q0)
    (q1q0) edge node [above, sloped] {$({\cyan f},\ep)/(1,0)$} (q2q0)
	(q2q0) edge [loop above] node [above, sloped] {$(u,\ep)/(1,1)$} (q2q0)
	(q2q0) edge node [above, sloped] {$(\ep,u)/(-1,0)$} (q2q1)
	(q2q1) edge [loop above] node [above, sloped] {$(\ep,u)/(-1,1)$} (q2q1)
	;

     \end{tikzpicture}
	 \caption{Part of concurrent composition $\CCa(\Acal^{\underline{\Q^2}}_{3\fsf},\Acal^{\underline{\Q^2}}_{3\nsf})$,
	 where $\Acal^{\underline{\Q^2}}_{3}$ is shown in Fig.~\ref{fig9_diag_LMPautomata}.}
	 \label{fig8_diag_LMPautomata}  
	 \end{figure}

	 By Fig.~\ref{fig8_diag_LMPautomata}, \eqref{item8_diag_LMPautomata} in Theorem~\ref{thm1_diag_LMPautomata} 
	 is satisfied, because event $({\cyan f},\ep)$ is faulty, $q_2$ is eventually dominant in $\Acal_{3\fsf}^{
	 \underline{\Q^2}}$, $q_1$ is eventually dominant in $\Acal_{3\nsf}^{\underline{\Q^2}}$. Therefore, 
	 automaton $\Acal_3^{\underline{\Q^2}}$ is not $\{{\cyan f}\}$-diagnosable. 

	 Comparing automaton $\Acal_3^{\underline{\Q^2}}$ (Fig.~\ref{fig9_diag_LMPautomata}) with automaton 
	 $\Acal_2^{\underQ}$ (Fig.~\ref{fig6_diag_LMPautomata}), the same thing is that neither of them is 
	 $\{{\cyan f}\}$-diagnosable, but the reasons resulting in their non-diagnosability are different:
	 $\Acal_3^{\underline{\Q^2}}$ satisfies only the third condition of Theorem~\ref{thm1_diag_LMPautomata},
	 while $\Acal_2^{\underQ}$ satisfies only the fourth condition of Theorem~\ref{thm1_diag_LMPautomata}.

	 Fig.~\ref{fig9_diag_LMPautomata} shows a simple scenario for which an autonomous robot is working on the
	 straight line
	 $\R^1$. This example can be naturally extended to scenarios of multi-dimensional spaces $\R^2$ and $\R^3$, or
	 even $\R^k$. Moreover, according to the precedence of considering the time dimension and different space 
	 dimensions, one can adjust the order of different components in the corresponding vectors of progressive dioids 
	 to be consistent with the lexicographic orders therein. On the other hand, practical meanings of states
	 are not specified here. For example, the meanings of states are specified as pairs of energy levels and positions
	 of a robot in Example~1 and Example~5 of \cite{Zhang2022DetWAMonoid} for labeled weighted automata over monoids.
	 The readers who are interested in such scenarios
	 could try to adapt the scenarios to their applications.
\end{example}

\section{Conclusion}
\label{sec:conc}

In this paper, we formulated a notion of diagnosability for labeled weighted automata over a subclass 
of dioids called progressive, where the weight of a transition represents diverse physical meanings such time 
elapsing and position deviations. We gave a
notion of concurrent composition for such automata and used the notion 
to give a necessary and sufficient condition for diagnosability. We also proved that for the dioid
$(\Q\cup\{-\infty\},\max,+,-\infty,0)$ (which is progressive), the problem of computing 
a concurrent composition is
$\NP$-complete and verifying diagnosability is $\coNP$-complete, where $\NP$-hardness and
$\coNP$-hardness even hold for deterministic,
deadlock-free, and divergence-free automata. $\coNP$-membership has been extended 
to automaton $\Acal^{\underline{\Q^k}}$ over vector progressive dioid 
$\underline{\Q^k}$~\eqref{eqn24_diag_LMPautomata}. The $\coNP$ upper bound on verifying diagnosability
can be extended to labeled real-time automata (a subclass of labeled timed automata with a single clock)
by extending the current version of concurrent composition to labeled real-time automata.
Note that particularly when the weights of the transitions of the automata considered in the current paper are
nonnegative rational numbers, the automata become a subclass of labeled timed automata.
Recall that for labeled timed automata, the diagnosability verification problem is $\PSPACE$-complete 
(the $\PSPACE$ lower bound holds even for labeled 2-clock timed automata), hence in this paper a subclass
of labeled timed automata for which the diagnosability verification problem belongs to $\coNP$ was found.

%\bibliographystyle{cas-model2-names}
%\bibliographystyle{plain}
%\bibliography{bibliography}

\appendix

\section{Proof of Theorem~\ref{thm1_diag_LMPautomata}}
\label{sec:proofThm3.12}

	``if'':

	Assume \eqref{item6_diag_LMPautomata} holds. Then in $\Acal^{\frakDp}$ there exist paths

	\begin{align*}
		&q_0'(L)\xrightarrow[]{\bar s_1} q_1'(L)\xrightarrow[]{\bar s_2} q_2'(L)\xrightarrow[]{\bar s_3}
		 q_3'(L)\xrightarrow[]{\bar s_4} q_3'(L),\\
		&q_0'(R)\xrightarrow[]{\hat s_1} q_1'(R)\xrightarrow[]{\hat s_2} q_2'(R)\xrightarrow[]{\hat s_3}
		 q_3'(R)\xrightarrow[]{\hat s_4} q_3'(R)
	\end{align*}
	such that
	\begin{align*}
		&q_0'(L),q_0'(R)\in Q_0,\\
		&\bar s_i\in E^*E_o, \hat s_i\in(E\setminus E_f)^*(E_o\setminus E_f),
		\ell(\bar s_i)=\ell(\hat s_i)\\
		&\text{for }i= 1,3,4,\\
		&\bar s_2\in(E_{uo})^*E_o, \hat s_2\in(E_{uo}\setminus E_f)^*(E_o\setminus E_f),\ell(\bar s_2)=\ell(\hat s_2),\\
		&\ell(\bar s_1)=\ell(s_1'), \ell(\bar s_2)=\ell(e_2'), \ell(\bar s_3)=\ell(s_3'),\\
		&\ell(\tau(q_{i-1}'(L)\xrightarrow[]{\bar s_{i}} q_{i}'(L)))=
		\ell(\tau(q_{i-1}'(R)\xrightarrow[]{\hat s_{i}} q_{i}'(R)))\\
		&\text{for all }i\in\llb 1,3\rrb,\\
		&\ell(\tau(q_{3}'(L)\xrightarrow[]{\bar s_{4}} q_{3}'(L)))=
		\ell(\tau(q_{3}'(R)\xrightarrow[]{\hat s_{4}} q_{3}'(R))),\\
		&E_f\in\bar s_2,\\
		&\!\WEI(q_{3}'(L)\xrightarrow[]{\bar s_{4}} q_{3}'(L))=\WEI(q_{3}'(R)\xrightarrow[]{\hat s_{4}} q_{3}'(R))=:l,\\
		&l\text{\ is eventually dominant}.
	\end{align*}
	Let $t\in T$ be such that $t$ is eventually dominant. Choose $n\in\N$ such that $l^n\succ t$ 
	and $\WEI(q_2'(L)\xrightarrow[]{\bar s_3}q_3'(L))\otimes l^n \succ{\bf1}$
	($n$ exists because $l$ is eventually dominant).

	Choose
	\begin{align*}
		q_0'(L)\xrightarrow[]{\bar s_1} q_1'(L)\xrightarrow[]{\bar s_2} q_2'(L)
		&=: \pi_t,\\
		q_2'(L)\xrightarrow[]{\bar s_3}q_3'(L)\left(\xrightarrow[]
		{\bar s_4} q_3'(L)\right)^{2n}
		&=: \pi_t',\\
		q_0'(R)\xrightarrow[]{\hat s_1} q_1'(R)\xrightarrow[]{\hat s_2} q_2'(R)\xrightarrow[]{\hat s_3}
		q_3'(R)\left(\xrightarrow[]{\hat s_4} q_3'(R)\right)^{2n}
		&=: \pi_t'',
	\end{align*}
	one then has $\ell(\tau(\pi_t\pi_t'))=\ell(\tau(\pi_t''))$, $E_f\in \bar s_2$,
	$E_f\notin \hat s_1\hat s_2\hat s_3(\hat s_4)^{2n}$, $\WEI_{\pi_t'} = \WEI(q_2'(L)\xrightarrow[]{\bar s_3}
	q_3'(L))\otimes l^{2n} 	\succeq l^n\succ t$, and
	$\WEI_{\pi_t''}=\WEI_{\pi_t}\otimes \WEI_{\pi_t'}\succeq \WEI_{\pi_t}\otimes t$.
	%$\WEI_{\pi_t''}=\WEI_{\pi_t}\otimes \WEI_{\pi_t'}\succ \WEI_{\pi_t}\otimes t$ (by $\WEI_{\pi_t}\ne{\bf0}$
	%($\frakDp$ has no zero divisor) and Lemma~\ref{lem6_diag_LMPautomata}).
	Then By Proposition~\ref{prop2_diag_LMPautomata}, $\Acal^{\frakDp}$ is not $E_f$-diagnosable.

	Assume \eqref{item7_diag_LMPautomata} holds. Then there exist unobservable paths 
	\begin{subequations}\label{eqn21_diag_LMPautomata} 
		\begin{align}
			&q_3'(L) \xrightarrow[]{s_4} q_4 \xrightarrow[]{s_5} q_4 \text{ in }\Acal^{\frakDp}_{\fsf},\\
			&q_3'(R) \xrightarrow[]{s_6} q_5 \xrightarrow[]{s_7} q_5 \text{ in }\Acal^{\frakDp}_{\nsf}
		\end{align}
	\end{subequations}
	such that $s_4\in(E_{uo})^*$, $s_5\in(E_{uo})^+$, $s_6\in(E_{uo}\setminus E_f)^*$, $s_7\in(E_{uo}\setminus E_f)^+$, 
	$\WEI(q_4 \xrightarrow[]{s_5} q_4)=:l_1$, $\WEI(q_5 \xrightarrow[]{s_7} q_5)=:l_2$, $l_1$ and $l_2$
	are both eventually dominant.

	By \eqref{eqn8_diag_LMPautomata}, in $\Acal^{\frakDp}$ there exist paths
	\begin{align*}
		&q_0'(L)\xrightarrow[]{\bar s_1} q_1'(L)\xrightarrow[]{\bar s_2} q_2'(L)
		\xrightarrow[]{\bar s_3}q_3'(L)
		\xrightarrow[]{s_4} q_4 \xrightarrow[]{s_5} q_4,\\
		&q_0'(R)\xrightarrow[]{\hat s_1} q_1'(R)\xrightarrow[]{\hat s_2} q_2'(R)\xrightarrow[]{\hat s_3}
		 q_3'(R)
		\xrightarrow[]{s_6} q_5 \xrightarrow[]{s_7} q_5
	\end{align*}
	such that
	\begin{align*}
		&q_0'(L),q_0'(R)\in Q_0,\\
		&\bar s_i\in E^*E_o, \hat s_i\in(E\setminus E_f)^*(E_o\setminus E_f)\text{ for }i=1,3,\\
		&\bar s_2\in (E_{uo})^*E_o, \hat s_2\in (E_{uo}\setminus E_f)^*(E_o\setminus E_f),\\
		&\ell(\bar s_i)=\ell(\hat s_i)=\ell(s_i')\text{ for }i=1,3,\\
		&\ell(\bar s_2)=\ell(\hat s_2)=\ell(e_2'),\\
		&\ell(\tau(q_{i-1}'(L)\xrightarrow[]{\bar s_{i}} q_{i}'(L)))=\ell(\tau(q_{i-1}'(R)\xrightarrow[]{\hat s_{i}} q_{i}'(R)))\\
		&\text{for }i=1,2,3,\\
		&E_f\in\bar s_2.
	\end{align*}

	Let $t\in T$ be eventually dominant. Choose $n\in\N$ such that $(l_1)^n\succ t$, $(l_2)^n\succ t$,
	$\WEI(q_2'(L)\xrightarrow[]{\bar s_3}q_3'(L) \xrightarrow[]{s_4}q_4)\otimes (l_1)^n\succ {\bf1}$, and
	$\WEI(q_2'(R)\xrightarrow[]{\hat s_3}q_3'(R) \xrightarrow[]{s_6}q_5)\otimes (l_2)^n\succ{\bf1}$.

	Denote
	\begin{align*}
		q_0'(L)\xrightarrow[]{\bar s_1} q_1'(L)\xrightarrow[]{\bar s_2} q_2'(L)
		&=: \pi_t,\\
		q_2'(L)\xrightarrow[]{\bar s_3}q_3'(L) \xrightarrow[]{s_4}q_4 \left(\xrightarrow[]
		{s_5}q_4\right)^{2n}
		&=: \pi_t',\\
		q_0'(R)\xrightarrow[]{\hat s_1} q_1'(R)\xrightarrow[]{\hat s_2} q_2'(R) &\\
		\xrightarrow[]{\hat s_3} q_3'(R) \xrightarrow[]{s_6}q_5
		\left(\xrightarrow[]{s_7} q_5\right)^{2n}
		&=: \pi_t'',
	\end{align*}
	One then has $E_f\in\bar s_2$, $E_f\notin\hat s_1\hat s_2\hat s_3s_6
	(s_7)^{2n}$, $\ell(\tau(\pi_t\pi_t'))=\ell(\tau(\pi_t''))$,
	$\WEI_{\pi_t'} = \WEI(q_2'(L)\xrightarrow[]{\bar s_3}q_3'(L) \xrightarrow[]{s_4}q_4)\otimes (l_1)^{2n}
	\succeq (l_1)^n\succ t$, and $\WEI_{\pi_{t}''} = \WEI(q_0'(R)\xrightarrow[]{\hat s_1} q_1'(R)\xrightarrow[]
	{\hat s_2} q_2'(R))\otimes \WEI(q_2'(R)\xrightarrow[]
	{\hat s_3}q_3'(R) \xrightarrow[]{s_6}q_5)\otimes (l_2)^{2n} = \WEI_{\pi_t}\otimes \WEI(q_2'(R)\xrightarrow[]
	{\hat s_3}q_3'(R) \xrightarrow[]{s_6}q_5)\otimes (l_2)^{2n} \succeq \WEI_{\pi_t}\otimes (l_2)^n 
	\succeq\WEI_{\pi_t}\otimes t$.
	%\succ\WEI_{\pi_t}\otimes t$ (by Lemma~\ref{lem6_diag_LMPautomata}).

	Then also by Proposition~\ref{prop2_diag_LMPautomata}, $\Acal^{\frakDp}$ is not $E_f$-diagnosable.

	Assume \eqref{item8_diag_LMPautomata} holds. Then similar to \eqref{item7_diag_LMPautomata}, 
	there also exist unobservable paths in $\Acal^{\frakDp}$ as in \eqref{eqn21_diag_LMPautomata}.

	By \eqref{eqn9_diag_LMPautomata}, in $\Acal^{\frakDp}$ there exist paths
	\begin{align*}
		&q_0'(L)\xrightarrow[]{\bar s_1} q_1'(L)\xrightarrow[]{s_2'(L)} q_2'(L)
		\xrightarrow[]{e_f}q_3'(L)\xrightarrow[]{s_4} q_4 \xrightarrow[]{s_5} q_4, \\
		&q_0'(R)\xrightarrow[]{\hat s_1} q_1'(R)\xrightarrow[]{s_2'(R)} q_2'(R)\xrightarrow[]{\ep}
		 q_3'(R) \xrightarrow[]{s_6} q_5 \xrightarrow[]{s_7} q_5\\
	\end{align*}
	such that
	\begin{align*}
		&q_0'(L),q_0'(R)\in Q_0,\\
		&\bar s_1\in E^*E_o, \hat s_1\in(E\setminus E_f)^*(E_o\setminus E_f),\\
		&\ell(\bar s_1)=\ell(\hat s_1)=\ell(s_1'),\\
		&\ell(\tau(q_{0}'(L)\xrightarrow[]{\bar s_{1}} q_{1}'(L)))=
		\ell(\tau(q_{0}'(R)\xrightarrow[]{\hat s_{1}} q_{1}'(R))),\\
		&\WEI(q_{0}'(L)\xrightarrow[]{\bar s_{1}} q_{1}'(L))=
		\WEI(q_{0}'(R)\xrightarrow[]{\hat s_{1}} q_{1}'(R)),\\
		&q_2'(R)=q_3'(R),\quad (e_f,\ep)= e_3'.
	\end{align*}

	Let $t\in T$ be eventually dominant. Choose a sufficiently large $n\in\N$ such that
	$(l_1)^n\succ t$, $(l_2)^n\succ t$, $\WEI(q_3'(L) \xrightarrow[]{s_4}q_4)\otimes (l_1)^n\succ {\bf1}$,
	$\WEI(q_1'(R)\xrightarrow[]{s_2'(R)} q_2'(R)\xrightarrow[]{s_6}q_5 )\otimes (l_2)^n\succ{\bf1}$, and
	$(l_2)^n\succ \WEI(q_1'(L)\xrightarrow[]{s_2'(L)} q_2'(L)\xrightarrow[]{e_f}q_3'(L))$.

	Denote
	\begin{align*}
		q_0'(L)\xrightarrow[]{\bar s_1} q_1'(L)\xrightarrow[]{s_2'(L)} q_2'(L)\xrightarrow[]
		{e_f}q_3'(L) &=: \pi_t,\\
		q_3'(L) \xrightarrow[]{s_4}q_4 \left(\xrightarrow[]
		{s_5}q_4 \right)^{2n}
		&=: \pi_t',\\
		q_0'(R)\xrightarrow[]{\hat s_1} q_1'(R)\xrightarrow[]{s_2'(R)} q_2'(R)
		 \xrightarrow[]{s_6}q_5\left(\xrightarrow[]{s_7} q_5\right)^{3n}
		&=: \pi_t''.
	\end{align*}
	One then has $E_f\notin\hat s_1s_2'(R)s_6(s_7)^{3n}$, $\ell(\tau(\pi_t\pi_t'))=\ell(\tau(\pi_t''))=
	\ell(\tau(q_0'(L)\xrightarrow[]{\bar s_1} q_1'(L)))$,
	$\WEI_{\pi_t'} = \WEI(q_3'(L) \xrightarrow[]{s_4}q_4)\otimes (l_1)^{2n}
	\succeq (l_1)^n \succ t$, and $\WEI_{\pi_{t}''}= \WEI(q_0'(R)\xrightarrow[]{\hat s_1} q_1'(R))
	\otimes \WEI(q_1'(R)\xrightarrow[]{s_2'(R)}q_2'(R)
	\xrightarrow[]{s_6}q_5)\otimes (l_2)^{3n} = \WEI(q_0'(L)\xrightarrow[]{\bar s_1} q_1'(L)) \otimes
	\WEI(q_1'(R)\xrightarrow[]{s_2'(R)}q_2'(R) \xrightarrow[]{s_6}q_5)\otimes (l_2)^{3n}
	\succeq \WEI(q_0'(L)\xrightarrow[]{\bar s_1} q_1'(L))\otimes (l_2)^{2n} \succeq
	\WEI(q_0'(L)\xrightarrow[]{\bar s_1} q_1'(L))\otimes
	\WEI(q_1'(L)\xrightarrow[]{s_2'(L)} q_2'(L)\xrightarrow[]{e_f}q_3'(L)) \otimes (l_2)^n 
	\succeq \WEI_{\pi_t}\otimes t $.
	%\succ \WEI_{\pi_t}\otimes t $ (by Lemma~\ref{lem6_diag_LMPautomata}).

	Also by Proposition~\ref{prop2_diag_LMPautomata}, $\Acal^{\frakDp}$ is not $E_f$-diagnosable.

	Assume \eqref{item27_diag_LMPautomata} holds. Then there exist unobservable paths 
	\begin{subequations}\label{}
		\begin{align}
			&q_3'(L) \xrightarrow[]{s_4} q_4 \xrightarrow[]{s_5} q_4 \text{ in }\Acal^{\frakDp}_{\fsf},\\
			&q_5 \xrightarrow[]{s_6} q_5 \text{ in }\Acal^{\frakDp}_{\fsf}
		\end{align}
	\end{subequations}
	such that $q_5$ appears in $q_1'(L)\xrightarrow[]{s_2'(L)}q_2'(L)$,
	$s_4\in(E_{uo})^*$, $s_5,s_6\in(E_{uo})^+$, $\WEI(q_4 \xrightarrow[]{s_5} q_4)=:l_1$,
	$\WEI(q_5 \xrightarrow[]{s_6} q_5)=:l_2$, $l_1$ is eventually dominant, $l_2$ is eventually anti-dominant.  

	By \eqref{eqn22_diag_LMPautomata}, in $\Acal^{\frakDp}$ there exist paths
	\begin{align*}
		&q_0'(L)\xrightarrow[]{\bar s_1} q_1'(L)\xrightarrow[]{s_2'(L)_1} 
		q_5 \xrightarrow[]{s_6} q_5\xrightarrow[]{s_2'(L)_2}
		q_2'(L)
		\xrightarrow[]{e_f}q_3'(L)\\
		&\xrightarrow[]{s_4} q_4 \xrightarrow[]{s_5} q_4, \\
		&q_0'(R)\xrightarrow[]{\hat s_1} q_1'(R)\xrightarrow[]{s_2'(R)} q_2'(R)\xrightarrow[]{\ep}
		 q_3'(R)\\
	\end{align*}
	such that
	\begin{align*}
		&q_0'(L),q_0'(R)\in Q_0,\\
		&\bar s_1\in E^*E_o, \hat s_1\in(E\setminus E_f)^*(E_o\setminus E_f),\\
		&\ell(\bar s_1)=\ell(\hat s_1)=\ell(s_1'),\\
		&\ell(\tau(q_{0}'(L)\xrightarrow[]{\bar s_{1}} q_{1}'(L)))=
		\ell(\tau(q_{0}'(R)\xrightarrow[]{\hat s_{1}} q_{1}'(R))),\\
		&\WEI(q_{0}'(L)\xrightarrow[]{\bar s_{1}} q_{1}'(L))=
		\WEI(q_{0}'(R)\xrightarrow[]{\hat s_{1}} q_{1}'(R)),\\
		&q_2'(R)=q_3'(R),\quad (e_f,\ep)= e_3',\\
		&q_1'(L)\xrightarrow[]{s_2'(L)_1} q_5 \xrightarrow[]{s_2'(L)_2} q_2'(L) =
		q_1'(L)\xrightarrow[]{s_2'(L)} q_2'(L).
	\end{align*}

	Let $t\in T$ be eventually dominant. Choose sufficiently large $n\in\N$ such that
	$(l_1)^n\succ t$ and $\WEI(q_3'(L) \xrightarrow[]{s_4}q_4)\otimes (l_1)^n\succ {\bf1}$.
	Then choose sufficiently large $m\in\N$ such that
	$\WEI(q_1'(L)\xrightarrow[]{s_2'(L)_1} q_5 )\otimes (l_2)^m\prec{\bf1}$, $(l_2)^m\otimes\WEI(
	q_5\xrightarrow[]{s_2'(L)_2}q_2'(L) \xrightarrow[]{e_f}q_3'(L))\prec{\bf1}$, and $(l_2)^m\otimes
	\WEI\left(q_3'(L) \xrightarrow[]{s_4} q_4\left(  \xrightarrow[]{s_5} q_4 \right)^{2n}\right)\prec\\
	\WEI\left( q_1'(R)\xrightarrow[]{s_2'(R)} q_2'(R) \right)$.

	Denote
	\begin{align*}
		q_0'(L)\xrightarrow[]{\bar s_1} q_1'(L)\xrightarrow[]{s_2'(L)_1} 
		q_5 \left( \xrightarrow[]{s_6} q_5 \right) ^{3m}&\\
		\xrightarrow[]{s_2'(L)_2}
		q_2'(L)\xrightarrow[] {e_f}q_3'(L) &=: \pi_t,\\
		q_3'(L) \xrightarrow[]{s_4}q_4 \left(\xrightarrow[]
		{s_5}q_4 \right)^{2n}
		&=: \pi_t',\\
		q_0'(R)\xrightarrow[]{\hat s_1} q_1'(R)\xrightarrow[]{s_2'(R)} q_2'(R)
		&=: \pi_t''.
	\end{align*}
	One then has $E_f\notin\hat s_1s_2'(R)$, $\ell(\tau(\pi_t\pi_t'))=\ell(\tau(\pi_t''))=
	\ell(\tau(q_0'(L)\xrightarrow[]{\bar s_1} q_1'(L)))$,
	$\WEI_{\pi_t'} = \WEI(q_3'(L) \xrightarrow[]{s_4}q_4)\otimes (l_1)^{2n}
	\succeq (l_1)^n \succ t$, $\WEI(q_1'(L)\xrightarrow[]{s_2'(L)_1} q_5 \left( \xrightarrow[]{s_6} q_5 \right) ^{3m}
	\xrightarrow[]{s_2'(L)_2} q_2'(L)\xrightarrow[] {e_f}q_3'(L))\otimes\WEI_{\pi_t'} \preceq (l_2)^m\otimes 
	\WEI_{\pi_t'}\preceq \WEI\left( q_1'(R)\xrightarrow[]{s_2'(R)} q_2'(R) \right)$, $\WEI_{\pi_t''}=
	\WEI(q_{0}'(R)\xrightarrow[]{\hat s_{1}} q_{1}'(R))\otimes \WEI\left( q_1'(R)\xrightarrow[]{s_2'(R)} q_2'(R)
	\right)\succeq \WEI(q_{0}'(L)\xrightarrow[]{\bar s_{1}} q_{1}'(L))\otimes \WEI(q_1'(L)\xrightarrow[]{s_2'(L)_1} 
	q_5 \left( \xrightarrow[]{s_6} q_5 \right) ^{3m} \xrightarrow[]{s_2'(L)_2} q_2'(L)\xrightarrow[] {e_f}q_3'(L))
	\otimes\WEI_{\pi_t'} =\WEI_{\pi_t}\otimes\WEI_{\pi_t'} 
	\succeq \WEI_{\pi_t}\otimes t$.
	%\succ \WEI_{\pi_t}\otimes t$ (by Lemma~\ref{lem6_diag_LMPautomata}).

	Also by Proposition~\ref{prop2_diag_LMPautomata}, $\Acal^{\frakDp}$ is not $E_f$-diagnosable.

	``only if'':

	Assume that $\Acal^{\frakDp}$ is not $E_f$-diagnosable. We choose an arbitrarily large eventually 
	dominant ${\bf1}\prec t\in T$.
	This can be done because for every eventually dominant element $t'$ in $T$ , one has
	$t'\prec (t')^2 \prec (t')^3 \prec \cdots$ (by Lemma~\ref{lem7_diag_LMPautomata})
	and every $(t')^n$ with $n>1$ is eventually dominant.
	By Proposition~\ref{prop2_diag_LMPautomata}, 
	there exist paths $\pi_t= q_0^1\xrightarrow[]{s}q_1^1$, $\pi_t'= q_1^1\xrightarrow[]{s'}q_2^1$, and
	$\pi_t''= q_0^2\xrightarrow[]{s''}q_2^2$, such that $q_0^1,q_0^2\in Q_0$, $\last(s)\in E_f$,
	$\ell(\tau(\pi_t''))=\ell(\tau(\pi_t\pi_t'))$,
	$E_f\notin s''$, $\WEI_{\pi_t'}\succ t$, and $\WEI_{\pi_t''}\succeq\WEI_{\pi_t}\otimes t$.

	By definition of $\CCa(\Acal^{\frakDp}_{\fsf},\Acal^{\frakDp}_{\nsf})$, the two paths $\pi_t\pi_t'$ and $\pi_t''$
	generate a run $\pi'$ of $\CCa(\Acal^{\frakDp}_{\fsf},\Acal^{\frakDp}_{\nsf})$ consisting of a sequence of 
	observable transitions followed by a sequence of unobservable transitions. Note that particularly, $\pi'$
	may contain only observable transitions or only unobservable transitions. Hence $\pi_t\pi_t'$ and $\pi_t''$
	can be rewritten as 
	\begin{subequations}\label{eqn18_diag_LMPautomata} 
		\begin{align}
			&\bar q_0\xrightarrow[]{\bar s_1} \cdots \xrightarrow[]{\bar s_n}\bar q_n \xrightarrow[]{\bar s_{n+1}}\bar q_{n+1},\\
			&\hat q_0\xrightarrow[]{\hat s_1} \cdots \xrightarrow[]{\hat s_n}\hat q_n \xrightarrow[]{\hat s_{n+1}}\hat q_{n+1},
		\end{align}
	\end{subequations}
	respectively, where $\bar q_0=q_0^1$, $\bar q_{n+1}=q_2^1$, $\hat q_0=q_0^2$, $\hat q_{n+1}=q_2^2$, 
	$n\in\N$, for all $i\in\llb 1,n\rrb$, $\bar s_i\in(E_{uo})^*E_o$,
	$\hat s_i\in(E_{uo}\setminus E_f)^*
	(E_o\setminus E_f)$, 
	$\WEI(\bar q_{i-1}\xrightarrow[]{\bar s_i}\bar q_i)=\WEI(\hat q_{i-1}\xrightarrow[]{\hat s_i}\hat q_i)$,
	$\bar s_{n+1}\in(E_{uo})^*$, $\hat s_{n+1}\in(E_{uo}\setminus E_f)^*$.
	Then 
	\begin{align}
		ss' &= \bar s_1 \dots \bar s_n \bar s_{n+1},\\
		s'' &= \hat s_1 \dots \hat s_n \hat s_{n+1},
	\end{align}
	and
	\begin{align*}%\label{eqn19_diag_LMPautomata}
		\pi' =& (\bar q_0,\hat q_0)\xrightarrow[]{(\last(\bar s_1),\last(\hat s_1))} \cdots
		\xrightarrow[]{(\last(\bar s_n),\last(\hat s_n))}\\
		&(\bar q_n,\hat q_n) \xrightarrow[]{s_{n+1}'}(\bar q_{n+1},\hat q_{n+1}),
	\end{align*}
	where $s_{n+1}'\in(E_{uo}')^*$, $s_{n+1}'(L)=\bar s_{n+1}$, $s_{n+1}'(R)=\hat s_{n+1}$,
	for every $i\in\llb 1,n\rrb$, $(\bar q_{i-1},\hat q_{i-1})\xrightarrow[]{(\last(\bar s_i),\last(\hat s_i))}
	(\bar q_{i},\hat q_{i})$ is an observable transition, $(\bar q_n,\hat q_n) \xrightarrow[]{s_{n+1}'}
	(\bar q_{n+1},\hat q_{n+1})$ is a sequence of unobservable transitions.
	
	The subsequent argument is divided into four cases.

	\begin{enumerate}[(a)]
		\item\label{item21_diag_LMPautomata}
			Case $\bar s_1\dots \bar s_n= s$:

			In this case, $\last(\bar s_n)\in E_f$, $\pi_t=\bar q_0\xrightarrow[]{\bar s_1} \cdots 
			\xrightarrow[]{\bar s_n}\bar q_n$, $\pi_t'=\bar q_n\xrightarrow[]{\bar s_{n+1}}\bar q_{n+1}$,
			$\WEI_{\pi_t}=\WEI(\hat q_0\xrightarrow[]{\hat s_1} \cdots \xrightarrow[]{\hat s_n}\hat q_n)$,
			$\WEI_{\pi_t'}\succ t$,
			$\WEI_{\pi_t''}=\WEI_{\pi_t}\otimes \WEI(\hat q_n \xrightarrow[]{\hat s_{n+1}}\hat q_{n+1})\succeq
			\WEI_{\pi_t}\otimes t$. Then we have $\WEI(\hat q_n \xrightarrow[]{\hat s_{n+1}}\hat q_{n+1})
			\succeq t$ because a path always has nonzero weight (because $\frakDp$ has no zero divisor)
			and $\frakDp$ is cancellative (Lemma~\ref{lem6_diag_LMPautomata}). Recall that $t$ is
			arbitrarily large. Divide $\hat q_n \xrightarrow[]{\hat s_{n+1}}\hat q_{n+1}$ to paths 
			$\hat\pi_1\dots\hat\pi_m$ such that $\last(\hat\pi_i)=\init(\hat\pi_{i+1})$ for all $i\in\llb 1,m-1
			\rrb$, where $\hat\pi_1$ is the shortest prefix of $\hat q_n \xrightarrow[]{\hat s_{n+1}}\hat q_{n+1}$ 
			having weight being eventually dominant, $\hat\pi_2$ is the shortest prefix of 
			$(\hat q_n \xrightarrow[]{\hat s_{n+1}}\hat q_{n+1})\setminus \pi_1$ having weight being
			eventually dominant, \dots, $\hat\pi_{m-1}$ is the shortest prefix of  
			$(\hat q_n \xrightarrow[]{\hat s_{n+1}}\hat q_{n+1})\setminus (\pi_1\dots\pi_{m-2})$ having weight
			being eventually dominant. Then we have $\WEI_{\hat\pi_1}, \dots,
			\WEI_{\hat\pi_{m-1}}$ are eventually dominant,
			but $\WEI_{\hat\pi_m}$ is not necessarily. We also have $m$ can be arbitrarily large,
			because $t$ is arbitrarily large and the product of any finitely many elements that are not 
			eventually dominant cannot be very large (as shown in Remark~\ref{rem1_diag_LMPautomata}). 
			Then there exist repetitive states among $\init(\hat\pi_1)$,
			$\init(\hat\pi_2)$, \dots, $\init(\hat\pi_{m-1})$, and all these repetitive states are eventually
			dominant
			by Lemma~\ref{lem1_diag_LMPautomata}. That is, $\init(\hat\pi_1)=\hat q_n$ is eventually dominant
			in $\Acal^{\frakDp}_{\nsf}$. Similarly, we have $\init(\pi_t')=\bar q_n$ is eventually dominant
			in $\Acal^{\frakDp}_{\fsf}$. Then \eqref{item7_diag_LMPautomata} holds, and the $s_3'$ in 
			\eqref{eqn8_diag_LMPautomata} is equal to $\ep$.

		\item\label{item22_diag_LMPautomata}
			Case $\bar s_1\dots \bar s_n\sqsubsetneq s$:

			In this case, denote $\pi_t:=\bar q_0\xrightarrow[]{\bar s_1} \cdots 
			\xrightarrow[]{\bar s_n}\bar q_n\pi_1'$, then $\pi_t'=(\bar q_n\xrightarrow[]{\bar s_{n+1}}\bar q_{n+1})
			\setminus \pi_1'$. We have $\WEI_{\pi_t'}\succ t$, $\WEI(\bar q_0\xrightarrow[]{\bar s_1} \cdots 
			\xrightarrow[]{\bar s_n}\bar q_n) = \WEI(\hat q_0\xrightarrow[]{\hat s_1} \cdots \xrightarrow[]{\hat s_n}
			\hat q_n)$, $\WEI_{\pi_t''}=\WEI(\hat q_0\xrightarrow[]{\hat s_1} \cdots \xrightarrow[]{\hat s_n}
			\hat q_n)\otimes \WEI(\hat q_n \xrightarrow[]{\hat s_{n+1}}\hat q_{n+1}) \succeq \WEI_{\pi_t}\otimes t
			= \WEI(\bar q_0\xrightarrow[]{\bar s_1} \cdots \xrightarrow[]{\bar s_n}\bar q_n)\otimes \WEI_{\pi_1'}
			\otimes t$. Using the argument as in \eqref{item21_diag_LMPautomata}, we know $\last(\pi_1')$ is eventually
			dominant in $\Acal^{\frakDp}_{\fsf}$. By Lemma~\ref{lem6_diag_LMPautomata}, we have $\WEI(\hat q_n
			\xrightarrow[]{\hat s_{n+1}}\hat q_{n+1}) \succeq \WEI_{\pi_1'} \otimes t$.

			\begin{itemize}
				\item
			Assume $\hat q_n \xrightarrow[]{\hat s_{n+1}}\hat q_{n+1}$ has a prefix $\pi_2'$ such that $\WEI_{\pi_2'}
			\preceq\WEI_{\pi_1'}$. Then we have $\WEI(\hat q_n	\xrightarrow[]{\hat s_{n+1}}\hat q_{n+1}) =
			\WEI_{\pi_2'} \otimes \WEI((\hat q_n \xrightarrow[]{\hat s_{n+1}}\hat q_{n+1})\setminus\pi_2')
			\succeq \WEI_{\pi_1'}\otimes t$. By $\WEI_{\pi_2'} \preceq \WEI_{\pi_1'}$ and
			Lemma~\ref{lem6_diag_LMPautomata}, we have $\WEI((\hat q_n \xrightarrow[]{\hat s_{n+1}}
			\hat q_{n+1})\setminus \pi_2') \succeq t$. Also by using the argument as in \eqref{item21_diag_LMPautomata},
			$\last(\pi_2')$ is eventually dominant in $\Acal^{\frakDp}_{\nsf}$. That is, 
			\eqref{item8_diag_LMPautomata} holds.

				\item
			Assume $\hat q_n \xrightarrow[]{\hat s_{n+1}}\hat q_{n+1}$ has no prefix with weight no greater than 
			$\WEI_{\pi_1'}$. Then $\WEI_{\pi_1'}\prec{\bf1}$. If %$\WEI(\hat q_n \xrightarrow[]{\hat s_{n+1}}\hat 
			%q_{n+1}) $ is sufficiently large, then using the argument as in \eqref{item21_diag_LMPautomata} we have
			$\hat q_n \xrightarrow[]{\hat s_{n+1}}\hat q_{n+1}$ contains a dominant cycle in $\Acal_{\nsf}^{\frakDp}$,
			then \eqref{item8_diag_LMPautomata} holds; otherwise, we must have $\WEI_{\pi_1'}$ is sufficiently small, 
			then also using the argument as in \eqref{item21_diag_LMPautomata} we have $\pi_1'$ contains
			an anti-dominant cycle in $\Acal_{\fsf}^{\frakDp}$, i.e., \eqref{item27_diag_LMPautomata} holds.
			\end{itemize}

		\item\label{item23_diag_LMPautomata}
			Case $s\sqsubsetneq \bar s_1\dots \bar s_n$ and $\bar s_{n+1}=\ep$:
			
			Choose $m\in\llb 1,n\rrb$ such that $\bar s_1\dots \bar s_{m-1}\sqsubset
			s\sqsubsetneq \bar s_1\dots \bar s_m$.

			\begin{itemize}
				\item Assume $\bar s_1\dots \bar s_{m-1}=s$. 

					If there exists $l\in\llb m,n\rrb$ such that $\WEI(\bar q_{l-1}\xrightarrow[]{\bar s_l} \bar q
					_l)=\WEI(\hat q_{l-1}\xrightarrow[]{\hat s_l} \hat q_l)$ is sufficiently large, then $\bar q_
					{l-1}\xrightarrow[]{\bar s_l} \bar q_l$ contains a dominant cycle in $\Acal_{\fsf}^{\frakDp}$,
					$\hat q_{l-1}\xrightarrow[]{\hat s_l} \hat q_l$ contains a dominant cycle in 
					$\Acal_{\nsf}^{\frakDp}$, then 
					\eqref{item7_diag_LMPautomata} holds. Otherwise if such $l$ does not exist, then 	
        			the weights of $\bar q_{m-1}\xrightarrow[] {\bar s_m}\bar q_m$, $\dots$, $\bar q_{n-1}
        			\xrightarrow[] {\bar s_n}\bar q_n$ cannot be arbitrarily large. Because
        			$t$ is arbitrarily large and $t\prec\WEI(\bar q_{m-1}\xrightarrow[] {\bar s_m}\cdots\xrightarrow[] 
					{\bar s_n}\bar q_n)$, by using the argument as in \eqref{item21_diag_LMPautomata},
					in $(\bar q_{m-1},\hat q_{m-1})\xrightarrow[]{(\last(\bar s_m),
        			\last(\hat s_m))} \cdots \xrightarrow[]{(\last(\bar s_n),\last(\hat s_n))}(\bar q_n,\hat q_n)$
					there exists a dominant cycle consisting of observable transitions.
					Hence \eqref{item6_diag_LMPautomata} holds.
        		\item Assume $\bar s_1\dots \bar s_{m-1}\sqsubsetneq s$.
        			From $\bar q_{m-1}\xrightarrow[]{\bar s_m} \bar q_m$ and $\hat q_{m-1}\xrightarrow[]{\hat s_m} \hat q_m$,
        			by Proposition~\ref{prop1_diag_LMPautomata},
        			we obtain an unobservable sequence $(\bar q_{m-1},\hat q_{m-1})\xrightarrow[]{s_m^1}\bar v
					\xrightarrow[]{s_m^2}(\bar q_m',\hat q_m')=:\bar \pi$, where
        			$s_m^1(L)$ ($\ne\ep$) is a suffix of $s$, $s_m^1(R)=\ep$, and
        			\begin{align*}
						&\bar \pi(L)\xrightarrow[]{\last(\bar s_m)}\bar q_m = \bar q_{m-1}\xrightarrow[]{\bar s_m}\bar q_m,\\
        				&\bar \pi(R)\xrightarrow[]{\last(\hat s_m)}\hat q_m=\hat q_{m-1}\xrightarrow[]{\hat s_m}\hat q_m.\\
						&\text{(the path before $=$ is the same as the one after $=$)}
        			\end{align*}
					If there exist dominant cycles in both components of $\bar v\xrightarrow[]{s_m^2}(\bar 
					q_m',\hat q_m')$, \eqref{item8_diag_LMPautomata} holds; if there exists a dominant cycle
					in $\bar v(L)\xrightarrow[]{s_m^2(L)} \bar q_m'$ and there exists an anti-dominant cycle
					in $\bar q_{m-1}\xrightarrow[]{s_m^1(L)}\bar v(L)$, \eqref{item27_diag_LMPautomata} holds;
					otherwise, 
					also by argument similar as above, either \eqref{item6_diag_LMPautomata} or 
        			\eqref{item7_diag_LMPautomata} holds.
        	\end{itemize}

		\item\label{item24_diag_LMPautomata}
			Case $s\sqsubsetneq \bar s_1\dots \bar s_n$ and $\bar s_{n+1}\ne\ep$:

			Using argument similar as above, one also has at least one of \eqref{item6_diag_LMPautomata},
			\eqref{item7_diag_LMPautomata}, \eqref{item8_diag_LMPautomata}, and \eqref{item27_diag_LMPautomata}
			holds. This finishes the proof.
		\end{enumerate}

\end{document}